\newcommand\What[1]{%
\savestack{\tmpbox}{\stretchto{%
  \scaleto{%
    \scalerel*[\widthof{\ensuremath{#1}}]{\kern-.6pt\bigwedge\kern-.6pt}%
    {\rule[-\textheight/2]{1ex}{\textheight}}%WIDTH-LIMITED BIG WEDGE
  }{\textheight}% 
}{0.5ex}}%
\stackon[1pt]{#1}{\tmpbox}%
}
\newcommand{\entry}[2]{#1 {\footnotesize (#2)}}
\newcounter{question}
\def\sym#1{\ifmmode^{#1}\else\(^{#1}\)\fi}
\newcommand{\eps}{\varepsilon}
\newcommand{\mA}{\mathcal{A}}
\newcommand{\mI}{\mathcal{I}}
\newcommand{\mK}{\mathcal{K}}
\newcommand{\mN}{\mathcal{N}}
\newcommand{\mR}{\mathcal{R}}
\newcommand{\mT}{\mathcal{T}}
\newcommand{\mV}{\mathcal{V}}
\newcommand{\mW}{\mathcal{W}}
\newcommand{\bA}{\mathbb{A}}
\newcommand{\bE}{\mathbb{E}}
\newcommand{\bM}{\mathbb{M}}
\newcommand{\bN}{\mathbb{N}}
\newcommand{\bR}{\mathbb{R}}
\newcommand{\bS}{\mathbb{S}}
\newcommand{\bX}{\mathbb{X}}
\newcommand{\argmin}{\operatornamewithlimits{argmin}}
\newcommand{\argmax}{\operatornamewithlimits{argmax}}
\newtheorem{theorem}{Theorem}
\newtheorem{assumption}{Assumption}
\newtheorem{lemma}{Lemma}
\newtheorem{definition}{Definition}
\newlist{assumptionitems}{enumerate}{1} 
\setlist[assumptionitems]{label=(\roman*), ref=\theassumption(\roman*), align=left, leftmargin=*}
\Crefname{assumption}{Assumption}{Assumptions}
\crefname{assumptionitemsi}{assumption}{assumptions}
\newlist{lemmaitems}{enumerate}{1} 
\setlist[lemmaitems]{label=(\roman*), ref=\thelemma(\roman*), align=left, leftmargin=*}
\Crefname{lemma}{Lemma}{Lemmas}
\crefname{lemmaitemsi}{lemma}{lemmas}
\newlist{propositionitems}{enumerate}{1} 
\setlist[propositionitems]{label=(\roman*), ref=\theproposition(\roman*), align=left, leftmargin=*}
\Crefname{proposition}{Proposition}{Propositions}
\crefname{propositionitemsi}{proposition}{propositions}
\newlist{theoremitems}{enumerate}{1} 
\setlist[theoremitems]{label=(\roman*), ref=\thetheorem(\roman*), align=left, leftmargin=*}
\Crefname{theorem}{Theorem}{Theorems}
\crefname{theoremitemsi}{theorem}{theorems}
\newlist{definitionitems}{enumerate}{1} 
\setlist[definitionitems]{label=(\roman*), ref=\thedefinition(\roman*), align=left, leftmargin=*}
\Crefname{definition}{Definition}{Definitions}
\crefname{definitionitemsi}{definition}{definitions}
\crefname{appendix}{appendix}{appendices}
\Crefname{appendix}{Appendix}{Appendices}
\crefname{algocf}{algorithm}{algorithms}
\Crefname{algocf}{Algorithm}{Algorithms}
\theoremstyle{definition}
\newtheorem{remark}{Remark}
\newtheorem{Algorithm}{Algorithm}
\newcolumntype{L}[1]{>{\raggedright\let\newline\\arraybackslash\hspace{0pt}}m{#1}}
\newcolumntype{C}[1]{>{\centering\let\newline\\arraybackslash\hspace{0pt}}m{#1}}
\newcolumntype{R}[1]{>{\raggedleft\let\newline\\arraybackslash\hspace{0pt}}m{#1}}
\newcommand{\continuation}{??}
\begin{document}

\begin{titlepage}
\title{Model-Adaptive Approach to Dynamic Discrete Choice Models with Large State Spaces\thanks{I would like to thank Lars Nesheim, Dennis Kristensen, and Aureo de Paula for their continuous guidance and support. I am grateful to Karun Adusumilli, Victor Aguirregabiria, Jason Blevins, Austin Brown, Cuicui Chen, Tim Christensen, Ben Deaner, Hugo Freeman, Joao Granja, Jiaying Gu, Yao Luo, Angelo Melino, Bob Miller, Matthew Osborne, Martin Pesendorfer, John Rust, Eduardo Souza-Rodrigues, and Ao Wang for helpful discussions and comments. I also thank seminar participants at UCL, LSE, IFS/LSE/UCL IO workshop, University of Bristol, University of Toronto, and conference audiences at 2024 Bristol Econometric study group, 2024 Midwest Econometrics Group Conference, New York Camp Econometrics XIX, and 2025 International Industrial Organization Conference. The use of Kantar data does not imply the endorsement of Kantar in relation to the interpretation or analysis of the data. All errors are on my own.}}
\author{Ertian Chen\thanks{UCL, CeMMAP and IFS. Email: \href{mailto:ertian.chen.19@ucl.ac.uk}{ertian.chen.19@ucl.ac.uk}.}}
\date{March 15, 2026}
\maketitle
\begin{abstract}
\noindent Estimation and counterfactual experiments in dynamic discrete choice models with large state spaces pose computational difficulties. This paper proposes a model-adaptive approach, based on the conjugate gradient (CG) method, to solve the linear system of fixed point equations of the policy valuation operator. We propose a \emph{model-adaptive} sieve space, constructed by iteratively augmenting the space with the residual from the previous iteration. We show both theoretically and numerically that model-adaptive sieves dramatically improve performance. In particular, the approximation error decays at a superlinear rate in the sieve dimension, unlike a linear rate achieved using successive approximation. Our method works for both conditional choice probability estimators and full-solution estimators with policy iteration or Newton--Kantorovich iterations. We apply the method to analyze consumer demand for laundry detergent using Kantar's Worldpanel Take Home data. On average, our method is 80\% faster than successive approximation and the exact equation solver in solving the dynamic programming problem, substantially reducing the computational cost of the Bayesian MCMC estimator. 
\vspace{0.4cm} \\
\noindent \textbf{Keywords:} Dynamic discrete choice; Adaptive sieve; Policy iteration; Demand for storable goods.
\vspace{0.4cm} \\
\noindent \textbf{JEL codes:} C61, C63, C25, L66.
\bigskip
\end{abstract}
\setcounter{page}{1}
\end{titlepage}
\pagebreak \newpage
\setcounter{page}{2}

\numberwithin{equation}{section}

\onehalfspacing
\section{Introduction} \label{sec: Introduction}

\noindent Dynamic Discrete Choice (DDC) models with large state spaces have become increasingly popular due to their ability to capture decision-making processes in complex high-dimensional settings. However, estimation and counterfactual experiments in these settings pose significant computational challenges. This paper proposes a model-adaptive approach, based on the conjugate gradient (CG) method, to solve the linear system of fixed point equations of the policy valuation operator. Our goal is to provide a fast and easily implementable method to solve the equations within a pre-specified tolerance. As the policy valuation step is fundamental to Conditional Choice Probability (CCP) estimators, full-solution estimators and counterfactual experiments with policy iteration or Newton--Kantorovich iterations, our approach offers a useful numerical tool across various empirical applications, expanding the set of complex high-dimensional settings in which DDC models can be used.

The policy valuation operator, as described in \cite{aguirregabiria2002swapping}, involves solving for the value function implied by an arbitrary policy function, which may not necessarily be optimal. This value function represents the expected discounted utility if an individual behaves according to that policy function. CCP estimators (e.g., \cite{hotz1993conditional}, \cite{aguirregabiria2002swapping,aguirregabiria2007sequential}, \cite{pesendorfer2008asymptotic}, and \cite{arcidiacono2011conditional}) use the policy valuation operator to solve for the value function given a consistent estimator of CCPs. If policy iteration or Newton--Kantorovich iterations are used for full solution estimators or counterfactual experiments, each iteration employs the policy valuation operator. However, in models with large state spaces, policy valuation remains computationally demanding and requires efficient numerical methods. The accuracy of these methods is crucial for obtaining reliable estimates. \cite{dube2012improving} shows that loose tolerance thresholds can lead to bias in parameter estimates. Therefore, there is a clear need for fast and accurate numerical methods for DDC models with large state spaces.

This paper proposes a model-adaptive (MA) approach to solve the linear system of fixed point equations of the policy valuation operator. The primary goal is to achieve a pre-specified tolerance while significantly reducing computational costs, enabling the use of policy valuation tools in a wide range of empirical applications. We call our approach model-adaptive as it designs the sieve space based on the model primitives (such as the transition density and utility function), and the algorithm itself selects the sieve dimension. At each step, our approach augments the sieve space with the residual from the previous iteration and projects the value function onto the augmented sieve space. It achieves a faster decay rate of the approximation error than conventional methods, such as Successive Approximation (SA), which iterates the contraction mapping, and Temporal Difference (TD), which projects the value function onto a pre-specified sieve space. Formally, we show that the approximation error decays at a superlinear rate in the sieve dimension (number of iterations), while SA achieves a linear rate and TD achieves only a sublinear rate. Furthermore, the sieve space and its dimension are automatically constructed by the algorithm, eliminating the need for researchers to design the sieve space and choose its dimension. Consequently, our method is easy to implement and converges faster than conventional methods, offering substantial computational savings.

The main computational cost depends on the number of iterations and matrix-vector multiplication operations. Our approach attains superlinear convergence on the approximation error, which substantially reduces the number of iterations required to reach a desired level of tolerance. Formally, our approach achieves an approximation error upper bound of $O((\frac{C_{1}}{\sqrt{k}})^{k})$, where $k$ is the number of iterations and $C_{1}$ is a constant. Furthermore, the bound can be improved to $O((\frac{C_{2}}{k})^{k})$ if the transition density has continuous partial derivatives. Notably, only the constants in the upper bound depend on the discount factor and properties of the transition density (such as its boundedness and smoothness). Unlike SA, whose convergence rate is governed by the contraction modulus $\beta$ and slows significantly as $\beta \to 1$, our method's superlinear convergence does not rely on the contraction property. Therefore, our method is well-suited for models with large state spaces and large discount factors, such as dynamic consumer demand models (e.g., \cite{hendel2006measuring} and \cite{wang2015impact}). Moreover, it works particularly well for models with smooth transition densities, such as autoregressive (AR) processes, which are commonly used in empirical applications (e.g., \cite{sweeting2013dynamic}, \cite{huang2014dynamic}, \cite{kalouptsidi2014time}, \cite{grieco2022input}, and \cite{gerarden2023demanding}).

We provide implementations for both discrete and continuous state spaces. For discrete state spaces, the implementation requires only matrix operations as the system of equations takes the form of a finite-dimensional linear system. For continuous state spaces, we employ numerical integration to approximate the integral of the policy valuation step. Thus, there is a trade-off between simulation error and computational cost: increasing the number of grid points reduces simulation error at the cost of solving a larger linear system within a given tolerance. The computational cost of matrix-vector multiplication operations increases with the number of grid points. Therefore, we analyze the impact of the number of grid points on the number of iterations required to achieve a desired tolerance. We show that the number of iterations required for convergence remains approximately the same for all sufficiently large numbers of grid points. As a result, we can expect the number of iterations to be independent of the number of grid points for numerical integration. Therefore, our method allows researchers to reduce simulation error by increasing the number of grid points up to the computational limits of matrix-vector multiplication. Fast matrix-vector multiplication algorithms can further accelerate the computation (e.g., \cite{rokhlin1985rapid}, \cite{greengard1987fast}, and \cite{hackbusch1989fast}). In addition, matrix-vector multiplication is amenable to GPU acceleration, which can further reduce the computational cost.

We illustrate the performance of our approach using three numerical experiments. We first simulate the bus engine replacement problem to visualize the convergence behavior of our method. The plot of our approximation solution shows that the method uses a few iterations to find a good sieve space. After that, it converges rapidly to the true solution. 

Second, we analyze a model for dynamic consumer demand for storable goods similar to \cite{hendel2006measuring}. We compare policy iteration and Newton--Kantorovich outer iterations combined with different inner solvers (MA, SA, and an exact solver), along with VFI benchmarks. We demonstrate that PI+MA is 80\% faster than conventional methods such as SA and an exact equation solver. It is also more than 90 times faster than one-step value function iteration methods. These substantial computational savings open the door to the use of Bayesian MCMC estimators (\cite{chernozhukov2003mcmc}), which are well-suited as the likelihood function is not differentiable in utility parameters.

Third, we examine a dynamic firm entry and exit problem in \cite{aguirregabiria2023solution}. We solve the dynamic programming problem by policy iteration using MA to solve the linear system of fixed point equations. We vary the discount factor and number of grid points for numerical integration to evaluate the performance. The computational times confirm that our method improves the computational efficiency of policy iteration. The results show that the numbers of iterations required for convergence are approximately the same regardless of the numbers of grid points. Moreover, the number of iterations only slightly increases as the discount factor approaches one. Finally, we compare our method with TD and SA. The simulation results show that our method outperforms SA in terms of computational time and TD in terms of approximation error. 

We apply our method to a dynamic consumer demand model for laundry detergent using Kantar's Worldpanel Take Home data. For each household size, we separately estimate the dynamic parameters using the Bayesian MCMC estimator. At each MCMC step, we solve the dynamic programming problem by policy iteration with MA. The results confirm the computational efficiency of our method in practice. We also simulate the long-run elasticities, which reveal the heterogeneous substitution patterns across different household sizes.

\subsection{Related Literature}

\noindent Sieve approximation methods have been used to solve dynamic programming problems (e.g., \cite{norets2012estimation}, \cite{arcidiacono2013approximating}, and \cite{wang2015impact}). They have been widely used to solve the linear equations in the policy valuation step. Applications include \cite{hendel2006measuring}, \cite{sweeting2013dynamic}, and \cite{bodere2023dynamic}. Recent work approximates the solution to the linear equation by temporal difference (see \cite{adusumilli2019temporal}). However, those methods require researchers both to design the space of basis functions (e.g., polynomials, splines, or neural networks) and choose the sieve dimension (e.g., the degree of polynomials, the number of knots, and the number of hidden layers). The best choice for each application is almost always unclear. In contrast, MA constructs a model-adaptive sieve space using the algorithm itself to design that space. The sieve dimension (i.e., the number of iterations) is also determined by the algorithm. The method is guaranteed to achieve the pre-specified tolerance. Finally, we show that the approximation error decays at a superlinear rate in the sieve dimension. Other methods like TD achieve only a sublinear rate.

Successive Approximation (see \cite{kress_linear_2014}), also known as fixed point iteration (\cite{judd1998numerical}), is an iterative method to solve the fixed point equations of the policy valuation operator. The computational cost depends on the number of iterations and the number of matrix-vector multiplication operations. The convergence of SA relies on the $\beta$-contraction property of the transition operator $\mT$, where $\beta$ is the discount factor. Therefore, the number of iterations of SA increases significantly as the discount factor approaches one, making the methods computationally demanding. In contrast, the superlinear convergence of our method does not rely on the contraction property, making it particularly well-suited for models with large discount factors such as consumer demand models (e.g., \cite{hendel2006measuring} and \cite{wang2015impact}). Moreover, our method can outperform SA even for small discount factors as SA achieves linear convergence while our method achieves superlinear convergence.

For continuous state spaces, we employ numerical integration to approximate the integral of the policy valuation operator, which implicitly discretizes the state space. Discretization is widely used in economics (e.g., \cite{sweeting2013dynamic}, \cite{kalouptsidi2014time}, \cite{huang2015structural}, and \cite{bodere2023dynamic}). \cite{rust1997using,rust1997comparison} study the simulation error from numerical integration and assume the discretized equation can be solved exactly. However, there is a trade-off between simulation error and computational cost of solving the discretized equation. Increasing the number of grid points reduces the simulation error while increasing the size of the linear system to be solved; potentially making it computationally infeasible to solve the system exactly. Sieve approximation is used to solve the discretized equation (e.g., \cite{hendel2006measuring}, \cite{sweeting2013dynamic}, and \cite{bodere2023dynamic}). Instead, we propose to solve the discretized equation within a given tolerance while minimizing computational cost. The computational cost of MA depends on the number of iterations and the number of matrix-vector multiplication operations. We can expect that the number of iterations is small and independent of the number of grid points. Therefore, MA enables researchers to reduce simulation error up to the computational limits of matrix-vector multiplication.

\noindent \textbf{Outline:} The remainder of the paper is organized as follows. \Cref{sec: Model} reviews DDC models and the policy valuation operator. \Cref{sec: Model-Adaptive Approach} presents the model-adaptive approach, its implementation and computational cost. \Cref{section: Properties of Model-Adaptive Approach} describes the theoretical properties. \Cref{sec: Simulation} reports results from three numerical experiments. \Cref{sec: Application} applies our method to a consumer demand for storable goods. \Cref{sec: Conclusion} concludes. The proofs are in Appendix \ref{sec: Appendix_proof}. Appendix \ref{sec: online_appendix} contains details of algorithms of simulations and empirical application.

\noindent \textbf{Notation:} Let $\bX$ be the support of $x$, and $\bA := \{0,1,\cdots, A-1\}$. $\bX$ can be discrete or continuous. For a probability distribution $\mu$ on $\bX$, which in the continuous case is assumed to be absolutely continuous with respect to Lebesgue measure, let $L^{2}(\bX, \mu)$ denote the space of square-integrable functions on $\bX$. Let $\langle \cdot,\cdot \rangle_{\mu}$ and $\|\cdot\|_{\mu}$ denote the inner product and norm induced by $\mu$. Let $\nu_{Leb}$ denote the Lebesgue measure, and let $\langle \cdot,\cdot \rangle$, $\| \cdot \|$, and $L^{2}(\bX)$ denote the inner product, norm, and $L^{2}$-space, respectively. We suppress $\nu_{Leb}$ for notational simplicity. Let $\|\cdot\|_{\infty}$ denote the sup-norm of a vector. For a linear operator, $\mA: L^{2}(\bX) \longmapsto L^{2}(\bX)$, denote by $\|\mA\|_{op} := \sup_{h \in L^{2}(\bX), \|h\|=1} \|\mA h\|$ the operator norm. Let $\mI$ be the identity operator.

\section{Model} \label{sec: Model}

\subsection{Framework}

\noindent We study infinite horizon stationary dynamic discrete choice models as in \cite{rust1994structural}. In each discrete period $t = 0, 1, 2, \ldots$, an individual chooses $a_{t} \in \bA$ to maximize her discounted expected utility:
\begin{equation*}
    \bE\left[ \sum_{t=0}^{\infty} \beta^{t} \left[ u(x_{t},a_{t}) + \eps_{t}(a_{t}) \right] \bigg| x_{0}, \eps_{0} \right]
\end{equation*}
where $\beta \in (0,1)$ is the discount factor, $u(x_{t},a_{t})$ is the period utility, $x_{t} \in \bX$ is the observable state (to researchers) that follows a first-order Markov process with a transition density $f(x_{t+1} | x_{t}, a_{t})$, $\eps_{t} \in \mathbb{R}^A$ is a vector of unobservable i.i.d. type I extreme value shocks with Lebesgue density $g(\eps_{t})$, and $\eps_t(a_t)$ is the element of $\eps_t$ corresponding to $a_t$. 

Under regularity conditions (see \cite{rust1994structural}), the utility maximization problem has a solution and the optimal value function $V_{opt}(x,\eps)$ is the unique solution to the \textit{Bellman equation}:
\begin{equation*}
    V_{opt}(x,\eps) = \max_{a \in \bA} \biggl\{ u(x,a) + \eps(a) + \beta \int V_{opt}(x',\eps') f(x'|x,a)g(\eps') dx'd\eps' \biggr\} 
\end{equation*}
where $(x',\eps')$ denotes the next period's state and utility shock. Under the i.i.d. assumption on utility shocks, integrating the utility shocks out, the \textit{integrated Bellman equation} has the following form:
\begin{equation*}
    V_{opt}(x) = \int \max_{a \in \bA} \biggl\{ v^{*}(x,a) + \eps(a) \biggr\} g(\eps) d\eps
\end{equation*}
where $V_{opt}(x)$ is the integrated value function and $v^{*}(x,a)$ is the \textit{conditional value function} defined as:
\begin{equation*}
    v^{*}(x,a) := u(x,a) + \beta \int V_{opt}(x') f(x'|x,a) dx'
\end{equation*}

The \textit{conditional choice probability} (CCP) is the probability that action $a$ is optimal conditional on observable state $x$ defined by:
\begin{equation*}
    p^{*}(a|x) := \int \mathbbm{1}\left\{ a = \argmax_{a \in \bA} \{ v^{*}(x,a) + \eps(a) \} \right\} g(\eps) d\eps
\end{equation*}
where $\mathbbm{1}\{\cdot\}$ is the indicator function. Under the distributional assumption on utility shocks, CCPs take the following form:
\begin{equation*}
    p^{*}(a|x) = \frac{\exp(v^{*}(x,a))}{\sum_{a}\exp(v^{*}(x,a))}
\end{equation*}

The following map is derived in \cite{arcidiacono2011conditional} as a corollary of \textit{Hotz--Miller Inversion Lemma} by \cite{hotz1993conditional}, $\forall \ (x,a) \in \bX \times \bA$:
\begin{equation} \label{eq:inversion_lemma}
    V_{opt}(x) = v^{*}(x,a) - \log p^{*}(a|x) + \kappa
\end{equation}
where $\kappa$ is the Euler constant. The equation \eqref{eq:inversion_lemma} establishes a crucial link between the integrated value function $V_{opt}(x)$, conditional value function $v^{*}(x,a)$, and conditional choice probabilities $p^{*}(a|x)$. It provides a powerful tool for the policy valuation step, which is essential for both policy iteration and CCP estimators.

\subsection{Policy Valuation Operator}

\noindent The \textit{policy valuation operator} (see \cite{aguirregabiria2002swapping}) maps an arbitrary policy function to the value function using \eqref{eq:inversion_lemma}. The value function represents the expected discounted utility of an individual if she behaves today and in the future according to that policy, which is not necessarily optimal. \cite{aguirregabiria2002swapping} shows that the value function is obtained by solving the following equation for $V$ given a policy function $p$:
\begin{equation} \label{policy valuation map}
    V(x) = \sum_{a}p(a|x)\left[u(x,a) + \kappa - \log p(a|x) + \beta \bE_{x'|x,a} V(x') \right]
\end{equation}
In this paper, we focus on solving \eqref{policy valuation map} for a given policy function. Therefore, we introduce the following notations and suppress the dependence on $p$:
\begin{definition} For a given policy function $p$, define:
    \begin{enumerate}[label=(\roman*)]
        \item $f(x'|x) := \sum_{a} p(a|x) f(x'|x,a)$ and $\mT V(x) := \beta \int f(x'|x) V(x')dx'$.
        \item $u(x) = \sum_{a} p(a|x) u(x,a) + \kappa - \sum_{a} p(a|x) \log p(a|x)$.
    \end{enumerate} 
\end{definition}
Using this notation, we can rewrite \eqref{policy valuation map} as a linear system of fixed point equations:
\begin{equation} \label{policy valuation}
    (\mI - \mT)V = u
\end{equation}

Both CCP estimators and the policy iteration method (see \cite{howard1960dynamic}) require solving equation \eqref{policy valuation} for $V$. For CCP estimators, $p$ is replaced with its consistent estimator $\hat{p}$. For policy iteration, at iteration $i$, we solve for $V_{i}$ associated with $p_{i}$ from the previous iteration. Subsequently, the \textit{policy improvement step} updates the policy function as follows:
\begin{equation*}
    p_{i+1}(a|x) = \frac{\exp(v_{i}(x,a))}{\sum_{a}\exp(v_{i}(x,a))}
\end{equation*}
where $v_{i}(x,a) = u(x,a) + \beta \bE_{x'|x,a} V_{i}(x')$. This process iterates until convergence of the policy function is achieved.

\subsection{Newton--Kantorovich Iterations} \label{subsec: Newton-Kantorovich}

\noindent An alternative full-solution method is the Newton--Kantorovich (NK) iteration applied to the integrated Bellman equation, as considered in \cite{rust1987optimal}. Define the Bellman operator $\Gamma: \bR^{|\bX|} \to \bR^{|\bX|}$ by:
\begin{equation*}
    \Gamma(V)(x) := \log \sum_{a \in \bA} \exp\left( u(x,a) + \beta \bE_{x'|x,a} V(x') \right) + \kappa
\end{equation*}
The fixed point of $\Gamma$ is the integrated value function $V_{opt}$. The Jacobian of $\Gamma$ at $V_{k}$ is the matrix $\mT_{k}$ with entries:
\begin{equation*}
    \mT_{k}(x,x') := \beta \sum_{a \in \bA} p_{k}(a|x) f(x'|x,a)
\end{equation*}
where $p_{k}(a|x) := \frac{\exp(v_{k}(x,a))}{\sum_{a'}\exp(v_{k}(x,a'))}$ is the CCP induced by $V_{k}$. The NK step is:
\begin{equation} \label{Newton step}
    (\mI - \mT_{k}) d_{k} = \Gamma(V_{k}) - V_{k}, \qquad V_{k+1} = V_{k} + d_{k}
\end{equation}
Note that $\mT_{k}$ has the same structure as $\mT$ in \eqref{policy valuation}. In addition, the NK step coincides with the policy iteration step under the i.i.d. extreme value assumption. Therefore, \eqref{Newton step} is a linear system of the form $(\mI - \mT)V = u$, where the operator $\mT_{k}$ and right-hand side $\Gamma(V_{k}) - V_{k}$ change at each NK iteration. The MA approach developed in the next section applies directly to solving \eqref{Newton step}. While NK iterations enjoy quadratic convergence to the fixed point of $\Gamma$ (see \cite{rust1988maximum}), each NK step requires solving the linear system \eqref{Newton step}, which becomes computationally demanding for large state spaces. Our method provides an efficient solver for this inner linear system.

\begin{remark}
Throughout this paper, we distinguish two levels of convergence. The \textit{outer} convergence refers to the PI or NK iterations converging to the fixed point of the Bellman operator. The \textit{inner} convergence refers to the MA iterations solving the linear system at each outer step. The superlinear convergence results in this paper concern the inner MA iterations, not the outer PI or NK convergence.
\end{remark}
\section{Model-Adaptive Approach} \label{sec: Model-Adaptive Approach}

\noindent This section presents the model-adaptive (MA) approach, its implementation and computational cost. Our method employs the Conjugate Gradient (CG) method, an iterative approach for solving large linear systems. The CG method is commonly attributed to \cite{hestenes1952methods}. For comprehensive textbooks, see \cite{kelley1995iterative} and \cite{han2009theoretical}.

The CG was originally designed for solving linear systems with self-adjoint operators. However, the operator $\mT$ is not necessarily self-adjoint. If $\mT$ were self-adjoint with respect to the inner product space $\langle \cdot, \cdot \rangle_{\mu}$, then the Markov chain would be time-reversible, which is a strong assumption in many practical settings. Therefore, instead of solving \eqref{policy valuation} for $V$ directly, we propose to solve the following equation for $y$:
\begin{equation} \label{Self-adjoint equation}
    (\mI - \mT)(\mI - \mT^{*}) y = u
\end{equation}
and set $V = (\mI - \mT^{*})y$ to solve \eqref{policy valuation}, where $\mT^{*}$ is the adjoint operator of $\mT$.\footnote{The adjoint operator is similar to matrix transpose in the finite-dimensional case. For formal definition, see for example \cite{han2009theoretical} Chapter 2.6. For the specific adjoint operator used in our model-adaptive approach see \Cref{Definition: Model-adaptive Approach} below.} 

For discrete state spaces, \eqref{Self-adjoint equation} boils down to a finite-dimensional linear system. For continuous state spaces, we will use deterministic numerical integration to approximate the integral in \eqref{Self-adjoint equation}. 

The adjoint operator $\mT^{*}$ is defined with respect to an inner product space. The choice of the inner product does affect the convergence rate of the approximation error. Nevertheless, approximation solutions on different spaces all converge superlinearly under regularity conditions. To achieve the fastest decay of the approximation error when using CG, we propose to solve \eqref{Self-adjoint equation} on $L^{2}(\bX)$ and define $\mT^{*}$ by the inner product $\langle \cdot, \cdot \rangle$. \Cref{proposition: tau_k decay rate} formally discusses the convergence rate. \Cref{theorem: unique solution on nu2} shows the existence and uniqueness of the solution to \eqref{Self-adjoint equation} on $L^{2}(\bX)$, denoted by $y^{*}$. Moreover, \Cref{theorem: same solution} proves $V^{*} = (\mI - \mT^{*}) y^{*}$ where $V^{*}$ is the solution to \eqref{policy valuation} on $L^{2}(\bX, \mu)$.

The key idea of the conjugate gradient method is to iteratively build a solution by searching along directions that are conjugate (i.e., orthogonal with respect to the operator $(\mI - \mT)(\mI - \mT^{*})$). At each iteration, the algorithm: (i) chooses a step size $\alpha_{k-1}$ that minimizes the residual along the current search direction $s_{k-1}$; (ii) computes the new residual $r_k$; and (iii) constructs the next search direction $s_k$ by combining the new residual with the previous direction, where the coefficient $\beta_{k-1}$ ensures conjugacy. The residuals $r_0, r_1, \ldots$ are mutually orthogonal by construction, and they form the basis of the model-adaptive sieve space. The \textit{model-adaptive approach} is as follows:
\begin{Algorithm}[Model-adaptive Approach] \label{Definition: Model-adaptive Approach} \

    \noindent \textbf{Input:} Operator $\mT$, adjoint $\mT^{*}$ where $\mT^{*} V(x) = \beta \int V(x') f(x|x') dx'$, right-hand side $u$, tolerance $tol$.

    \noindent \textbf{Initialize:} $y_{0} = 0$, $r_{0} = s_{0} = u$.\footnote{If an alternative initial guess $y_{0}$ is available, then $r_{0} = s_{0} = u - (\mI - \mT)(\mI - \mT^{*})y_{0}$.}

    \noindent \textbf{For} $k = 1, 2, \ldots$ \textbf{until} $\|r_{k}\| \leq tol$:
    \begin{equation} \label{Model-adaptive algorithm}
        \begin{array}{ll}
            \text{Step 1:} \quad y_{k} = y_{k-1} + \alpha_{k-1} s_{k-1} & \alpha_{k-1} = \frac{\left\|r_{k-1}\right\|^{2}}{\| (\mI - \mT^{*}) s_{k-1}\|^{2}} \\[6pt]
            \text{Step 2:} \quad r_{k} = u - (\mI - \mT)(\mI - \mT^{*}) y_{k} & \\[6pt]
            \text{Step 3:} \quad s_{k} = r_{k} + \beta_{k-1} s_{k-1} & \beta_{k-1} = \frac{\left\|r_{k}\right\|^{2}}{\left\|r_{k-1}\right\|^{2}}
        \end{array}
    \end{equation}

    \noindent \textbf{Output:} $V^{ma}_{k} := (\mI - \mT^{*}) y_{k}$.
\end{Algorithm}
The core idea behind our approach is, at each iteration, to augment the sieve space with the residual from the previous iteration. By construction, the updates are orthogonal to previous updates. And, it is easy to show that $y_{k} \in \text{span}\{r_{0}, r_{1}, \cdots, r_{k-1}\}$ where $\{r_{i}\}_{i \leq k-1}$ is the sequence of residuals produced by previous iterations.\footnote{The space is also called the Krylov subspace. The Krylov subspace of order $k$ generated by a matrix $\mA$ and vector $b$ is $\mathcal{K}_{k}(\mA, b) := \text{span}\{b, \mA b, \mA^{2} b, \ldots, \mA^{k-1} b\}$ (See \cite{han2009theoretical} Page 251).} In other words, $\left\{r_{i} \right\}_{i\leq k-1}$ is the model-adaptive sieve space after iteration $k$. Thus, after $k$ iterations the sieve dimension equals $k$, and the number of iterations directly determines the quality of the approximation. In \Cref{theorem: MA minimizes MSE}, we show that $y_{k}$ minimizes $\|(\mI - \mT^{*})y - V^{*}\|$ over the model-adaptive sieve space, and \Cref{theorem: Convergence of Model-adaptive approach} shows the superlinear convergence of the approximation error. The next section discusses the implementation and computational cost of our method.

\subsection{Implementation}

\noindent \textbf{Discrete State Spaces:} For discrete state spaces, \eqref{Self-adjoint equation} reduces to a finite-dimensional linear system as:
\begin{equation*}
    (\mI - \mT)(\mI - \mT^{T}) y = u
\end{equation*}
where $\mI$ is the identity matrix and $\mT$ is the discounted transition matrix. Therefore, \eqref{Model-adaptive algorithm} only involves matrix-vector multiplications. The algorithm is as follows\footnote{We refer to \cite{judd1998numerical} for other iterative methods such as Gauss--Jacobi and Gauss--Seidel algorithms.}:
\begin{Algorithm}[Model-adaptive Approach for Discrete State Spaces] \ \label{algorithm: MA method discrete}
    \begin{itemize}
        \item \textit{Step 1:} Given $f(x'|x)$, generate the matrix: $(\mI - \mT)$.
        \item \textit{Step 2:} Generate the matrix: $(\mI - \mT^{T})$.
        \item \textit{Step 3:} Given a tolerance, iterate algorithm \eqref{Model-adaptive algorithm} until convergence.
    \end{itemize}
\end{Algorithm}

\noindent \textbf{Continuous State Spaces:} For continuous state spaces, our method has to use numerical integration. We propose to use a deterministic numerical integration rule such as a Quasi-Monte Carlo rule to approximate the integral in \eqref{Model-adaptive algorithm}. Let $\bM := \{x_{1},\cdots,x_{M} \}$ be the set of deterministic grid points used to approximate the integral. Note that implementing \eqref{Model-adaptive algorithm} on $\bM$ (with the transition density normalized) implicitly solves the following equation:
\begin{equation} \label{Model-adaptive equation}
    (\mI_{M} - \hat{\mT}_{M})(\mI_{M} - \hat{\mT}_{M}^{T}) y_{M} = u_{M}
\end{equation}
where $\hat{\mT}_{M}$ is the matrix whose $(i, j)$-th element is $\beta \tilde{f}(x_{j}|x_{i})$, $\tilde{f}(x_{i}|x) := \frac{f(x_{i}|x)}{\sum_{j}f(x_{j}|x)}$ is the normalized transition density assuming the denominator is non-zero, $u_{M}$ is an $M$-dimensional vector with $u_{i} = u(x_{i})$ and $\mI_{M}$ is an $M \times M$ identity matrix. Note that after solving \eqref{Model-adaptive equation} at iteration $\hat{k}_{M}$ for all $x \in \bM$, we can interpolate $\widehat{V}_{\widehat{k}_M}^{ma}(x)$ for $x \in \bX \backslash \bM $ using:
\begin{equation} \label{computation outside support}
    \widehat{V}_{\widehat{k}_M}^{ma}(x) := u(x) + \beta \sum_{x_{i} \in \bM} \tilde{f}(x_{i}|x) \widehat{V}^{ma}_{\widehat{k}_{M}}(x_{i})
\end{equation}
where $\widehat{V}^{ma}_{\widehat{k}_{M}}(x_{i}) := [(\mI_{M} - \hat{\mT}_{M}^{T}) y_{\widehat{k}_{M}}]_{i}$ and $y_{\widehat{k}_{M}}$ is the approximate solution to \eqref{Model-adaptive equation}.

The continuous state-space algorithm is as follows:
\begin{Algorithm}[Model-adaptive Approach for Continuous State Spaces] \ \label{algorithm: MA method continuous}
    \begin{itemize}
        \item \textit{Step 1:} Given $f(x'|x)$ and $\bM = \{x_{1},\cdots,x_{M} \}$, generate the matrix $\hat{\mT}_{M}$ whose $(i, j)$-th element is $\beta \tilde{f}(x_{j}|x_{i})$ and $u_{M}$ an $M$-dimensional vector with $u_{i} = u(x_{i})$.
        \item \textit{Step 2:} Generate the matrix $(\mI_{M} - \hat{\mT}_{M})$ where $\mI_{M}$ is an $M \times M$ identity matrix.
        \item \textit{Step 3:} Generate the matrix $(\mI_{M} - \hat{\mT}_{M}^{T})$.
        \item \textit{Step 4:} For a given tolerance, iterate \eqref{Model-adaptive algorithm} until convergence.
        \item \textit{Step 5:} Compute $\widehat{V}_{\widehat{k}_M}^{ma}(x)$ for $ x \in \bX \backslash \bM$ by \eqref{computation outside support}.
    \end{itemize}
\end{Algorithm}

\subsection{Computational Cost} \label{subsection: Computational Cost}

\noindent For discrete state spaces, the total computational cost of our method is $O(\hat{k}|\bX|^{2})$ where $\hat{k}$ is the number of iterations required for convergence and $O(|\bX|^{2})$ is the cost of matrix-vector multiplication. As written in \eqref{Model-adaptive algorithm}, each iteration requires three matrix-vector multiplications: two for $(\mI - \mT)(\mI - \mT^{T}) y_{k}$ and one for $(\mI - \mT^{T})s_{k}$. In practice, the cost can be reduced to two matrix-vector multiplications per iteration by using the recurrence $r_{k} = r_{k-1} - \alpha_{k-1} (\mI - \mT)(\mI - \mT^{T}) s_{k-1}$ to update the residual, since $(\mI - \mT^{T})s_{k-1}$ is already computed for $\alpha_{k-1}$ and one additional application of $(\mI - \mT)$ yields $(\mI - \mT)(\mI - \mT^{T}) s_{k-1}$. Due to the superlinear convergence, the number of iterations is expected to be small.

For continuous state spaces, the total computational cost is $O(\hat{k}_{M}M^{2})$ where $\hat{k}_{M}$ can vary with $M$. As discussed before, there is a trade-off between simulation error and computational time. A large $M$ leads to a small simulation error but a higher computational cost of solving the equation within the same tolerance. For different $M$, the cost of matrix-vector multiplication is determined by $O(M^{2})$. The algorithm still converges superlinearly as shown in \Cref{corollary: MA convergence2}. Therefore, $\hat{k}_{M}$ is still expected to be small. Moreover, we will show that $\hat{k}_{M}$ is approximately the same for all sufficiently large $M$, which suggests that $\hat{k}_{M}$ is independent of $M$. Therefore, increasing $M$ primarily affects the computational cost through matrix-vector multiplication rather than $\hat{k}_{M}$. This property offers a significant computational advantage as it primarily relies on matrix-vector multiplication that is amenable to GPU acceleration and fast matrix-vector multiplication methods mentioned in the introduction.

\section{Theoretical Properties} \label{section: Properties of Model-Adaptive Approach}

\noindent This section discusses the theoretical properties of the model-adaptive approach. We will show the superlinear convergence of MA. We compare MA with TD and SA. For continuous state spaces, we will consider the simulation error from the numerical integration and prove the number of iterations is approximately the same for all sufficiently large numbers of grid points. We impose the following regularity conditions:
\begin{assumption} \label{assumption: stationary distribution} 
    For some positive constants $C_{\mu,1}$, $C_{\mu,2}$, $C_{u}$, assume:
    \begin{assumptionitems}
        \item $\bX$ is discrete or $\bX = [0,1]^{d}$. \label{assumption: discrete or unit cube}
        \item The Markov Chain $f(x'|x)$ has a unique stationary distribution $\mu$. In the continuous state space case, this stationary measure is absolutely continuous with respect to Lebesgue measure. \label{assumption: stationary distribution of markov chain}
        \item $\sup_{x}|u(x)| \leq C_{u}$. \label{assumption: bounded u}
        \item In the discrete state space case, $\mu(x) \geq C_{\mu,1}$ $\forall \ x \in \bX$. In the continuous state space case, $C_{\mu,1} \leq d\mu(x) \leq C_{\mu,2}$ $\forall \ x \in \bX$ where $d\mu$ is the density of $\mu$. \label{assumption: bounded mu}
        \item If $\bX = [0,1]^{d}$, then $\sup_{x',x} f(x'|x) \leq C_{f}$ for a positive constant $C_{f}$. \label{assumption: HS norm}
    \end{assumptionitems}
\end{assumption}
Under \Cref{assumption: stationary distribution}, $\mT$ maps $L^{2}(\bX,\mu)$ to itself. Moreover, $\mT$ is a $\beta$-contraction with respect to $\|\cdot\|_{\mu}$ (see for example \cite{bertsekas2015dynamic}). Consequently, \eqref{policy valuation} has a unique solution on $L^{2}(\bX, \mu)$. \Cref{assumption: bounded u} ensures that the solution is uniformly bounded by $\frac{C_{u}}{1-\beta}$. To achieve the fastest convergence rate of the approximation error when using CG, we propose to solve \eqref{Self-adjoint equation} on $L^{2}(\bX)$. Assumptions \ref{assumption: bounded mu} and \ref{assumption: HS norm} are used to prove the existence and uniqueness of the solution to \eqref{policy valuation} on $L^{2}(\bX)$. Moreover, it $\mu$-almost surely equals the solution on $L^{2}(\bX,\mu)$, which are summarized in the following theorem:
\begin{theorem} \label{theorem: unique solution on nu}
    Under \Cref{assumption: stationary distribution}, we have:
    \begin{theoremitems}
        \item \label{theorem: unique solution on nu2} $(\mI - \mT)(\mI - \mT^{*}) y = u$ has a unique solution $y^{*}$ on $L^{2}(\bX)$.
        \item \label{theorem: same solution} $(\mI - \mT^{*})y^{*} = V^{*}$ ($\mu$-a.s.) where $V^{*}$ is the unique solution to \eqref{policy valuation} on $L^{2}(\bX, \mu)$.
    \end{theoremitems}
\end{theorem}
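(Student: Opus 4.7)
The plan is to prove part (i) via a Fredholm alternative argument on $L^{2}(\bX)$ and then deduce part (ii) from uniqueness of the policy valuation equation on $L^{2}(\bX,\mu)$ together with the continuous embedding $L^{2}(\bX)\hookrightarrow L^{2}(\bX,\mu)$ supplied by the two-sided density bound in \Cref{assumption: bounded mu}.

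For part (i), I would first establish that $\mT$ is compact on $L^{2}(\bX)$. In the discrete case, \Cref{assumption: bounded mu} forces $|\bX|<\infty$ (since otherwise $\sum_{x}\mu(x)$ diverges), so $\mT$ is just a finite matrix and compactness is automatic. In the continuous case $\bX=[0,1]^{d}$, \Cref{assumption: HS norm} yields
\[
\int_{\bX}\!\int_{\bX}|\beta f(x'|x)|^{2}\,dx'\,dx\;\leq\;\beta^{2}C_{f}^{2}\;<\;\infty,
\]
so $\mT$ is Hilbert--Schmidt, hence bounded and compact, on $L^{2}(\bX)$. The same is true for $\mT^{*}$. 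Next I would show $\mI-\mT$ is injective on $L^{2}(\bX)$: if $V\in L^{2}(\bX)$ satisfies $V=\mT V$, the upper bound $d\mu\leq C_{\mu,2}$ gives $\|V\|_{\mu}\leq\sqrt{C_{\mu,2}}\,\|V\|<\infty$, so $V\in L^{2}(\bX,\mu)$ and still satisfies $V=\mT V$ there; since $\mT$ is a $\beta$-contraction on $(L^{2}(\bX,\mu),\|\cdot\|_{\mu})$, this forces $V=0$ $\mu$-a.s., and the lower bound $d\mu\geq C_{\mu,1}>0$ upgrades this to $V=0$ in $L^{2}(\bX)$. The Fredholm alternative for compact perturbations of the identity then gives that $\mI-\mT$ is invertible. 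For $\mI-\mT^{*}$, the standard identity $\ker(\mI-\mT^{*})=\mathrm{range}(\mI-\mT)^{\perp}$ combined with the surjectivity of $\mI-\mT$ yields injectivity of $\mI-\mT^{*}$, and the Fredholm alternative again delivers invertibility. Composing, $(\mI-\mT)(\mI-\mT^{*})$ is invertible on $L^{2}(\bX)$, producing the unique solution $y^{*}$.

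For part (ii), I would set $W:=(\mI-\mT^{*})y^{*}\in L^{2}(\bX)$. By construction $(\mI-\mT)W=(\mI-\mT)(\mI-\mT^{*})y^{*}=u$. The embedding bound $\|W\|_{\mu}\leq\sqrt{C_{\mu,2}}\,\|W\|$ places $W$ in $L^{2}(\bX,\mu)$, where it satisfies the policy valuation equation; since $\mT$ is a $\beta$-contraction on $L^{2}(\bX,\mu)$, this equation has the unique solution $V^{*}$, so $W=V^{*}$ in $L^{2}(\bX,\mu)$, i.e.\ $\mu$-a.s. In the discrete case the two Hilbert spaces coincide as finite-dimensional vector spaces, so the equality is pointwise.

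The main obstacle is the injectivity step for $\mI-\mT$ on $L^{2}(\bX)$: one must carefully leverage the two-sided density bound $C_{\mu,1}\leq d\mu\leq C_{\mu,2}$ to pass back and forth between the Lebesgue and $\mu$-weighted $L^{2}$ spaces, and verify compactness of $\mT$ via the Hilbert--Schmidt bound provided by \Cref{assumption: HS norm}. Once injectivity and compactness are established, the remaining pieces (Fredholm alternative, duality for the adjoint, and uniqueness of $V^{*}$ on $L^{2}(\bX,\mu)$) are standard.
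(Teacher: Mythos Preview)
Your proposal is correct, and part (ii) is essentially identical to the paper's argument. For part (i), you take a slightly different route. Both you and the paper first establish that $\mI-\mT$ is invertible on $L^{2}(\bX)$ by exploiting the norm equivalence $\|\cdot\|\asymp\|\cdot\|_{\mu}$ coming from \Cref{assumption: bounded mu} (the paper packages this as a separate lemma, while you phrase it as injectivity plus the Fredholm alternative with compactness of $\mT$). The real divergence is in passing to $\mI-\mT^{*}$: the paper argues that $\mI-\mT$ is bounded and bijective, invokes the bounded inverse theorem, and then uses the general Hilbert space fact that a bounded invertible operator has an invertible adjoint (citing Reed--Simon Theorem VI.3(d)); your argument instead reuses the Fredholm alternative, obtaining injectivity of $\mI-\mT^{*}$ from $\ker(\mI-\mT^{*})=\mathrm{range}(\mI-\mT)^{\perp}=\{0\}$ and compactness of $\mT^{*}$. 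Your route stays closer to the compact-operator toolkit and is slightly more elementary; the paper's route is shorter and does not actually need compactness of $\mT$ for this step (only boundedness and bijectivity). Either path works here.
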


Our approach first enjoys the following nice property:
\begin{theorem} \label{theorem: MSE + orthogonality}
    Under \Cref{assumption: stationary distribution}, we have:
    \begin{theoremitems}
        \item \label{proposition: adaptive sieve space} The sequence $\{ r_{k}\}_{k \geq 1}$ generated by Algorithm 1 is an orthogonal sequence.
        \item \label{theorem: MA minimizes MSE} The sequence $\{y_{k}\}_{k \geq 1}$ generated by Algorithm 1 is the optimal approximation in the following sense:
        \begin{equation*}
            y_{k} = \argmin_{y \in \text{span}\{r_{0}, r_{1}, \cdots, r_{k-1}\}} \| (\mI - \mT^{*})y - V^{*}\|
        \end{equation*}
    \end{theoremitems}
\end{theorem}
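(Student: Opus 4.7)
The plan is to recognize the iteration \eqref{Model-adaptive algorithm} as the standard Conjugate Gradient method applied to the normal-equation operator $\mA := (\mI - \mT)(\mI - \mT^{*})$, and then invoke the classical CG theory in the Hilbert space $L^{2}(\bX)$. The first preliminary step is to verify that $\mA$ is self-adjoint and positive definite on $L^{2}(\bX)$: self-adjointness follows from $\mA^{*} = ((\mI - \mT^{*})^{*})((\mI - \mT)^{*}) = (\mI - \mT)(\mI - \mT^{*}) = \mA$, and positivity comes from the identity
\begin{equation*}
    \langle \mA y, y \rangle = \langle (\mI - \mT^{*})y, (\mI - \mT^{*})y \rangle = \|(\mI - \mT^{*})y\|^{2} \geq 0,
\end{equation*}
together with injectivity of $\mI - \mT^{*}$, which is the adjoint of the operator $\mI - \mT$ that \Cref{theorem: unique solution on nu2} implies to be boundedly invertible on $L^{2}(\bX)$.

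For part (i), I would proceed by induction on $k$, simultaneously establishing the three classical CG invariants: (a) $\langle r_{i}, r_{j} \rangle = 0$ for $0 \leq i < j \leq k$; (b) the $\mA$-conjugacy relations $\langle \mA s_{i}, s_{j} \rangle = 0$ for $0 \leq i < j \leq k$; and (c) the Krylov identity $\text{span}\{r_{0}, \ldots, r_{k}\} = \text{span}\{s_{0}, \ldots, s_{k}\} = \text{span}\{u, \mA u, \ldots, \mA^{k} u\}$. The step sizes $\alpha_{k-1} = \|r_{k-1}\|^{2}/\|(\mI - \mT^{*})s_{k-1}\|^{2}$ and $\beta_{k-1} = \|r_{k}\|^{2}/\|r_{k-1}\|^{2}$ are precisely the choices that enforce $\langle r_{k}, r_{k-1}\rangle = 0$ and $\langle \mA s_{k}, s_{k-1}\rangle = 0$ at each step, using the recursion $r_{k} = r_{k-1} - \alpha_{k-1} \mA s_{k-1}$ obtained by substituting $y_{k} = y_{k-1} + \alpha_{k-1}s_{k-1}$ into the residual definition. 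The orthogonality with earlier indices follows from the Krylov identity plus the previous-iteration conjugacy/orthogonality relations in the usual CG bookkeeping.

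For part (ii), the key observation is that the $\mA$-norm minimization carried out by CG is exactly the desired objective. Since $(\mI - \mT^{*})y^{*} = V^{*}$ by \Cref{theorem: same solution}, for any $y$,
\begin{equation*}
    \|(\mI - \mT^{*})y - V^{*}\|^{2} = \|(\mI - \mT^{*})(y - y^{*})\|^{2} = \langle \mA(y - y^{*}), y - y^{*}\rangle = \|y - y^{*}\|_{\mA}^{2}.
\end{equation*}
Thus the problem reduces to showing that $y_{k}$ minimizes $\|y - y^{*}\|_{\mA}$ over the Krylov subspace $\mK_{k} := \text{span}\{r_{0}, \ldots, r_{k-1}\}$. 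By the Krylov identity from (i) and the update rule, $y_{k} \in \mK_{k}$, and the first-order optimality condition is $\langle \mA(y_{k} - y^{*}), v\rangle = -\langle r_{k}, v\rangle = 0$ for all $v \in \mK_{k}$, which is exactly the orthogonality $\langle r_{k}, r_{j}\rangle = 0$ for $j \leq k-1$ established in part (i). Convexity of the quadratic then upgrades this stationarity to a global minimum.

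The main obstacle is the bookkeeping in the joint induction of (a)--(c) in part (i); once those relations are in place, part (ii) is a short consequence of the factorization $\mA = (\mI - \mT)(\mI - \mT^{*})$ and \Cref{theorem: same solution}. A secondary point deserving care is that we are working in the (possibly infinite-dimensional) Hilbert space $L^{2}(\bX)$ rather than $\bR^{n}$, but because $\mA$ is bounded, self-adjoint, and positive definite, all inner-product manipulations and the optimality characterization go through verbatim, so no new analytic ingredients are needed beyond \Cref{theorem: unique solution on nu}.
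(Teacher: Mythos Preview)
Your proposal is correct and follows essentially the same route as the paper: the paper establishes part~(i) via an inductive lemma proving exactly the invariants you list (orthogonality of residuals, $\mA$-conjugacy of search directions, and the auxiliary relation $\langle r_i, \mA s_i\rangle = \langle s_i, \mA s_i\rangle$), and for part~(ii) it uses the identical factorization $\|y-y^*\|_{\mA}^2 = \|(\mI-\mT^*)(y-y^*)\|^2 = \|(\mI-\mT^*)y - V^*\|^2$ together with the standard CG energy-minimization property over the Krylov subspace (which the paper cites from \cite{han2009theoretical} rather than re-deriving via first-order conditions as you do).
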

\Cref{theorem: MSE + orthogonality} shows that $V^{ma}_{k} := (\mI - \mT^{*})y_{k}$ minimizes the approximation error over $y \in \text{span}\{r_{0}, r_{1}, \cdots, r_{k-1}\}$. The orthogonality of the basis functions implies that the approximation error $\|V^{ma}_{k} - V^{*}\|$ decreases monotonically. As the residuals are informative about the solution, the projection onto the adaptive-sieve space can lead to a faster convergence rate of the approximation error than conventional methods. 

Before establishing the convergence rate of our method, we define the concept of superlinear convergence. The concept of $R$-convergence, which is in analogy to the Cauchy root test for the convergence of series, quantifies the convergence rate of a sequence of approximation solutions:
\begin{definition}[$R$-Convergence \cite{ortega2000iterative}] \label{definition: rate of convergence}
   Let $\{ V_{k} \}_{k \geq 1}$ be a sequence such that $\lim_{k \to \infty} \| V_{k} - V^{*} \| = 0$. Let $R := \limsup_{k \to + \infty} \| V_{k} - V^{*} \|^{\frac{1}{k}}$ be the \textit{root-convergence factor}. The convergence is (i) superlinear for $R = 0$, (ii) sublinear for $R = 1$, and (iii) linear for $0 < R < 1$.
\end{definition}

\begin{theorem}[Superlinear Convergence] \label{theorem: Convergence of Model-adaptive approach}
    Under \Cref{assumption: stationary distribution}, the sequence $\{ \| V^{ma}_{k} - V^{*} \| \}_{k \geq 1}$ converges to zero monotonically and the sequence $\{V^{ma}_{k}\}_{ k \geq 1}$ converges to $V^{*}$ superlinearly. It also satisfies:
    \begin{equation*}
        \| V^{ma}_{k} - V^{*} \| = O((c_{k})^{k})
    \end{equation*}
    where, for some positive constants $C_{1}$, $C_{2}$, $c_{k}$ satisfies:
    \begin{equation*}
        \frac{C_{1}}{k} \leq c_{k} \leq \frac{C_{2}}{\sqrt{k}}.
    \end{equation*}
    Moreover, the sequence $\{\|r_{k}\|\}_{k \geq 1}$ converges to zero superlinearly:
    \begin{equation*}
        \|r_{k}\| = O((c_{k})^{k})
    \end{equation*}
    The rates can be improved if $f(x'|x)$ has continuous partial derivatives of order up to $l$. In that case, there exists a constant $C(l)$ such that:
    \begin{equation*}
        c_{k} \leq \frac{C(l)}{k}
    \end{equation*}
    Finally, for discrete state spaces, the algorithm converges to $V^{*}$ in at most $|\bX|$-steps.
\end{theorem}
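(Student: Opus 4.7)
The MA algorithm \eqref{Model-adaptive algorithm} is exactly the conjugate gradient (CG) method applied to the self-adjoint positive problem $\mA y = u$ with $\mA := (\mI - \mT)(\mI - \mT^{*})$. My plan is to reduce the error analysis to the standard CG min-max polynomial characterization, and then bound that minimax polynomial via the spectral decay of $\mA$, which follows from $\mT$ being Hilbert--Schmidt under \Cref{assumption: stationary distribution}.

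I would first establish the operator-theoretic setup. The kernel $\beta f(x'|x)$ is uniformly bounded on $[0,1]^d \times [0,1]^d$ by \Cref{assumption: HS norm} (or $\mT$ is finite-rank in the discrete case), so $\mT$ is Hilbert--Schmidt on $L^{2}(\bX)$. Consequently $\mA - \mI = -\mT - \mT^{*} + \mT\mT^{*}$ is a self-adjoint compact operator with eigenvalues $\{-\nu_j\}_{j \geq 1}$ accumulating at $0$ and $\sum_j \nu_j^{2} \leq \|\mA - \mI\|_{HS}^{2} < \infty$. By \Cref{theorem: unique solution on nu2}, $\mA$ is boundedly invertible, so its eigenvalues $\lambda_j = 1 - \nu_j$ lie in an interval $[a,b] \subset (0,\infty)$. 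A direct computation using $V^{ma}_k - V^{*} = (\mI - \mT^{*})(y_k - y^{*})$ gives $\|V^{ma}_k - V^{*}\|^{2} = \langle y_k - y^{*}, \mA(y_k - y^{*})\rangle = \|y_k - y^{*}\|_{\mA}^{2}$, placing the analysis squarely in the CG framework. With $y_0 = 0$, standard CG theory (e.g.\ \cite{kelley1995iterative}) then yields
\begin{equation*}
\|V^{ma}_k - V^{*}\| \;=\; \min_{p \in \mathcal{P}_k,\, p(0)=1} \|p(\mA) y^{*}\|_{\mA} \;\leq\; \|y^{*}\|_{\mA}\,\min_{p}\,\max_{j \geq 1} |p(\lambda_j)|,
\end{equation*}
monotonicity of $\{\|V^{ma}_k - V^{*}\|\}$ is then immediate from the nested polynomial classes, and since $r_k = \mA(y^{*}-y_k)$ the residual inherits the same rate via $\|r_k\| \leq \|\mA\|_{op}^{1/2} \|V^{ma}_k - V^{*}\|$.

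For the main superlinear rate, I would order $\{|\nu_j|\}$ decreasingly and test with $p_k(\lambda) = \prod_{j=1}^{k}(1 - \lambda/\lambda_j)$, which satisfies $p_k(0)=1$ and annihilates $\lambda_1,\dots,\lambda_k$. For $i > k$, each factor is bounded by $|1 - \lambda_i/\lambda_j| = |\nu_j - \nu_i|/|\lambda_j| \leq 2|\nu_j|/a$, so $\max_{i>k}|p_k(\lambda_i)| \leq (2/a)^{k}\prod_{j=1}^{k}|\nu_j|$. Applying AM-GM to the Hilbert--Schmidt sum gives $\prod_{j=1}^{k}|\nu_j| \leq \bigl(k^{-1}\sum_j \nu_j^{2}\bigr)^{k/2} \leq (C/\sqrt{k})^{k}$, delivering $c_k \leq C_2/\sqrt{k}$. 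The matching lower bound $c_k \geq C_1/k$ I would obtain from a Lagrange-interpolation-style extremal inequality $\min_{p(0)=1}\max_j|p(\lambda_j)| \gtrsim \prod_{j \leq k}|\nu_j|$, combined with the observation that the constraint $\sum_j \nu_j^{2} < \infty$ alone does not force $\prod_{j=1}^{k}|\nu_j|$ to decay faster than $k^{-k}$.

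For the smoothness improvement, if $f \in C^l$ then classical estimates linking kernel regularity to singular-value decay (Birman--Solomyak) give $|\nu_j| \lesssim j^{-\alpha}$ with $\alpha$ growing in $l$, making $\{|\nu_j|\}$ absolutely summable for $l$ large enough; AM-GM applied to $\sum_j |\nu_j|$ then sharpens the product bound to $\prod_{j=1}^{k}|\nu_j| \leq (C(l)/k)^{k}$, yielding $c_k \leq C(l)/k$. For discrete $\bX$, $L^{2}(\bX)$ has dimension $|\bX|$, so $\mA$ has at most $|\bX|$ distinct eigenvalues and the degree-$|\bX|$ polynomial vanishing at all of them annihilates $\mA - \mI$, giving exact termination in at most $|\bX|$ iterations. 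The main obstacle will be the smoothness-to-decay step and the lower bound: transferring $C^l$ regularity on $[0,1]^d$ into a sharp singular-value exponent requires invoking the right operator-theoretic result and handling boundary effects on the unit cube carefully, while the lower bound $c_k \geq C_1/k$ is non-standard because classical CG theory supplies upper but not matching lower bounds and so demands a separate extremal-polynomial argument.
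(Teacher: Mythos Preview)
Your reduction to CG and the polynomial min-max bound is exactly the right skeleton, and it matches the paper's route: the paper invokes \cite{han2009theoretical} Theorem 5.6.2 (a Winther-type superlinear CG bound for compact perturbations of the identity) to obtain $\|y_k-y^{*}\|=O((c_k)^k)$, then pushes through $(\mI-\mT^{*})$ to get $\|V^{ma}_k-V^{*}\|$ and through $(\mI-\mK)$ to get $\|r_k\|$. Your explicit test polynomial $p_k(\lambda)=\prod_{j\le k}(1-\lambda/\lambda_j)$ together with AM--GM on $\sum_j\nu_j^2<\infty$ is a perfectly valid, self-contained substitute for that citation and delivers the $C_2/\sqrt{k}$ upper bound.

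The gap is the lower bound. You are reading $c_k\ge C_1/k$ as a lower bound on the CG error (hence your worry about a ``non-standard extremal-polynomial argument''). That is not what the theorem asserts. In the paper, $c_k$ is the \emph{specific} sequence
\[
c_k \;:=\; \frac{2}{k}\sum_{j=1}^{k}\frac{|\lambda_j|}{1-\lambda_j}
\]
coming out of the Han/Winther bound (here $\lambda_j$ are the eigenvalues of $\mK=\mT+\mT^{*}-\mT\mT^{*}$). The lower bound is then the trivial one-term estimate $c_k\ge \frac{2}{k}\frac{|\lambda_1|}{1-\lambda_1}$; the upper bound is Cauchy--Schwarz on the sum. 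Your implicit $c_k$, which is essentially $(2/a)\bigl(\prod_{j\le k}|\nu_j|\bigr)^{1/k}$, is a \emph{sharper} error bound than the paper's arithmetic-mean version, but precisely because it is sharper it need not satisfy $c_k\ge C_1/k$: if $|\nu_j|=2^{-j}$ then your $c_k$ decays exponentially. So your proposed lower-bound argument is both unnecessary and, for your choice of $c_k$, false. The fix is simply to adopt the arithmetic-mean definition of $c_k$ (equivalently, cite the Han bound rather than the raw product bound), after which the $C_1/k$ inequality is one line.

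For the smoothness improvement the paper also takes a citation shortcut (\cite{flores1993conjugate} Theorem~3, which uses that the kernel of $\mK$ is symmetric and $C^l$ to get summable eigenvalues), so your Birman--Solomyak route is a reasonable alternative; and your finite-termination argument for discrete $\bX$ is the standard one the paper uses as well.
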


\Cref{theorem: Convergence of Model-adaptive approach} establishes the superlinear convergence of the residual and the approximation error. It also shows the monotonic convergence of the approximation error. The decay rate of $c_{k}$ is at most $\frac{1}{k}$ and at least $\frac{1}{\sqrt{k}}$. Those two bounds hold for all inner product spaces under regularity conditions, while the lower bound is achieved by solving the equation on $L^{2}(\bX)$ with the continuity assumption on the partial derivatives of the transition density. As $c_{k} \to 0$, it is straightforward to show that algorithm \eqref{Model-adaptive algorithm} will achieve the tolerance after a finite number of iterations. See the proof of \Cref{theorem: Convergence of Model-adaptive approach} for explicit expressions of the constants.

\begin{remark}[Implications for Statistical Inference] \label{remark: inference}
The model-adaptive approach is a computational method for solving the policy valuation equation; it does not change the underlying statistical estimator. In the nested fixed point framework, numerical error from the inner loop is controlled by the stopping tolerance $\epsilon_{\text{in}}$. Since the approximation error of MA converges superlinearly, the desired inner-loop tolerance can be attained efficiently. Hence, provided the stopping tolerance is chosen so that numerical error is negligible relative to sampling uncertainty, the usual inference results for the underlying estimator continue to apply. For continuous state spaces, \Cref{corollary: MA convergence2} shows that the total error decomposes into a simulation error from numerical integration and an approximation error from the MA iterations. It also implies a mesh-independence property: for any fixed number of iterations and sufficiently large $M$, the approximation term is approximately insensitive to $M$. Thus, increasing $M$ primarily reduces simulation error, with the main additional computational burden coming from matrix-vector multiplication.
\end{remark}

\subsection{Comparison with TD and SA} \label{section: Comparison}

\noindent This section compares the convergence rate of the approximation error of our method with TD and SA. Denote by $V^{sa}_{k}$ and $V^{td}_{k}$ the SA and TD approximation solutions.\footnote{For TD, see \cite{tsitsiklis1996analysis} and \cite{dann2014policy}. For SA, see \cite{kress_linear_2014}.} Let $\bS_{k} := \text{span} \{\phi_{1}, \cdots, \phi_{k}\}$ be a sieve space where $(\phi_{1}, \cdots, \phi_{k})$ are basis functions, and let $\Pi_{\bS_{k}}(V) := \argmin_{h \in \bS_{k}} \| V - h \|_{\mu}$ be the projection operator onto $\bS_{k}$. We impose the following assumption on the projection bias of TD:
\begin{assumption} \label{assumption: TD convergence}
    There exist $C_{td,2} > C_{td,1} > 0$ such that for each $k$:
    \begin{equation*}
        C_{td,1} k^{-\frac{\alpha}{d}} \leq \|V^{*} - \Pi_{\bS_{k}} V^{*}\|_{\mu} \leq C_{td,2} k^{-\frac{\alpha}{d}}
    \end{equation*}
\end{assumption}
\Cref{assumption: TD convergence} imposes upper and lower bounds on the projection bias. The upper bound is standard in the literature. Similar lower bounds on function approximation by neural nets can be found in \cite{yarotsky2017error} Lemma 3. The following theorem establishes the convergence rate of SA and TD:
\begin{theorem} \label{theorem: linear convergence of SA and TD}
    \begin{theoremitems}
        \item Under Assumptions \ref{assumption: stationary distribution} and \ref{assumption: TD convergence}, the sequence $\{V^{td}_{k}\}_{k \geq 1}$ converges to $V^{*}$ sublinearly.
        \item Under \Cref{assumption: stationary distribution}, the sequence $\{V^{sa}_{k}\}_{k \geq 1}$ converges to $V^{*}$ linearly.
    \end{theoremitems}
\end{theorem}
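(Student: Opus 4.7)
The plan is to treat the two parts of \Cref{theorem: linear convergence of SA and TD} separately, building each claim around the definition of the root-convergence factor $R$ from \Cref{definition: rate of convergence}. A preliminary observation I would use throughout: by \Cref{assumption: bounded mu} the norms $\|\cdot\|$ and $\|\cdot\|_{\mu}$ are equivalent with constants independent of $k$, so $\limsup_{k} \|V_{k} - V^{*}\|^{1/k} = \limsup_{k} \|V_{k} - V^{*}\|_{\mu}^{1/k}$ for any convergent sequence. This lets me compute $R$ in whichever norm is more convenient.

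For the SA claim (ii), I would first subtract the fixed-point relations $V^{sa}_{k+1} = \mT V^{sa}_{k} + u$ and $V^{*} = \mT V^{*} + u$ to obtain $V^{sa}_{k} - V^{*} = \mT^{k}(V^{sa}_{0} - V^{*})$. The $\beta$-contraction property of $\mT$ on $L^{2}(\bX,\mu)$ immediately yields $\|V^{sa}_{k} - V^{*}\|_{\mu} \leq \beta^{k}\|V^{sa}_{0} - V^{*}\|_{\mu}$, so $R \leq \beta < 1$. For the matching lower bound $R \geq \beta$, I would decompose $V^{sa}_{0} - V^{*} = c\cdot\mathbf{1} + g$, where $c$ is the $\mu$-mean of $V^{sa}_{0} - V^{*}$ and $g$ is $\mu$-mean zero. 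Since $\int f(x'|x)\, dx' = 1$, the constant function is an eigenfunction of $\mT$ with eigenvalue $\beta$; and since $\mu$ is stationary, $\int \mT^{k} g\, d\mu = 0$ for all $k$, so the two pieces remain orthogonal in $L^{2}(\bX,\mu)$. Thus
\begin{equation*}
\|V^{sa}_{k} - V^{*}\|_{\mu}^{2} = \beta^{2k} c^{2}\, \mu(\bX) + \|\mT^{k}g\|_{\mu}^{2} \geq \beta^{2k} c^{2}\, \mu(\bX).
\end{equation*}
For any initial condition with $c \neq 0$ (e.g., $V^{sa}_{0} = 0$ whenever $\int V^{*} d\mu \neq 0$), this gives $R \geq \beta$, so $R = \beta$, confirming linear convergence.

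For the TD claim (i), I would use that $V^{td}_{k}$ solves the projected Bellman equation $V^{td}_{k} = \Pi_{\bS_{k}}(\mT V^{td}_{k} + u)$ and hence lies in $\bS_{k}$. The Pythagorean identity in $L^{2}(\bX,\mu)$, applied with the orthogonal projection $\Pi_{\bS_{k}}$, gives
\begin{equation*}
\|V^{td}_{k} - V^{*}\|_{\mu}^{2} = \|V^{td}_{k} - \Pi_{\bS_{k}} V^{*}\|_{\mu}^{2} + \|\Pi_{\bS_{k}} V^{*} - V^{*}\|_{\mu}^{2} \geq \|\Pi_{\bS_{k}} V^{*} - V^{*}\|_{\mu}^{2}.
\end{equation*}
Combining this with the lower half of \Cref{assumption: TD convergence}, namely $\|V^{*} - \Pi_{\bS_{k}} V^{*}\|_{\mu} \geq C_{td,1}\, k^{-\alpha/d}$, and translating to $\|\cdot\|$ through the norm-equivalence constants, I obtain $\|V^{td}_{k} - V^{*}\| \geq c\, k^{-\alpha/d}$ for a positive constant $c$. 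Taking $k$-th roots,
\begin{equation*}
\|V^{td}_{k} - V^{*}\|^{1/k} \geq c^{1/k}\, k^{-\alpha/(dk)} \longrightarrow 1,
\end{equation*}
since $\log k / k \to 0$. Hence $R \geq 1$, so superlinear convergence ($R = 0$) is ruled out.

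The main obstacle is the SA lower bound: without extra structure one cannot a priori exclude initial conditions $V^{sa}_{0}$ whose error lives entirely in a subspace on which $\mT$ acts with operator norm strictly smaller than $\beta$. The constant-versus-mean-zero decomposition circumvents this by exploiting that $\mT$ has $\beta$ as an eigenvalue via the stochastic constraint $\int f(x'|x)\, dx' = 1$, which suffices for any initial condition with a nonzero $\mu$-mean component. The TD half is comparatively routine once one recognises that the Pythagorean identity forces the TD error to be at least as large as the projection bias, so the root test converts the polynomial lower bound in \Cref{assumption: TD convergence} into $R \geq 1$.
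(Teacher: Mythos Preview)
Your proposal is correct and in both parts takes a somewhat different---and arguably sharper---route than the paper.

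For part (i), the paper first proves a supporting lemma that sandwiches $\|V^{td}_{k}-V^{*}\|_{\mu}$ between $\frac{C_{td,1}}{1+\beta}k^{-\alpha/d}$ and $\frac{C_{td,2}}{1-\beta}k^{-\alpha/d}$ (using the projected fixed-point relation, non-expansiveness of $\Pi_{\bS_{k}}$, and the $\beta$-contraction), and then bounds the ratio of successive errors from below by a constant $C_{TD}\in(0,1)$ to conclude $R\geq C_{TD}>0$. Your Pythagorean-identity argument is more direct: since $V^{td}_{k}\in\bS_{k}$, the orthogonal decomposition immediately gives $\|V^{td}_{k}-V^{*}\|_{\mu}\geq\|\Pi_{\bS_{k}}V^{*}-V^{*}\|_{\mu}$, and the lower half of \Cref{assumption: TD convergence} alone yields $R\geq 1$. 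This is shorter, uses only the lower bound on the projection bias, and delivers the stronger conclusion that TD is in fact sublinear rather than merely not superlinear.

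For part (ii), the paper establishes only the upper bound $R\leq\beta$ from the contraction $\|V^{sa}_{k}-V^{*}\|_{\mu}\leq\beta^{k}\|V^{sa}_{0}-V^{*}\|_{\mu}$; it does not supply a matching lower bound. Your eigenfunction decomposition $V^{sa}_{0}-V^{*}=c\cdot\mathbf{1}+g$, using that $\mathbf{1}$ is an eigenfunction of $\mT$ with eigenvalue $\beta$ and that stationarity of $\mu$ preserves the $\mu$-mean-zero property under $\mT$, is a genuine addition and pins down $R=\beta$ exactly whenever $c\neq 0$. You rightly flag that the lower bound fails for initial errors with zero $\mu$-mean; the paper simply sidesteps this by not claiming the lower half.
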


\Cref{theorem: linear convergence of SA and TD} shows that the SA method converges linearly and the TD method converges sublinearly. It implies that to achieve a pre-specified tolerance, the number of iterations required for SA and the sieve dimension for TD can be much larger than for our method as it converges superlinearly.

\subsection{Continuous State Space} \label{section: continuous state space}

\noindent For continuous state spaces, numerical integration introduces two sources of error. The main results of this section are: (i) the total error decomposes into a simulation error (from discretization) and an approximation error (from the MA iterations), and (ii) the number of MA iterations required for convergence is approximately independent of the number of grid points $M$. This is the as mesh independence principle. Together, these results imply that increasing $M$ improves accuracy without increasing the number of iterations.

Under smoothness and regularity conditions on the transition density and utility function (detailed in Appendix \ref{sec: supporting results continuous}), these conditions are natural in practice, as the transition density is often very smooth; for example, autoregressive processes are commonly used to model the transition of state variables (e.g., \cite{erdem2003brand}, \cite{hendel2006measuring}, \cite{aguirregabiria2007sequential}, \cite{aw2011r}, and \cite{gowrisankaran2012dynamics}). The approximation error converges superlinearly, and the number of iterations required for a given tolerance is approximately independent of $M$ --- a property known as the mesh independence principle (see \cite{atkinson1997numerical}). The formal statement is as follows:
\begin{theorem} \label{corollary: MA convergence2}
    Let $p$ be any given positive integer. Under \Cref{assumption: stationary distribution} and the assumptions in Appendix \ref{sec: supporting results continuous}, for sufficiently large $M$ and any $k \leq p$, we have:
    \begin{itemize}
        \item If the low-discrepancy grid is used, then:
        \begin{equation*}
            \|\hat{V}^{ma}_{k} - V^{*}\| = O(\underbrace{\frac{(\log M)^{d-1}}{M}}_{\text{Simulation Error}} + \underbrace{(c_{k})^{k}}_{\text{Approximation Error}})
        \end{equation*}
        \item If the regular grid is used, then:
        \begin{equation*}
            \|\hat{V}^{ma}_{k} - V^{*}\| = O(\underbrace{M^{-\frac{\alpha}{d}}}_{\text{Simulation Error}} + \underbrace{(c_{k})^{k}}_{\text{Approximation Error}})
        \end{equation*}
    \end{itemize}
\end{theorem}

\section{Numerical Experiments} \label{sec: Simulation}

\noindent This section presents three numerical experiments. First, we simulate a bus engine replacement model to visualize the convergence of MA. Second, we analyze a model of consumer demand for storable goods similar to \cite{hendel2006measuring}. We compare policy iteration and Newton--Kantorovich outer iterations combined with different inner solvers (MA, SA, and an exact solver), along with VFI benchmarks. The results show that MA opens the door to the use of Bayesian MCMC estimators for such models. Finally, we examine a single-firm entry and exit problem described in \cite{aguirregabiria2023solution}. We show that MA can improve the computational efficiency of policy iteration. We also compare the performance of MA against: SA and TD.

\subsection{Bus Engine Replacement} \label{sec: Bus Engine Replacement}

\noindent This section simulates a bus engine replacement problem to visualize the convergence behavior of MA. We adapt the setting in \cite{arcidiacono2011conditional}.

At each period $t \leq \infty$, an agent chooses to maintain $ a_{t} = 1$ or replace $a_{t} = 0$ the engine. The replacement cost is $RC$. The maintenance cost is linear in mileage with accumulated mileage up to 25, i.e., $ u(x_{t},1) = \theta_{1} \min\{x_{t},25\}$, where $x_{t}$ is the mileage of the engine. Moreover, mileage accumulates in increments of 0.125. The period utility of the agent is $u(x_{t}, a_{t}, \eps_{t}) = (1 - a_{t}) (RC + \eps_{t}(0)) + a_{t} (\theta_{1} \min\{x_{t},25\} + \eps_{t}(1))$ where $(\eps_{t}(0), \eps_{t}(1))$ are i.i.d extreme value type I distributed shocks. The transition probability of $x_{t}$ is specified as:
\begin{equation*}
    f(x_{t+1}|x_{t},a_{t}) = 
    \begin{cases}
        \exp(-\theta_{2}x_{t+1}) - \exp(-\theta_{2}(x_{t+1} + 0.125)) & \text{if } a_{t} = 0, 0 \leq x_{t+1} \leq 25 \\
        \exp(-\theta_{2} x_{t+1}) & \text{if } a_{t} = 0, x_{t+1} = 25 \\
        \exp(-\theta_{2} (x_{t+1} - x_{t})) - \exp(-\theta_{2} (x_{t+1} + 0.125 - x_{t})) & \text{if } a_{t} = 1, x_{t} \leq x_{t+1} \leq 25 \\
        \exp(-\theta_{2} (25 - x_{t})) & \text{if } a_{t} = 1, x_{t+1} = 25
    \end{cases}
\end{equation*}
where we set $\theta_{1} = -0.15, \theta_{2} = 1, RC = -2$, and $\beta = 0.9$. To visualize, we solve the DP problem and use the true CCPs to construct the linear systems.

Our method takes 15 iterations to solve the equation. \Cref{fig:OptimalReplacement} visualizes our approximation solution for the first 9 iterations and the basis functions. Each panel shows the impact on the approximate solution of adding one additional sieve basis function to the previous value.

\begin{figure}[h!]
    \centering
    % Set consistent spacing between subfigures
    \setlength{\tabcolsep}{4pt}
    % Add vertical spacing between rows
    \setlength{\arrayrulewidth}{1pt}

    % Begin tabular layout
    \begin{tabular}{@{}ccc@{}}
        % Row 1
        \includegraphics[width=0.32\textwidth]{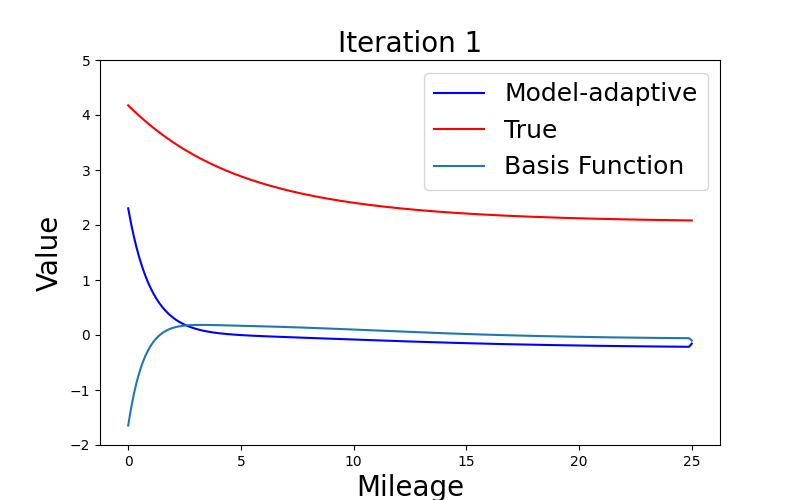} &
        \includegraphics[width=0.32\textwidth]{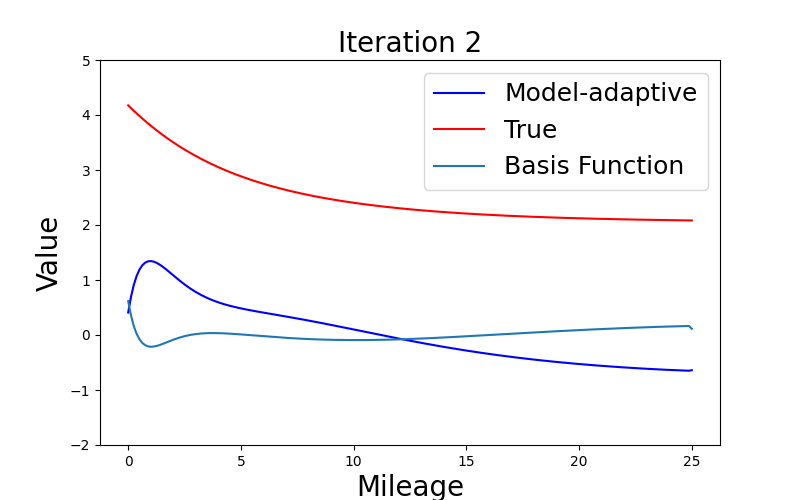} &
        \includegraphics[width=0.32\textwidth]{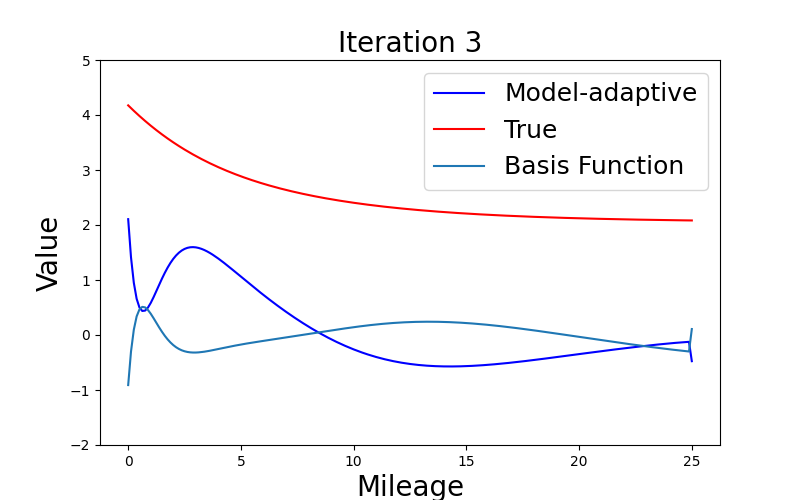} \\[6pt]

        % Row 2
        \includegraphics[width=0.32\textwidth]{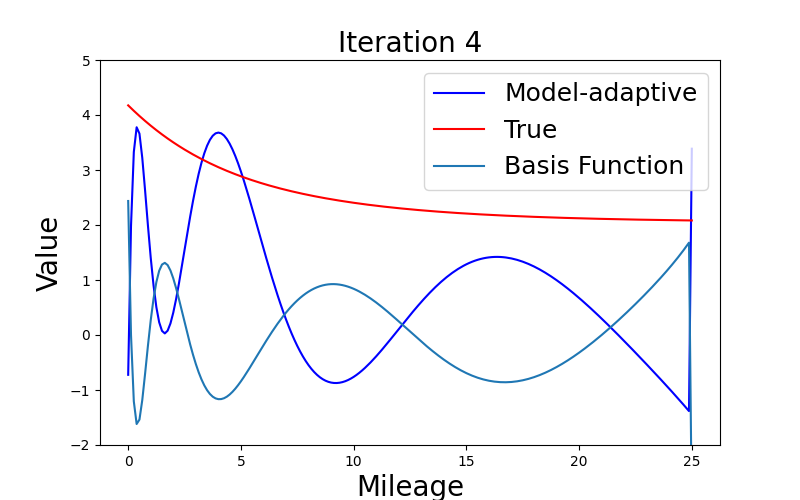} &
        \includegraphics[width=0.32\textwidth]{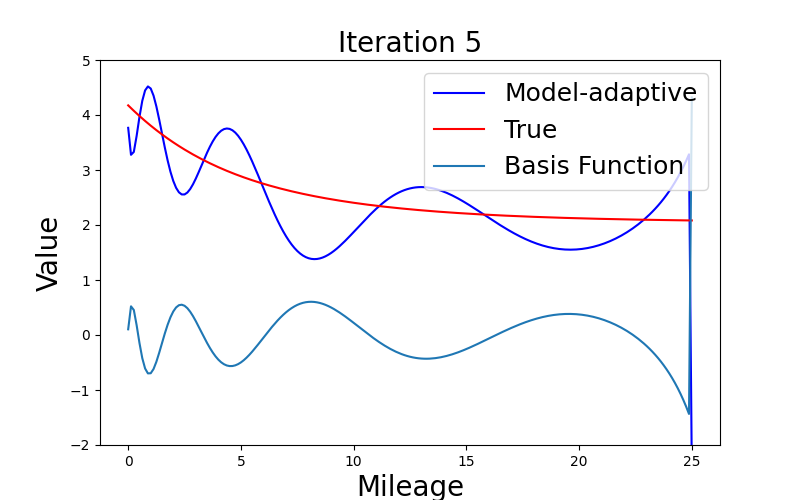} &
        \includegraphics[width=0.32\textwidth]{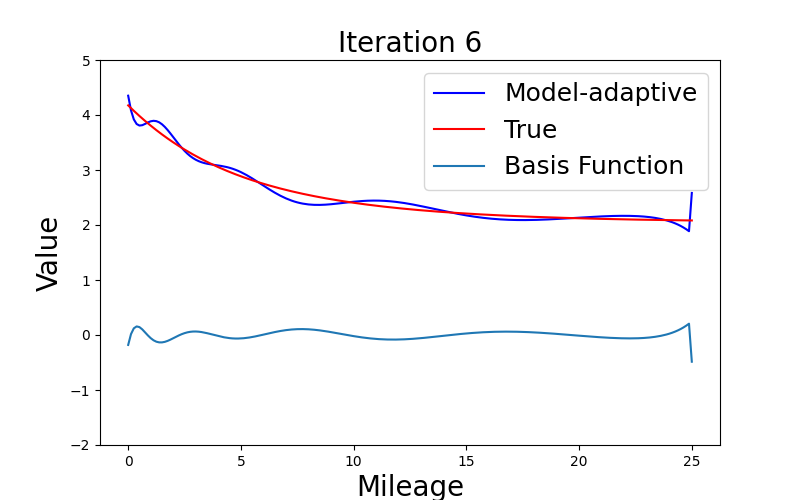} \\[6pt]

        % Row 3
        \includegraphics[width=0.32\textwidth]{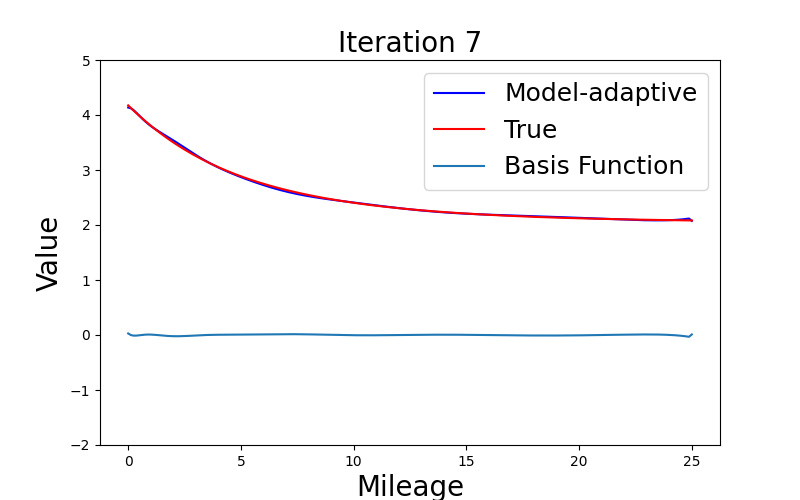} &
        \includegraphics[width=0.32\textwidth]{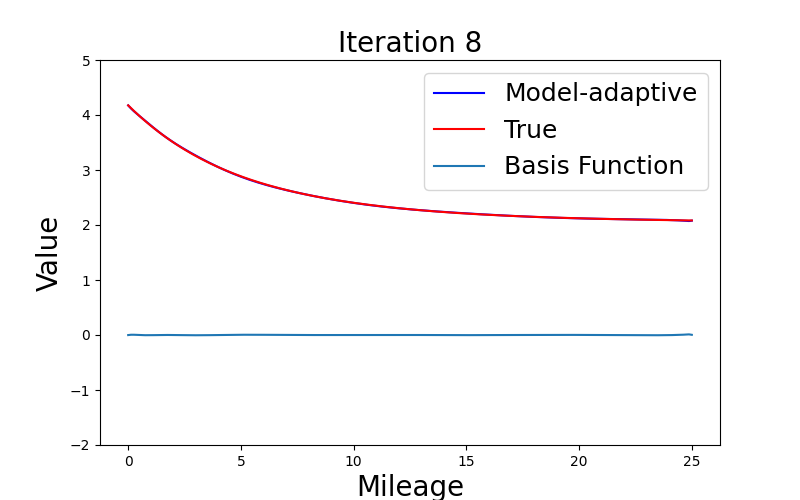} &
        \includegraphics[width=0.32\textwidth]{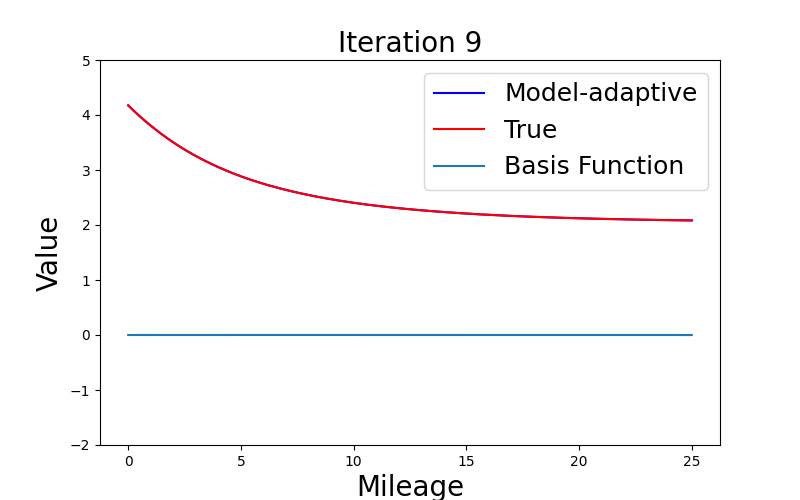}
    \end{tabular}
    \caption{Convergence Behavior of Model-Adaptive Approach}
    \label{fig:OptimalReplacement}
\end{figure}

\begin{figure}[h!]
    \centering
    \includegraphics[width=0.55\textwidth]{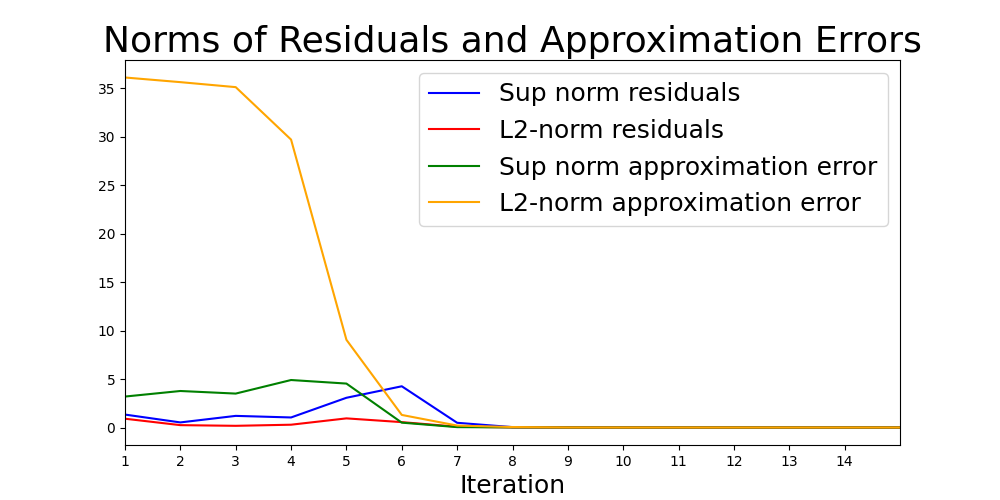}
    \caption{Norms of residuals and approximation error for Bus Engine Replacement}
    \label{fig:Convergence}
\end{figure}

\Cref{fig:Convergence} plots the $L_{2}$-norm of the residuals and the sup-norm of the approximation error. The approximation solution is close to the true solution after iteration 6, which implies that our algorithm constructs a good sieve space using 6 iterations. After that, the approximation converges to the true solution rapidly. This finding aligns with \Cref{fig:Convergence} as the approximation error dramatically decreases at iteration 6-7. Other norms also decay dramatically after 6 iterations. Moreover, the $L_{2}$-norm of the approximation errors decreases monotonically, consistent with \Cref{theorem: Convergence of Model-adaptive approach}. 

\subsection{Consumer Demand for Storable Goods} \label{sec: Additional Simulations}

\noindent In this section, we analyze a model of consumer demand for storable goods similar to \cite{hendel2006measuring}. We aim to show that our method opens the door to the use of Bayesian MCMC estimation methods for such models.

At each period, given prices $\bm{p}$ and inventories $I$, a household of size $n$ decides which brand $k$ to purchase, how much to purchase $j$ and how much to consume $C$.\footnote{Further details of the model are presented in \Cref{sec: Application}. For each household size, parameter values are set equal to the estimated values reported in \Cref{sec: Application}.} Let $\omega_{j}\left(\bm{p} \right)$ be the indirect utility from brand choice and $x := \{\bm{\omega},I\}$. Given the assumptions in \cite{hendel2006measuring}, brand choice is purely a static problem and the consumer's value function $V(x)$ for the dynamic problem is the solution to:
\begin{equation*}
    V(x) = \log \sum_{j} \exp( U(C(x,j),I,j;\theta) + \omega_{j} \left(\bm{p} \right) + \beta \bE[V(x') | x, C(x,j),j])
\end{equation*}
where $C(x,j) := \argmax_{c_{min} \leq c \leq c_{max}} \Bigl[ U(c,I,j;\theta) + \omega_{j} \left(\bm{p} \right) + \beta \bE[V(x') | x, c,j] \Bigr]$.

We compare several solution methods combining two outer iteration frameworks (PI and NK) with different inner solvers (MA, SA, and an exact linear equation solver\footnote{We use scipy.sparse.linalg.spsolve.}), along with VFI and one-step VFI as benchmarks. The pseudocode for PI+MA and its VFI and one-step VFI variants is given in Appendix \ref{sec: Algorithm details}.\footnote{For PI and NK algorithms, the stopping rules are $\|p^{i+1} - p^{i}\|_{\infty} \leq 10^{-4}$, $\|r_{k}\|_{\infty} \leq 10^{-8}$, and $\|C^{i+1} - C^{i}\|_{\infty} = 0 $. For VFI and one-step VFI, the stopping rules are $\|V^{i+1}(x) - V^{i}(x)\|_{\infty} \leq 10^{-8}$, and $\|C^{i+1} - C^{i}\|_{\infty} = 0 $.} For a given consumption function, we update the value function using either PI or NK outer iterations with the chosen inner solver, then update the consumption function and iterate until convergence. The one-step updating was used in \cite{osborne2018approximating} to implement the Bayesian estimator proposed by \cite{imai2009bayesian}—henceforth, IJC. At each MCMC step, rather than iterate to convergence, the algorithm updates both the value function and the consumption function only by one-step.

\Cref{tab:performance_solution_algorithms} compares the computational time and number of iterations required by each algorithm across simulations for different household sizes. NK+MA is the fastest method, requiring approximately 0.5 minutes across all household sizes, followed by PI+MA at around 0.7 minutes. NK+SA requires about 1.0 minutes and PI+SA about 2.4 minutes. Exact solution takes about 4.6 minutes. VFI converges in about 1.5 minutes. These results demonstrate that combining MA with NK outer iterations substantially improves computational efficiency. One-step VFI requires about 64 minutes, more than 100 times slower than NK+MA.

\begin{table}[h!]
    \centering
    \footnotesize
    \setlength{\tabcolsep}{4pt}
    \begin{threeparttable}
        \caption{Performance of Solution Algorithms}
        \label{tab:performance_solution_algorithms}
        \begin{tabular}{@{}lcccccccccc@{}}
            \toprule
             & \multicolumn{5}{c}{Time in mins} & \multicolumn{5}{c}{Number of Iterations} \\
            \cmidrule(lr){2-6} \cmidrule(l){7-11}
            \textbf{Household Size} & \textbf{1} & \textbf{2} & \textbf{3} & \textbf{4} & \textbf{5} & \textbf{1} & \textbf{2} & \textbf{3} & \textbf{4} & \textbf{5} \\
            \midrule
            \textbf{PI + MA} & 0.7 & 0.8 & 0.8 & 0.7 & 0.5 & 6 & 7 & 7 & 6 & 5 \\
            \textbf{PI + SA} & 2.2 & 2.6 & 2.7 & 2.4 & 1.9 & 6 & 7 & 7 & 6 & 5 \\
            \textbf{PI + Exact} & 4.4 & 4.6 & 5.2 & 5.0 & 3.6 & 6 & 7 & 7 & 6 & 5 \\
            \textbf{NK + MA} & 0.5 & 0.6 & 0.6 & 0.5 & 0.4 & 6 & 7 & 7 & 6 & 5 \\
            \textbf{NK + SA} & 0.9 & 1.2 & 1.1 & 0.9 & 0.8 & 6 & 7 & 7 & 6 & 5 \\
            \textbf{VFI} & 1.5 & 1.8 & 1.6 & 1.5 & 1.3 & 6 & 7 & 7 & 6 & 5 \\
            \textbf{One-step VFI} & 63.7 & 64.2 & 64.1 & 65.6 & 63.7 & 1621 & 1639 & 1649 & 1649 & 1645 \\
            \bottomrule
        \end{tabular}
        \textbf{Note:} The computational time is the real time in minutes. The code runs on an Intel Xeon Gold 6240 CPU (2.60GHz) with 192GB RAM.
    \end{threeparttable}
\end{table}

Our method opens the door to the use of Bayesian MCMC estimators. To simulate 10,000 MCMC steps, \Cref{table: Estimates of Dynamic Parameters} shows that PI+MA requires between 1.6 and 3.3 days depending on household size.\footnote{At each MCMC step, the value function obtained from the previous MCMC step is used as the initial guess for the current dynamic programming problem, thereby reducing the computational time.}

An alternative estimator to the Bayesian MCMC estimator is the IJC approach. However, the performance of the one-step VFI suggests that the IJC approach may not be the most suitable estimator for this problem. In \Cref{tab:performance_solution_algorithms}, the one-step VFI approach requires more than 1600 iterations to converge even when the true parameters are known. Consequently, due to the extremely large number of iterations required, we expect the total computational time using the IJC approach would be substantially higher than our proposed approach.

\subsection{Single Firm Entry and Exit}

\noindent This section examines a single-firm entry and exit problem described in \cite{aguirregabiria2023solution}. We compare the performance of the model-adaptive approach (MA) against successive approximation (SA) and temporal difference (TD) as inner solvers for the linear system \eqref{policy valuation}. These inner solvers are embedded within policy iteration (PI) and Newton--Kantorovich (NK) outer iterations. We also include value function iteration (VFI) as a benchmark and examine how computational time scales with the state space size.

\subsubsection{Design of the Simulation}

\noindent At each period $t \leq + \infty$, a firm decides whether to exit ($a_{t}=0$) or enter ($a_{t}=1$) the market. For an active firm, the profit is $\pi(1,x_{t}) + \eps_{t}(1)$. $\pi(1,x_{t})$ equals the variable profit $VP_{t}$ minus fixed cost $FC_{t}$, and minus entry cost $EC_{t}$. For an inactive firm, $\pi(0,x_{t})$ is normalized to be 0, and the profit is $\eps_{t}(0)$. The variable profit is $VP_{t} = (\theta_{0}^{VP} + \theta_{1}^{VP} z_{1t} + \theta_{2}^{VP} z_{2t}) \exp(w_{t}) $ where $w_{t}$ is the productivity shock, $z_{1t}$ and $z_{2t}$ are exogenous state variables that affect price-cost margin. The fixed cost is $FC_{t} = \theta_{0}^{FC} + \theta_{1}^{FC} z_{3t}$, and the entry cost is $EC_{t} = (1-a_{t-1}) (\theta_{0}^{EC} + \theta_{1}^{EC} z_{4t})$ where $(1-a_{t-1})$ indicates that the entry cost is paid if the firm is inactive at the previous period ($a_{t-1} = 0$). Continuous state variables follow AR(1) process: $z_{jt} = 0.6 z_{jt-1} + \epsilon_{jt}$, $w_{t} = 0.2 + 0.6 w_{t-1} + \epsilon_{t}$, where $\epsilon_{jt}$, $\epsilon_{t}$ follows i.i.d standard normal. The true parameters $\theta^{*}$ are chosen to be $\theta_{0}^{VP} = 0.5, \theta_{1}^{VP} = 1, \theta_{2}^{VP} = -1, \theta_{0}^{FC} = 1.5, \theta_{1}^{FC} = 1, \theta_{0}^{EC} = 1, \theta_{1}^{EC} = 1$. We use $M$-point Tauchen's method (\cite{tauchen1986finite}) to discretize each of 5 continuous state variables and obtain the transition matrix where $M = \{6,7,8,9,10\}$. Moreover, we set the discount factor $\beta = \{0.95, 0.975, 0.980, 0.985, 0.990, 0.995, 0.999\}$.

\subsubsection{Kronecker Product Structure} \label{subsubsec:kronecker}

\noindent The transition matrix $F$ admits a Kronecker product factorization $F = F_{a} \otimes F_{1} \otimes F_{2} \otimes \cdots \otimes F_{d}$, where $F_{a}$ corresponds to the discrete action state and each $F_{j}$ is the $M \times M$ Tauchen transition matrix for the $j$-th continuous variable. This structure allows computing the matrix-vector product $Fv$ without ever forming or storing $F$ explicitly. The cost of one matrix-vector product is $O(d M |\mathcal{X}_{M}|)$, compared with $O(|\mathcal{X}_{M}|^{2})$ for a dense matrix-vector product. Since each MA iteration requires two such products (one with $(\mI - \beta F)$ and one with its adjoint), the total cost of $k$ MA iterations is $O(k \, d \, M \, |\mathcal{X}_{M}|)$. The memory requirement is $O(d M^{2} + |\mathcal{X}_{M}|)$—storing $d$ matrices of size $M \times M$ plus the iterate—rather than $O(|\mathcal{X}_{M}|^{2})$ for the dense matrix.

\subsubsection{Method Comparison} \label{subsubsec:method_comparison}

\noindent We compare the computational performance of the above methods for solving the dynamic programming problem with $M=6$ grid points per dimension ($|\mathcal{X}_{M}| = 15{,}552$) and $\beta = 0.95$. All methods exploit the Kronecker product structure of the transition matrix for efficient matrix-vector multiplication.\footnote{The stopping rules are $\|p_{k+1} - p_{k}\|_{\infty} \leq 10^{-8}$ for PI, $\|r_{k}\|_{\infty} \leq 10^{-8}$ for the inner solvers, and $\|\Gamma(V) - V\|_{\infty} \leq 10^{-8}$ for NK outer iterations. Initial guesses: $p_{0} = \frac{1}{2}, y_{0} = 0$.}

\Cref{tab:method_comparison} presents the results. Panel A reports the solve time for the linear system \eqref{policy valuation} given the true CCP, which is obtained by solving the DP problem to convergence using VFI. MA converges in 120 iterations (0.16 seconds), outperforming SA which requires 389 iterations (0.21 seconds). TD produces a large residual of $5.31 \times 10^{2}$, indicating that the polynomial sieve space is inadequate for approximating the value function in this model.

Panel B compares full-solution methods that solve the DP problem directly. VFI converges in 479 iterations (1.06 seconds). Among PI-based methods, PI+MA (0.72 seconds) outperforms PI+SA (0.91 seconds). NK+MA achieves the fastest total time (0.44 seconds). The speed advantage of NK over PI reflects two factors: NK converges in 4 outer iterations versus 5 for PI, and its quadratic outer convergence yields fewer total inner iterations (307 versus 514).

\begin{table}[h!]
    \centering
    \footnotesize
    \begin{threeparttable}
    \caption{Numerical Experiment: Method Comparison}
    \label{tab:method_comparison}
    \renewcommand{\arraystretch}{1.2}
    \setlength{\tabcolsep}{5pt}
    \begin{tabular}{@{}l c c c c l c c@{}}
        \toprule
        \multicolumn{4}{c}{\textit{Panel A: Given true CCP}} & & \multicolumn{3}{c}{\textit{Panel B: Full-solution methods}} \\
        \cmidrule(r){1-4} \cmidrule(l){6-8}
        \textbf{Method} & \textbf{Time (s)} & \textbf{Iters} & \textbf{Residual} & & \textbf{Method} & \textbf{Time (s)} & \textbf{Iters} \\
        \midrule
        MA    & 0.16  & 120  & ---                        & & VFI        & 1.06  & 479   \\
        SA    & 0.21  & 389  & ---                        & & PI + MA    & 0.72  & 514   \\
        TD    & 0.12  & ---  & $5.31 \times 10^{2}$       & & PI + SA    & 0.91  & 1,922 \\
              &       &      &                            & & NK + MA    & 0.44  & 307   \\
        \bottomrule
    \end{tabular}
    \textbf{Note:} $|\mathcal{X}_{M}| = 2 \times 6^5 = 15{,}552$, $\beta = 0.95$. Panel A reports solve time for the linear system given the true CCP. Panel B reports total time to solve the DP problem. For PI and NK methods, Iters is the total number of inner solver iterations across all outer iterations. PI converges after 5 outer iterations and NK after 4. The code runs on an Intel Xeon Gold 6240 CPU (2.60GHz) with 192GB RAM.
    \end{threeparttable}
\end{table}

In \Cref{fig:max_norm_residuals,fig:l2_norm_residuals,fig:max_norm_approximation_error,fig:l2_norm_approximation_error}, we visualize the convergence of MA for $M=6$ given the true CCP. We plot the sup-norm of residuals, $\|r_{k}\|_{\infty}$, $L_2$-norm of the residuals, $\|r_{k}\|_{2}$, the sup-norm of the approximation error, $\|V_{k} - V^{*}\|_{\infty}$, and $L_2$ norm of the approximation error, $\|V_{k} - V^{*}\|_{2}$. All figures are plotted as functions of the iteration count $k$ for different $\beta$. \Cref{fig:max_norm_residuals,fig:l2_norm_residuals,fig:max_norm_approximation_error} exhibit a similar pattern: they increase initially, reach a peak, and then decrease rapidly. \Cref{fig:l2_norm_approximation_error} shows the $L_{2}$-norm of approximation error decreases monotonically consistent with \Cref{theorem: Convergence of Model-adaptive approach}. Notably, while $\beta = 0.999$ initially results in larger residuals and approximation errors compared to smaller discount factors, it still achieves convergence within a comparable number of iterations. This demonstrates the ability of MA to handle large discount factors, which is challenging for SA.

\begin{figure}[h!]
    \centering
    \begin{minipage}[b]{0.49\textwidth}
        \centering
        \includegraphics[width=\textwidth]{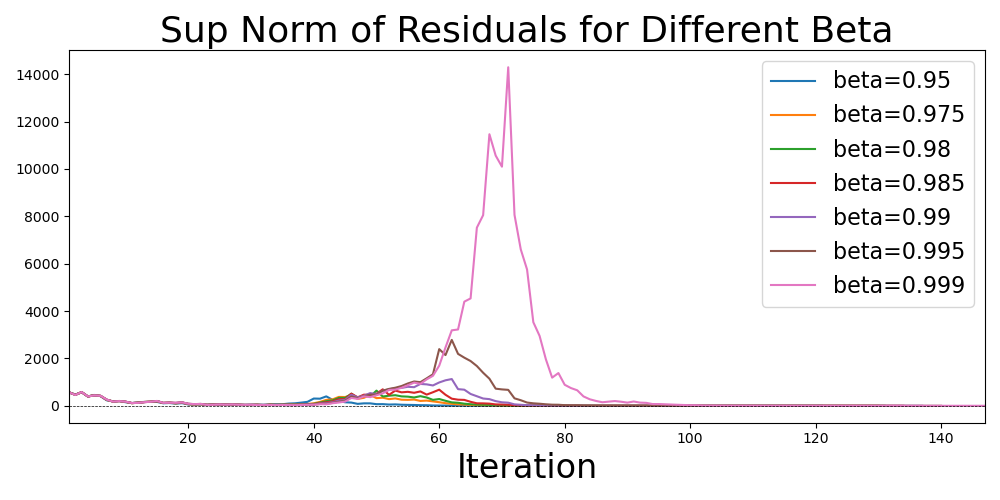}
        \caption{Sup-norm of Residuals}
        \label{fig:max_norm_residuals}
    \end{minipage}
    \hfill
    \begin{minipage}[b]{0.49\textwidth}
        \centering
        \includegraphics[width=\textwidth]{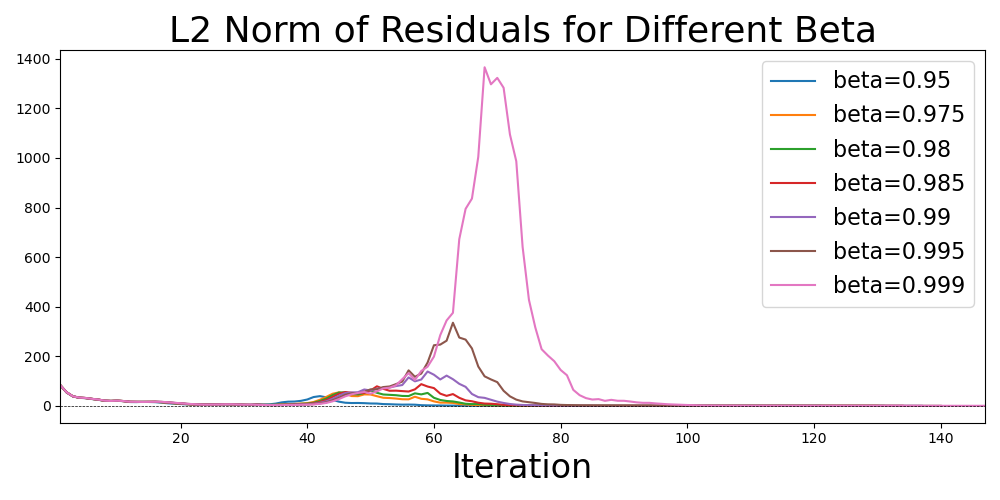}
        \caption{$L_{2}$-Norm of Residuals}
        \label{fig:l2_norm_residuals}
    \end{minipage}
\end{figure}

\begin{figure}[h!]
    \begin{minipage}[b]{0.49\textwidth}
        \centering
        \includegraphics[width=\textwidth]{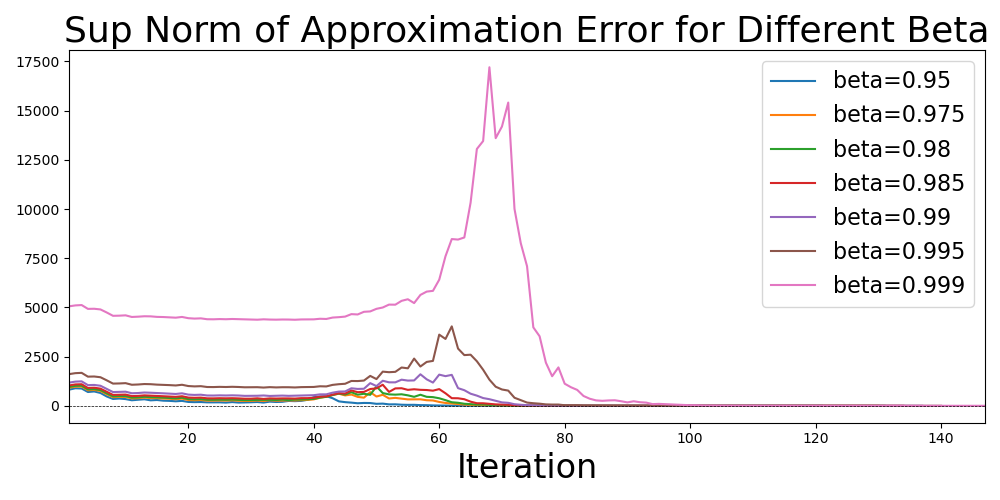}
        \caption{Sup-norm of Approx. Error}
        \label{fig:max_norm_approximation_error}
    \end{minipage}
    \hfill
    \begin{minipage}[b]{0.49\textwidth}
        \centering
        \includegraphics[width=\textwidth]{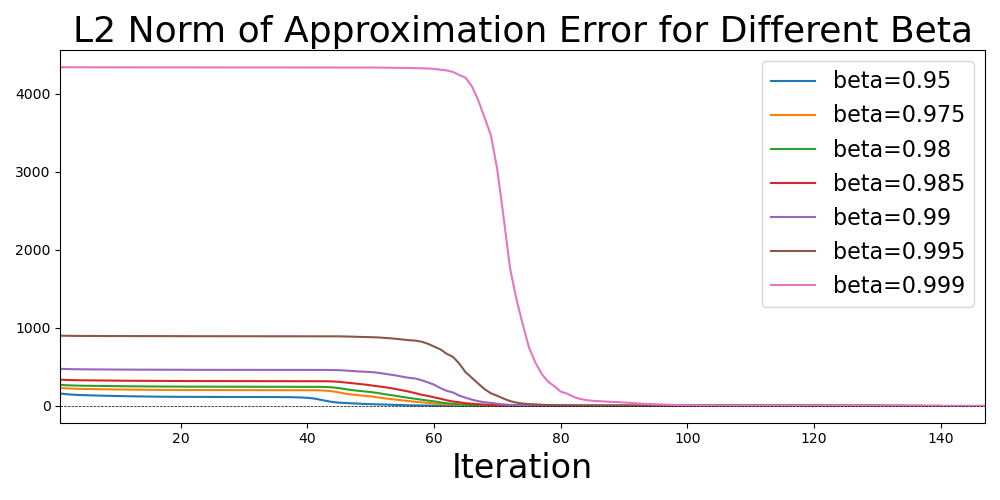}
        \caption{$L_{2}$-Norm of Approx. Error}
        \label{fig:l2_norm_approximation_error}
    \end{minipage}
\end{figure}

\subsubsection{Scaling with State Space Size} \label{subsubsec:scaling}

\noindent We examine how the computational cost of PI+MA scales with the state space size.\footnote{An alternative approach to solve DP is the Euler-Equation method (\cite{aguirregabiria2023solution}). However, it works for models where the only endogenous state variable is the previous action (see their Definition 1). This feature is satisfied in the simulation, though it is restrictive in general.} \Cref{tab:performance different M} presents the average number of MA iterations and average computational time per policy iteration step for $M = 6, 7, 8, 9, 10$. PI converges after 5 iterations for all $M$ and $\beta$.

These results provide empirical support for \Cref{corollary: MA convergence2}. As $M$ increases, the average number of iterations remains relatively stable for all $\beta$, consistent with the mesh independence property. For instance, at $\beta = 0.95$, the average number of iterations is 103 for $M=6$ and 96 for $M=10$. The computational time scales approximately linearly in $|\mathcal{X}_{M}|$, consistent with the $O(d \, M \, |\mathcal{X}_{M}|)$ cost of Kronecker matrix-vector multiplication described in \Cref{subsubsec:kronecker}. Additionally, there is a slight increase in the average number of iterations (from around 96 to 120) as $\beta$ increases from 0.95 to 0.999 at $M=10$. As it only takes up to 1.5 seconds per policy iteration step to solve the linear equation with $|\mathcal{X}_{M}| = 200{,}000$, MA improves the computational efficiency of policy iteration where the main computational cost is solving the linear system of equations.

\begin{table}[h!]
    \centering
    \footnotesize
    \setlength{\tabcolsep}{4pt}
    \begin{threeparttable}
        \caption{Numerical Experiment: Performance for Different $\beta$ and $M$}
        \label{tab:performance different M}
        \begin{tabular}{@{}ccccccccccccc@{}}
            \toprule
             & \multicolumn{5}{c}{Avg Number of Iterations} & & \multicolumn{5}{c}{Avg Time in secs} \\
            \cmidrule(lr){2-6} \cmidrule(l){8-12}
            $\beta$ & $M=6$ & $M=7$ & $M=8$ & $M=9$ & $M=10$ & & $M=6$ & $M=7$ & $M=8$ & $M=9$ & $M=10$ \\
            \midrule
            0.950  & 103 & 101 & 100 & 98 & 96 & & 0.14 & 0.21 & 0.39 & 0.66 & 1.18 \\
            0.975  & 110 & 109 & 107 & 105 & 103 & & 0.14 & 0.23 & 0.41 & 0.68 & 1.25 \\
            0.980  & 112 & 111 & 109 & 107 & 105 & & 0.17 & 0.23 & 0.43 & 0.71 & 1.27 \\
            0.985  & 114 & 113 & 111 & 109 & 107 & & 0.15 & 0.23 & 0.43 & 0.73 & 1.29 \\
            0.990  & 116 & 115 & 114 & 112 & 110 & & 0.15 & 0.24 & 0.43 & 0.75 & 1.32 \\
            0.995  & 120 & 118 & 117 & 115 & 113 & & 0.15 & 0.25 & 0.45 & 0.77 & 1.37 \\
            0.999  & 125 & 124 & 123 & 121 & 120 & & 0.16 & 0.25 & 0.46 & 0.81 & 1.46 \\
            \midrule
            $|\mathcal{X}_{M}|$ & 15,552 & 33,614 & 65,536 & 118,098 & 200,000 & & 15,552 & 33,614 & 65,536 & 118,098 & 200,000 \\
            \bottomrule
        \end{tabular}
        \textbf{Note:} $|\mathcal{X}_{M}| := 2 \times M^{5}$ is the cardinality of the state space. All DP problems converge after 5 policy iterations. Avg number of iterations and avg time refer to the average across 5 PI steps. The code runs on an Intel Xeon Gold 6240 CPU (2.60GHz) with 192GB RAM.
    \end{threeparttable}
\end{table}

\section{Empirical Application} \label{sec: Application}

\noindent This section applies our method to a dynamic consumer demand model for laundry detergent using Kantar's Worldpanel Take Home data. We first describe the data, model, and estimation procedure. Then, we discuss the implication of the results.

\subsection{Data}

\noindent The analysis of the Great Britain laundry detergent industry is based on the data from 1st January 2017 until 31st December 2019. It captures detailed information on a representative sample of British households' purchases of fast-moving products, including food, drink, and laundry detergents. The data has been used in previous studies such as \cite{dubois2014prices,dubois2020well}. Households use barcode scanners to record all their grocery purchases. For each purchase, the data includes key information such as price, quantity, product characteristics, and purchase date.

We consider the market for laundry detergent. A laundry detergent product is defined by its quantity, brand, and chemical properties (bio/non-bio). Laundry detergents are available in various formats such as liquid, powder, and gel, each with different dosage metrics. To standardize quantity across formats, we define the quantity purchased by the number of washes. \Cref{table: Data summary statistics1} presents the top 10 brands and bio/non-bio combinations, which account for 76.82\% of the total observed purchases. We restrict our analysis to these top 10 brand and bio/non-bio combinations, and assume they are available to all consumers.
\begin{table}[h!]
    \centering
    \caption{Market Share of Top 10 Brands and Bio/Non-Bio}
    \label{table: Data summary statistics1}
    \footnotesize
    \begin{threeparttable}
        \setlength{\tabcolsep}{3pt}
        \begin{tabular}{@{}lcccccccccc@{}}
            \toprule
            Brand + Bio & 1 & 2 & 3 & 4 & 5 & 6 & 7 & 8 & 9 & 10 \\
            \midrule
            Number of observations & 41,427 & 40,954 & 33,164 & 32,604 & 29,213 & 26,087 & 21,768 & 17,371 & 17,334 & 15,945 \\
            Cumulative share (\%) & 11.54 & 22.94 & 32.18 & 41.26 & 49.39 & 56.65 & 62.72 & 67.55 & 72.38 & 76.82 \\
            \bottomrule
        \end{tabular}
        \textbf{Note:} Number of observations for top 10 brand and bio/non-bio combination between 1st January 2017 and 31st December 2019, using Kantar's Worldpanel Take Home data.
    \end{threeparttable}
\end{table}

In Figure \ref{fig:hist} we show the histogram of quantities purchased\footnote{We restricted our analysis to products with number of washes between 10 and 100.}, which suggests three natural clusters corresponding to small, medium, and large sizes. Therefore, we use $3$-means clustering to aggregate the quantities for each brand and bio/non-bio combination into small (23), medium (40), and large (64) sizes. \Cref{table: Data summary statistics3} summarizes the number of observations for each cluster. We then calculate the weekly average transaction price for each product cluster, and use it as the price index in our model.
\begin{figure}[h!]
    \centering
    \includegraphics[scale=0.55]{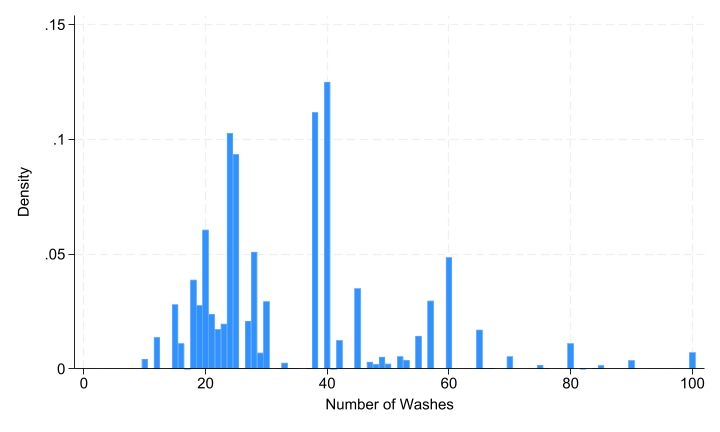}
    \caption{Histogram of Quantities Purchased}
    \label{fig:hist}
\end{figure}

\begin{table}[h!]
    \centering
    \footnotesize
    \caption{$3$-means Clustering of Quantities Purchased}
    \begin{threeparttable}
        \begin{tabular}{@{}llcccccccccc@{}}
            \toprule
            \multicolumn{2}{c}{} & \multicolumn{10}{c}{Brand + Bio} \\
            \cmidrule{3-12}
            \multicolumn{2}{c}{Washes} & 1 & 2 & 3 & 4 & 5 & 6 & 7 & 8 & 9 & 10 \\
            \midrule
            \multicolumn{2}{c}{23} & 20,928 & 24,946 & 20,066 & 28,564 & 12,134 & 6,370 & 5,832 & 15,554 & 10,896 & 6,280 \\
            \multicolumn{2}{c}{40} & 13,265 & 11,938 & 9,370 & 3,789 & 10,075 & 11,802 & 9,634 & 1,720 & 5,052 & 7,783 \\
            \multicolumn{2}{c}{64} & 7,234 & 4,070 & 3,728 & 0 & 7,004 & 7,915 & 6,302 & 0 & 1,386 & 1,882 \\
            \bottomrule
        \end{tabular}
        \textbf{Note:} "Washes" refers to number of washes. Number of observations for $3$-means clustering of quantities purchased for each brand and bio/non-bio combination between 1st January 2017 and 31st December 2019, using Kantar's Worldpanel Take Home data.
    \end{threeparttable}
    \label{table: Data summary statistics3}
\end{table}

In \Cref{table: demog_purchase} we describe purchase statistics by household size. We restrict our analysis to households who purchase laundry detergent at least 3 times and at most 15 times each year for household sizes 1-4, and 3 to 20 times each year for household size 5. The average size of laundry detergent purchased increases with household size, ranging from 32.2 washes per purchase for single-person households to 38.1 washes for households with four or five. The average price per wash, conditional on purchase, shows a slight decreasing trend as household size increases, from £0.152 for single-person households to £0.142 for the largest households. The average number of weeks between purchases decreases as household size increases, with single-person households making purchases every 9.7 weeks on average, while households with five members purchase every 7.4 weeks. The average quantity of laundry detergent purchased per year increases with household size, ranging from 165 washes per year for single-person households to 256 washes for the largest households. \Cref{table: demog_purchase} indicates heterogeneity in purchasing behavior across different household sizes. Therefore, we separately estimate the model for each household size.

\begin{table}[h!]
    \centering
    \footnotesize
    \caption{Laundry Detergent Purchase Statistics by Household Size}
    \label{table: demog_purchase}
    \begin{threeparttable}
        \begin{tabular}{@{}lccccc@{}}
        \toprule
        \thead{Household Size} & \thead{1} & \thead{2} & \thead{3} & \thead{4} & \thead{5} \\
        \midrule
        Avg. Size conditional on purchase & 32.2 & 35.2 & 36.6 & 38.1 & 38.1 \\
        Avg. Price per wash conditional on purchase & 0.152 & 0.146 & 0.145 & 0.143 & 0.142 \\
        Avg. Weeks between purchases & 9.7 & 8.6 & 7.8 & 7.6 & 7.4 \\
        Avg. Quantity purchased per Year & 165 & 203 & 232 & 247 & 256 \\
        Number of households & 235 & 772 & 351 & 373 & 117 \\
        \bottomrule
        \end{tabular}
        \textbf{Note:} The purchase statistics are based on Kantar's Worldpanel Take Home data between 1st January 2017 and 31st December 2019.
    \end{threeparttable}
\end{table}

\subsection{Model}

\noindent In the model, laundry detergent is storable, and households derive utility from both consumption and purchase. As in \cite{hendel2006measuring}, we assume that laundry detergents are perfect substitutes in consumption, which implies that the unobserved state variable, inventory, is one-dimensional. At week $t \leq + \infty$, consumer $i$ chooses discrete consumption $c_{t}$ and purchase decision $a_{jklt}$, where $j \in \{0,23,40,64\}$ is the quantity measured by the number of washes, $k$ refers to the brand, and $l$ is bio/non-bio. Let $\bm{a}_{t}$ be a vector of purchase decisions with $\sum_{j,k,l} a_{jklt} = 1$, where $a_{jklt} = 1$ indicates the purchase of quantity $j$ of brand $k$ and bio/non-bio $l$, and $a_{jklt} = 0$ otherwise. The period utility\footnote{Our model differs from \cite{hendel2006measuring} as we do not include a taste shock for the consumption level $c_{t}$. \cite{osborne2018approximating} also found that it is difficult to identify its distribution.} is given by:
\begin{equation*}
    u(\bm{a}_{t}, c_{t}, I_{t}, \bm{p}_{t}, \eps_{t}; \theta) = \theta_{1} c_{t} + \theta_{2} c_{t}^{2} + \theta_{3}{I}_{t+1}^{2} + \theta_{4}\mathbbm{1}(\sum_{j > 0, k, l} a_{jklt} > 0 ) + \sum_{j,k,l} a_{jklt} ( \theta_{5} p_{jklt} + j\xi_{k} + j\delta_{l} + \eps_{jklt})
\end{equation*}
where $I_{t}$ is the inventory at the beginning of $t$ and $\bm{p}_{t}$ is a vector of prices. $\eps_{jklt}$ is the choice-specific utility shock following an i.i.d. Type I Extreme Value distribution. $(\theta_{1},\theta_{2})$ capture the marginal utility of consumption. $\theta_{3}$ captures the carrying cost that depends on the inventory at next period: $I_{t+1} := I_{t} + j_{t} - c_{t}$. The fixed cost of making a purchase is $\theta_{4}$. We impose an upper bound on the inventory $ I_{t} \in \{0 ,1, \cdots, I_{max}\}$ where\footnote{We impose the upper bound of 80 washes because it approximately equals one-third of the average annual washes (256) for the largest household size. This bound helps maintain computational feasibility while still capturing consumers' stockpiling.} $I_{max} = 80$. The consumption is bounded from below and above by $c_{min,t} := \max\{0, I_{t} + j_{t} - I_{max}\}$ and $c_{max,t} := I_{t} + j_{t}$, respectively. $\theta_{5}$ captures the price sensitivity. $\xi_{k}$ and $\delta_{l}$ are brand and bio/non-bio fixed effects per wash.

A household maximizes expected life-time utility:
\begin{align*}
    \max_{ \{\bm{a}_{t},c_{t} \}_{t=0}^{\infty} } & \bE \left[ \sum_{t=0}^{+\infty} \beta^{t} u(\bm{a}_{t},I_{t},\bm{p}_{t},\eps_{t}; \theta) | I_{0}, \bm{p}_{0} \right] \\
    \text{s.t.} \quad & c_{t} \in [c_{min,t}, c_{max,t}] \\ 
    & I_{t+1} := I_{t} + j_{t} - c_{t}
\end{align*}
where $\beta=0.99$. The state space is defined by $s_{t} := (I_{t},\bm{p}_{t})$. 

\subsection{Estimation Overview} \label{subsec: Estimation Overview}

\noindent This section presents the estimation procedure. We mainly follow the three-step procedure in \cite{hendel2006measuring}. We assume $\bm{p}_{t}$ follows an exogenous first-order Markov process. Therefore, inventory $I_{t}$ is the only endogenous state variable. This allows us to decompose the decisions into dynamic decisions $(c_{t},j_{t})$ and static decisions $(k_{t},l_{t})$, as the evolution of $I_{t}$ is determined only by $(c_{t},j_{t})$. Moreover, as shown in \cite{hendel2006measuring}, the choice probability of purchasing brand $k$ and bio/non-bio $l$ conditional on purchasing quantity $j$ has the conditional logit form:
\begin{equation*}
    Pr(k,l|\bm{p},j) = \frac{\exp(\theta_{5} p_{jkl} + j\xi_{k} + j\delta_{l})}{\sum_{k,l} \exp(\theta_{5} p_{jkl} + j\xi_{k} + j\delta_{l})}
\end{equation*}
allowing for simple estimation of $(\theta_{5},\xi_{k},\delta_{l})$. 

Then, the inclusive value of purchasing quantity $j$ is defined by:
\begin{equation*}
    \omega_{j,t} = \log \left[ \sum_{k,l} \exp(\theta_{5} p_{jklt} + j\xi_{k} + j\delta_{l}) \right]
\end{equation*}
where $\omega_{j,t}$ is the indirect utility of purchasing quantity $j$ at period $t$. We impose the inclusive value sufficiency (IVS) assumption as in \cite{hendel2006measuring}:
\begin{assumption}[Inclusive Value Sufficiency] \label{assumption: IVS}
    $F(\bm{\omega}_{t} | \bm{p}_{t-1})$ can be summarized by $F(\bm{\omega}_{t} | \bm{\omega}_{t-1})$ where $F$ is the conditional distribution function.
\end{assumption}
Given \Cref{assumption: IVS}, the lagged inclusive value $\bm{\omega}_{t-1}$ is a sufficient statistic to forecast $\bm{\omega}_{t}$. As a result, the state space can be reduced from $s_{t} := (I_{t},\bm{p}_{t})$ to $x_{t} := \{I_{t}, \bm{\omega}_{t}\}$. We forecast $\bm{\omega}_{t}$ using a VAR(1) and then discretize it into $5^{3}$ bins. The simplified dynamic programming problem based on the IVS assumption is:
\begin{equation*}
    V(x,\eps)  = \max_{j,c_{min} \leq c \leq c_{max}} \Bigl[ U(c,I,j;\theta) + \omega_{j} + \eps(j) + \beta \bE[V(x') | x, c,j] \Bigr]
\end{equation*}
where $U(c,I,j;\theta) = \theta_{1} c + \theta_{2} c^{2} + \theta_{3} I'^{2} + \theta_{4}\mathbbm{1}(j > 0)$, $I' := I + j - c$ is next-period inventory, $\eps$ is a 4-dimensional vector of i.i.d Type I Extreme Value distribution, and $V(x) = \bE_{\eps} V(x,\eps)$.

We decompose the joint optimization problem into sequential optimization problems: first choose purchase quantity $j$, then choose consumption. For the second problem, given state $x$, purchase quantity $j$, and $V(x)$, define the consumption function as:
\begin{equation} \label{eq: consumption function}
    C(x,j) := \argmax_{c_{min} \leq c \leq c_{max}} \Bigl[ U(c,I,j;\theta) + \omega_{j} + \beta \bE[V(x') | x, c,j] \Bigr]
\end{equation}
For the first problem, the choice probability of purchasing quantity $j$ is:
\begin{equation*}
    p(j|x) = \frac{\exp(v(x,j))}{\sum_{j'} \exp(v(x,j'))}
\end{equation*}
where $v(x,j) := U(C(x,j),I,j;\theta) + \omega_{j} + \beta \bE[V(x') | x, C(x,j),j]$. Combining these two steps, we have:
\begin{equation} \label{eq: Bellman operator}
    V(x) = \log \sum_{j} \exp( U(C(x,j),I,j;\theta) + \omega_{j} + \beta \bE[V(x') | x, C(x,j),j])
\end{equation}
For a given consumption function, we solve \eqref{eq: Bellman operator} by policy iteration with our model-adaptive approach. Then, we update the consumption function using \eqref{eq: consumption function}, and iterate until convergence (see \Cref{algorithm: Application}). 

As the argmax operator in \Cref{algorithm: Application} for discrete outcomes is not differentiable, we estimate the dynamic parameters by the Bayesian MCMC estimator. The estimation details are described in Appendix \ref{sec: Trace Plots of MCMC Draws}.

\subsection{Results and Implications}

\noindent We separately estimate the model for different household sizes. \Cref{table: First Step Brand and Bio Choice} presents the results of the conditional logit model. The coefficient $\theta_{5}$ measures the price sensitivity, which is negative and statistically significant at 5\% level across all household sizes. Single-person households exhibit the highest price sensitivity at -0.501, followed by households with three members at -0.416, households with two members at -0.345, and households with four members at -0.307. Five-person households have the lowest price sensitivity at -0.223. The estimates also suggest the heterogeneity in price sensitivity across different household sizes.

\begin{table}[h!]
    \centering
    \setlength{\tabcolsep}{10pt}
    \footnotesize
    \caption{Conditional Logit: Brand and Bio Choice Conditional on Quantity}
    \begin{threeparttable}
    \begin{tabular}{@{}lccccc@{}}
    \toprule
    \thead{Household Size} & \thead{1} & \thead{2} & \thead{3} & \thead{4} & \thead{5} \\
    \midrule
    \multicolumn{1}{c}{$\theta_{5}$} & -0.501 & -0.345 & -0.416 & -0.307 & -0.223 \\
    & (0.029) & (0.014) &(0.020) & (0.018) & (0.030) \\
    \midrule
    \makecell[l]{Brand FE \\ \hspace{0.5cm} * Quantity} & Yes & Yes & Yes & Yes & Yes \\
    \makecell[l]{Bio FE \\ \hspace{0.5cm} * Quantity} & Yes & Yes & Yes & Yes & Yes \\
    \bottomrule
    \end{tabular}
        \textbf{Note:} Standard errors are in parentheses. Estimates of a conditional logit model. We use the transaction price for the observed purchase and the price index for other choices.
    \end{threeparttable}
    \label{table: First Step Brand and Bio Choice}
\end{table}

In \Cref{table: Estimates of Dynamic Parameters} we report the estimates of dynamic parameters and computational times.\footnote{At each MCMC step, the value function obtained from the previous MCMC step is used as the initial guess for the current dynamic programming problem.} All estimates have the expected signs and are statistically significant at the 5\% level. The utility from consumption, determined jointly by the linear term ($\theta_1$) and quadratic term ($\theta_2$), shows varying patterns across household types. The linear coefficient $\theta_1$ ranges from its lowest value of 1.878 for four-person households to its peak of 3.583 for three-person households. Five-person households have the lowest magnitude for the quadratic term $\theta_2$ (-8.423). Single-person households have the lowest fixed cost of making a purchase ($\theta_4 = -4.195$), while five-person households have a relatively high fixed cost ($\theta_4 = -5.349$). The average time ranges from 0.23 to 0.48 minutes per Metropolis--Hastings step. The total time of the MCMC estimator ranges from 1.6 to 3.3 days, which confirms the computational efficiency of PI+MA in practice.

\begin{table}[h!]
    \centering
    \footnotesize
    \setlength{\tabcolsep}{10pt}
    \caption{Estimation Results and Computational Time}
    \begin{threeparttable}
    \begin{tabular}{@{}lccccc@{}}
    \toprule
    \thead{Household Size} & \thead{1} & \thead{2} & \thead{3} & \thead{4} & \thead{5} \\
    \midrule
    \multicolumn{1}{c}{$\theta_{1}$} & 2.069 & 2.663 & 3.583 & 1.878 & 1.909 \\
    & (0.386) & (0.186) & (0.276) & (0.284) & (0.577) \\
    \multicolumn{1}{c}{$\theta_{2}$} & -13.910 & -14.840 & -14.071 & -11.115 & -8.423 \\
    & (0.879) & (0.466) & (0.579) & (0.481) & (0.757) \\
    \multicolumn{1}{c}{$\theta_{3}$} & -3.230 & -2.242 & -3.215 & -3.474 & -4.246 \\
    & (0.238) & (0.079) & (0.154) & (0.365) & (1.035) \\
    \multicolumn{1}{c}{$\theta_{4}$} & -4.195 & -4.868 & -4.927 & -5.281 & -5.349 \\
    & (0.086) & (0.041) & (0.064) & (0.070) & (0.149) \\
    \midrule
    \thead{Avg Time (mins)} & 0.48 & 0.23 & 0.27 & 0.29 & 0.39 \\
    \midrule
    \thead{Total Time (days)} & 3.3 & 1.6 & 1.9 & 2.0 & 2.7 \\
    \bottomrule
    \end{tabular}
        \textbf{Note:} The first 8,000 of 10,000 MCMC draws are discarded as burn-in. The means are taken over last 2,000 MCMC draws. The standard errors (in parentheses) are the standard deviation of the MCMC draws. The computational time is the average time per Metropolis--Hastings step. The code runs on an Intel Xeon Gold 6240 CPU (2.60GHz) with 192GB RAM.
    \end{threeparttable}
    \label{table: Estimates of Dynamic Parameters}
\end{table}

To investigate the model fit, we compare the simulated and observed purchase behavior. For each household size, we simulate consumption and purchase behavior for the same number of households as observed in the data over 156 weeks (3 years).\footnote{The first 30\% periods of both the simulated and observed data are discarded as it is used to simulate the initial inventory for the estimation.} The simulated market shares in \Cref{table: Model Fitting Market Share} reasonably match the observed data. Figure \ref{fig:hazard-comparison} plots the hazard rates and their confidence intervals. We also estimate the static model without the carrying cost.\footnote{For the static model, households consume the entire pack when a purchase is made. Therefore, $U(\omega_{t},j_{t},\eps_{t}):= \theta_{1}^{static} j_{t} + \theta_{2}^{static} j_{t}^{2} + \omega_{jt} + \theta_{4}^{static} \mathbbm{1}(j_{t} > 0) + \eps_{jt}$. The hazard rate is then the probability of making a purchase.} All household sizes exhibit a similar pattern, with the hazard rates initially increasing until stabilizing at the highest level. For single-person and two-person households, the hazard rates increase until around 8 weeks, while the hazard rates increase until around 5 weeks for larger households. The hazard rates all peak around 15\%. The dynamic model outperforms the static model in capturing the hazard rates as the static model, by construction, cannot capture the increasing hazard rate observed in the data.

\begin{table}[h!]
    \centering
    \setlength{\tabcolsep}{10pt}
    \footnotesize
    \caption{Model Fitting for Different Household Sizes: Market Share}
    \label{table: Model Fitting Market Share}
    \begin{threeparttable}
    \begin{tabular}{@{}cccccc@{}}
    \toprule
    \textbf{Household Size} & & \textbf{Small} & \textbf{Medium} & \textbf{Large} \\
    \midrule
    \multirow{2}{*}{\centering\textbf{1}}
    & Observed & 5.76 (0.14) & 2.96 (0.11) & 1.09 (0.06) \\
    & Simulated & 5.48 (0.14) & 3.23 (0.11) & 0.93 (0.06) \\
    \midrule
    \multirow{2}{*}{\centering\textbf{2}}
    & Observed & 5.38 (0.08) & 3.90 (0.07) & 1.78 (0.05) \\
    & Simulated & 5.17 (0.08) & 4.13 (0.07) & 1.59 (0.04) \\
    \midrule
    \multirow{2}{*}{\centering\textbf{3}}
    & Observed & 5.27 (0.11) & 4.66 (0.11) & 2.14 (0.07) \\
    & Simulated & 5.16 (0.11) & 4.64 (0.11) & 2.07 (0.07) \\
    \midrule
    \multirow{2}{*}{\centering\textbf{4}}
    & Observed & 4.94 (0.11) & 4.66 (0.10) & 2.78 (0.08) \\
    & Simulated & 4.74 (0.10) & 4.85 (0.11) & 2.68 (0.08) \\
    \midrule
    \multirow{2}{*}{\centering\textbf{5}}
    & Observed & 4.90 (0.19) & 4.90 (0.19) & 2.71 (0.14) \\
    & Simulated & 4.86 (0.19) & 4.70 (0.19) & 2.50 (0.14) \\
    \bottomrule
    \end{tabular}
        \textbf{Note:} Values represent percentages. We simulate the same number of households as observed in the data over 156 weeks and discard the first 30\% periods of both the simulated and observed data. The standard errors of the simulated and observed data are in parentheses.
    \end{threeparttable}
\end{table}

\begin{figure}[h!]
    \centering
    \begin{minipage}{\textwidth}
        \centering
        \begin{minipage}{0.32\linewidth}
            \centering
            \includegraphics[width=\linewidth]{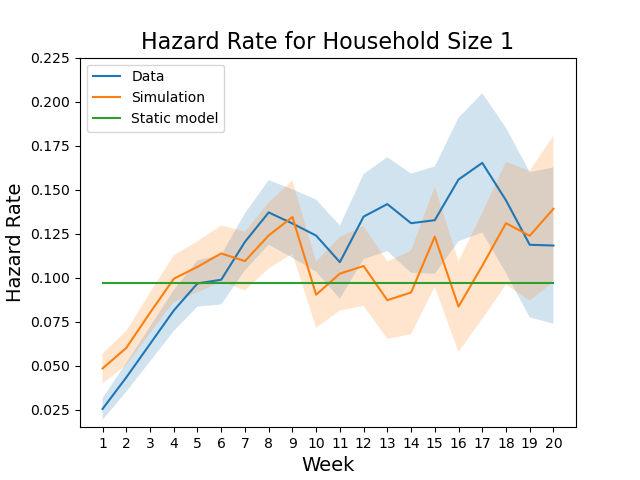}
        \end{minipage}
        \hspace{0.02\textwidth}
        \begin{minipage}{0.32\linewidth}
            \centering
            \includegraphics[width=\linewidth]{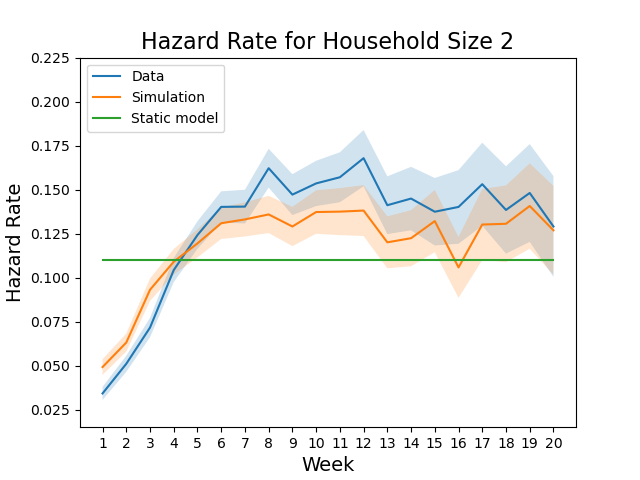}
        \end{minipage}
    \end{minipage}
    
    \begin{minipage}{\textwidth}
        \centering
        \begin{minipage}{0.32\linewidth}
            \centering
            \includegraphics[width=\linewidth]{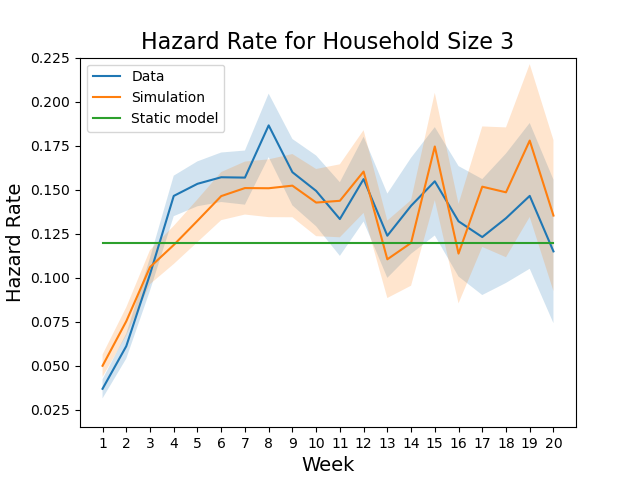}
        \end{minipage}
        \hfill
        \begin{minipage}{0.32\linewidth}
            \centering
            \includegraphics[width=\linewidth]{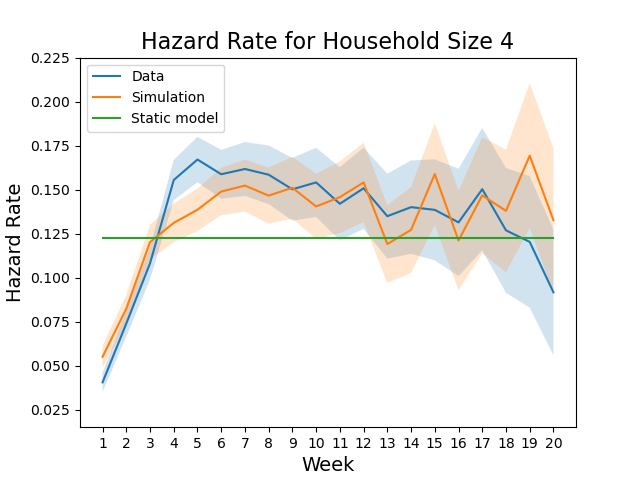}
        \end{minipage}
        \hfill
        \begin{minipage}{0.32\linewidth}
            \centering
            \includegraphics[width=\linewidth]{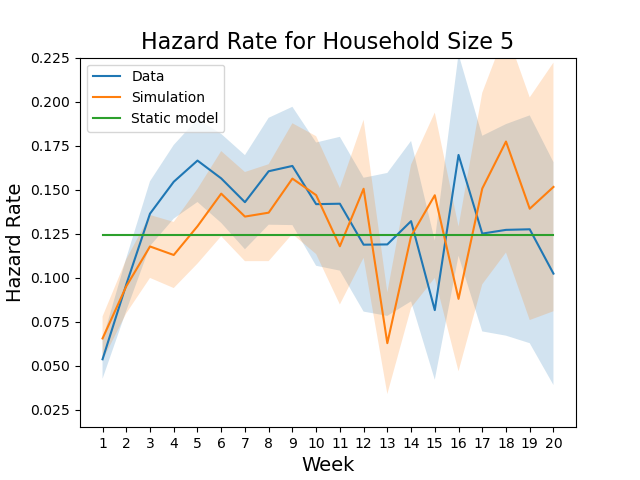}
        \end{minipage}
    \end{minipage}
    
    \begin{minipage}{\linewidth}
        \footnotesize
        \raggedright
        \textbf{Note:} For each household size, we simulate the same number of households as observed in the data over 156 weeks and discard the first 30\% periods of both the simulated and observed data. In each figure, the light blue and orange lines represent the 95\% confidence intervals of the hazard rates for the observed and simulated data, respectively.
    \end{minipage}
    \caption{Hazard Rates of Purchase Across Different Household Sizes}
    \label{fig:hazard-comparison}
\end{figure}

Long-run elasticities measure the effects of permanent price changes on market shares. \Cref{table: Elasticity} simulates the long-run elasticities for different household sizes. Own-price elasticities are larger in absolute value for larger pack sizes. They are all greater than 1 in absolute value, except the small pack for household size 5 (-0.742), indicating that the demand is generally elastic. For single-person households, the own-price elasticity for large packs is -4.067, meaning that if the prices of all large packs increase by 1\%, the market shares of large packs will decrease by 4.067\%. As household size increases, there is a trend towards lower price elasticities, suggesting that larger households are less sensitive to price changes. This trend can partially be explained by the price coefficients ($\theta_5$) as larger households are less price sensitive, except for household size 3, as shown in \Cref{table: First Step Brand and Bio Choice}.

The demand for each size is elastic in general, indicating consumers are likely to substitute out of the pack size that has increased in price and to other pack sizes or not purchase at all. For single-person households, the cross-price elasticity between medium and large packs is 0.601, higher than the cross-price elasticity between medium and small packs (0.256). This pattern continues across household sizes - for 2-person households, the medium-large cross-price elasticity is 0.378, for 3-person households it increases to 0.628, showing stronger substitution effects for larger sizes. The cross-elasticities between medium and small packs remain relatively stable across household sizes (ranging from 0.256 to 0.268 for households of 1-3 persons), while the substitution effects involving large packs tend to be stronger. These findings suggest that consumers are more likely to substitute between larger pack sizes when prices change, particularly in households with 2-3 members, though this effect diminishes somewhat for the largest households. Overall, the magnitudes of our long-run elasticities are consistent with prior estimates in the literature: \cite{hendel2006measuring} report long-run own-price elasticities between $-2.5$ and $-4.3$ for 128~oz.\ liquid detergent, which aligns well with our estimates for larger pack sizes.

\begin{table}[h!]
    \centering
    \setlength{\tabcolsep}{10pt}
    \footnotesize
    \caption{Long Run Elasticities for Different Household Sizes}
    \label{table: Elasticity}
    \begin{threeparttable}
        \begin{tabular}{@{}llccc@{}}
            \toprule
            \textbf{Household Size} & \textbf{} & \textbf{Small} & \textbf{Medium} & \textbf{Large} \\
            \midrule
            & \textbf{Small} & \entry{-1.619}{0.060} & \entry{\phantom{-}0.223}{0.058} & \entry{\phantom{-}0.437}{0.096} \\
            \multicolumn{1}{c}{\textbf{1}} & \textbf{Medium} & \entry{\phantom{-}0.256}{0.078} & \entry{-2.425}{0.087} & \entry{\phantom{-}0.601}{0.159} \\
            & \textbf{Large} & \entry{\phantom{-}0.102}{0.059} & \entry{\phantom{-}0.253}{0.079} & \entry{-4.067}{0.190} \\
            \midrule
            & \textbf{Small} & \entry{-1.128}{0.047} & \entry{\phantom{-}0.215}{0.034} & \entry{\phantom{-}0.162}{0.048} \\
            \multicolumn{1}{c}{\textbf{2}} & \textbf{Medium} & \entry{\phantom{-}0.258}{0.055} & \entry{-1.593}{0.049} & \entry{\phantom{-}0.378}{0.057} \\
            & \textbf{Large} & \entry{\phantom{-}0.174}{0.051} & \entry{\phantom{-}0.230}{0.055} & \entry{-2.602}{0.112} \\
            \midrule
            & \textbf{Small} & \entry{-1.463}{0.055} & \entry{\phantom{-}0.277}{0.038} & \entry{\phantom{-}0.231}{0.054} \\
            \multicolumn{1}{c}{\textbf{3}} & \textbf{Medium} & \entry{\phantom{-}0.268}{0.068} & \entry{-1.976}{0.052} & \entry{\phantom{-}0.628}{0.065} \\
            & \textbf{Large} & \entry{\phantom{-}0.187}{0.064} & \entry{\phantom{-}0.372}{0.069} & \entry{-2.992}{0.126} \\
            \midrule
            & \textbf{Small} & \entry{-1.012}{0.055} & \entry{\phantom{-}0.078}{0.051} & \entry{\phantom{-}0.217}{0.051} \\
            \multicolumn{1}{c}{\textbf{4}} & \textbf{Medium} & \entry{\phantom{-}0.185}{0.063} & \entry{-1.419}{0.046} & \entry{\phantom{-}0.330}{0.045} \\
            & \textbf{Large} & \entry{\phantom{-}0.173}{0.067} & \entry{\phantom{-}0.274}{0.059} & \entry{-2.220}{0.070} \\
            \midrule
            & \textbf{Small} & \entry{-0.742}{0.040} & \entry{\phantom{-}0.043}{0.030} & \entry{\phantom{-}0.103}{0.039} \\
            \multicolumn{1}{c}{\textbf{5}} & \textbf{Medium} & \entry{\phantom{-}0.102}{0.068} & \entry{-1.034}{0.056} & \entry{\phantom{-}0.178}{0.037} \\
            & \textbf{Large} & \entry{\phantom{-}0.127}{0.061} & \entry{\phantom{-}0.160}{0.064} & \entry{-1.657}{0.073} \\
            \bottomrule
        \end{tabular}
        \textbf{Note:} We increase the price of each pack size by 1\% to simulate the long-run elasticities. For each of the 2000 MCMC draws, we solve the dynamic programming problem, simulate 1000 households over 2000 weeks and discard the first 30\% periods. The means are taken over the 2000 MCMC draws. The standard errors (in parentheses) are the standard deviations of simulated elasticities across 2000 MCMC draws.
    \end{threeparttable}
 \end{table}
\section{Conclusion} \label{sec: Conclusion}

\noindent In this paper, we propose a model-adaptive approach to solve the linear system of fixed point equations of the policy valuation operator. Our method adaptively constructs the sieve space and chooses its dimension. It converges superlinearly, while conventional methods do not. We demonstrate through simulations that our model-adaptive sieves dramatically improve the computational efficiency of policy iteration and Newton--Kantorovich iterations, and open the door to the use of Bayesian MCMC estimators for DDC models.

We apply our method to analyze consumer demand for laundry detergent using Kantar's Worldpanel Take Home data. The empirical application confirms that our approach improves computational efficiency of policy iteration in practice, which opens the door to the estimation of DDC models by Bayesian MCMC estimators. To investigate the model fit, we simulate market shares and hazard rates for different household sizes. The model achieves a reasonable model fit. We also simulate the long-run elasticities. The results show the heterogeneous substitution patterns across different household sizes.

\appendix
\section{Proofs} \label{sec: Appendix_proof}

\noindent Let $\mK:= \mT + \mT^{*} - \mT \mT^{*}$. The constants in this section can vary from line to line.

\begin{proof}[\textbf{Proof of \Cref{theorem: unique solution on nu}}] \label{sec: Proof of theorem: unique solution on nu}
\begin{enumerate}[label=(\roman*)]
    \item By \Cref{lemma: invertibility of I - T}, it suffices to show that $(\mI - \mT^{*})$ is invertible on $L^{2}(\bX)$. By \Cref{assumption: HS norm}, we have $(\mI - \mT)$ is a bounded operator on $L^{2}(\bX)$, since $\|\mI - \mT\|_{op} \leq 1 + \|\mT\|_{op} \leq 1 + \|\mT\|_{HS} < + \infty$ where we used $\mT$ is a Hilbert--Schmidt operator as shown in \Cref{lemma: properties of T}. Thus, it is continuous by \cite{kress_linear_2014} Theorem 2.3. By \Cref{lemma: separable Hilbert space}, $L^{2}(\bX)$ is a separable Hilbert space. Moreover, $(\mI - \mT)$ is bijective as it is invertible on $L^{2}(\bX)$ as shown in \Cref{lemma: invertibility of I - T}. Thus, by the bounded inverse theorem (e.g., \cite{narici2010topological}), $(\mI - \mT)^{-1}$ is also bounded. By \cite{reed1980methods} Theorem VI.3 (d), the adjoint operator $(\mI - \mT^{*})$ is invertible and has a bounded inverse on $L^{2}(\bX)$.
    \item Let $V^{*}$ be the unique solution to $(\mI - \mT)V = u$ on $L^{2}(\bX,\mu)$. Note that $(\mI-\mT^{*})y^{*}$ is the solution to $(\mI - \mT)V = u $ on $L^{2}(\bX)$. Since the norm $\|\cdot\|_{\mu}$ and $\|\cdot\|$ are equivalent, $(\mI-\mT^{*})y^{*} \in L^{2}(\bX,\mu)$. Therefore, $(\mI-\mT^{*})y^{*}$ is also the solution on $L^{2}(\bX,\mu)$. As $(\mI - \mT)V = u$ has a unique solution on $L^{2}(\bX,\mu)$, we have $(\mI-\mT^{*})y^{*} = V^{*}$, $\mu$-a.s.
\end{enumerate}
\end{proof}

\begin{proof}[\textbf{Proof of \Cref{theorem: MSE + orthogonality}}] \label{sec: Proof of theorem: MSE + orthogonality}
\begin{enumerate}[label=(\roman*)]
    \item \Cref{lemma one} proves the orthogonality.
    \item Let $\| h \|^{2}_{(\mI - \mK)} := \langle (\mI - \mK)h, h \rangle$. By \cite{han2009theoretical} page 251, we have:
    \begin{equation*}
        y_{k} = \argmin_{y \in \text{span}\{r_{0}, r_{1}, \cdots, r_{k-1}\}} \| y - y^{*} \|_{(\mI - \mK)}
    \end{equation*}
    Note that:
    \begin{align*}
        \| y_{k} - y^{*} \|^{2}_{(\mI - \mK)} 
        & = \langle (\mI - \mT)(\mI - \mT^{*})(y_{k}-y^{*}), (y_{k}-y^{*}) \rangle \\
        & = \langle (\mI -  \mT^{*})(y_{k}-y^{*}) , (\mI -  \mT^{*})(y_{k}-y^{*}) \rangle \\
        & = \| (\mI -  \mT^{*})(y_{k}-y^{*}) \|^{2} = \| (\mI -  \mT^{*})y_{k} - V^{*}\|^{2}
    \end{align*}
    Therefore, we have $y_{k} = \argmin_{y \in \text{span}\{r_{0}, r_{1}, \cdots, r_{k-1}\}} \| (\mI -  \mT^{*})y_{k} - V^{*}\|$.
\end{enumerate}
\end{proof}

\begin{proof}[\textbf{Proof of \Cref{theorem: Convergence of Model-adaptive approach}}]  \label{sec: proof of Convergence of Model-adaptive approach}
    As $\mK$ is a compact self-adjoint operator shown in \Cref{lemma: properties of K}, its eigenvalues are real and countable. Without loss of generality, we assume the eigenvalues are ordered as follows: $ |\lambda_{1}| \geq |\lambda_{2}| \geq \cdots \geq 0$ where $\lim_{j \to \infty}|\lambda_{j}| = 0$. The eigenvalues of $(\mI - \mK)$ are $\{1-\lambda_{j}\}_{j \geq 1}$. By \cite{kress_linear_2014} Theorem 15.11 and Definition 3.8, we have $\Delta := \|\mI - \mK\|_{op} = \sup_{j} (1-\lambda_{j})$. Since $(\mI - \mK)$ is positive definite, we have $\Delta > 0$. By \cite{kress_linear_2014} Theorem 3.9, zero is the only possible accumulation point of the eigenvalues $ \{\lambda_{j} \}_{j \geq 1}$. Thus, $\delta := \inf_{j}(1-\lambda_{j}) > 0$. The eigenvalues of $(\mI - \mK)^{-1}$ are $\{\frac{1}{1-\lambda_{j}}\}_{j \geq 1}$, implying $\|(\mI - \mK)^{-1}\|_{op} = \frac{1}{\inf_{j}(1-\lambda_{j})} = \frac{1}{\delta}$. By \Cref{lemma: separable Hilbert space,lemma: properties of K}, conditions in \cite{han2009theoretical} Theorem 5.6.2 are satisfied for both discrete and continuous state spaces. Thus, the superlinear convergence and the approximation error upper bound hold for $\{y_{k}\}_{k \geq 1}$. Specifically, we have:
    \begin{equation*}
        \|y_{k} - y^{*}\| = O((\tau_{k})^{k}) = O((c_{k})^{k})
    \end{equation*}
    where $c_{k} := \frac{2}{k}\sum_{j=1}^{k}\frac{|\lambda_{j}|}{1-\lambda_{j}}$ and $\tau_{k}$ is defined in \Cref{proposition: tau_k decay rate}. As $(\mI - \mT^{*})$ is a bounded linear operator, we have:
    \begin{equation*}
        \|V_{k}^{ma} - V^{*}\| = \|(\mI - \mT^{*})y_{k} - (\mI - \mT^{*})y^{*}\| \leq \|\mI - \mT^{*}\|_{op} \|y_{k} - y^{*}\| = O((c_{k})^{k})
    \end{equation*}
    Then, \Cref{proposition: tau_k decay rate} applies. The monotonic decreasing directly follows from \Cref{theorem: MSE + orthogonality}.

    The third part follows from:
    \begin{equation*}
        \| r_{k} \| = \| u - (\mI - \mK)y_{k} \| = \| (\mI - \mK)(y^{*} - y_{k}) \| \leq \|\mI - \mK\|_{op} \|y^{*} - y_{k}\| = O((c_{k})^{k})
    \end{equation*}
    where we used $ \|\mI - \mK\|_{op} \leq 1 + \|\mK\|_{op} \leq 1 + \|\mK\|_{HS} < + \infty$ and $\mK$ is a Hilbert--Schmidt operator as shown in \Cref{lemma: properties of K}.
    
    For the last part, see \cite{atkinson1997numerical} page 299.
\end{proof}

\begin{proof}[\textbf{Proof of \Cref{theorem: linear convergence of SA and TD}}]\label{proof: linear convergence of SA and TD} \ 
    \begin{enumerate}
        \item By \Cref{lemma: TD minimax}, we have:
        \begin{equation*}
            \frac{C_{td,1}}{1+\beta} k^{-\frac{\alpha}{d}} \leq \|V^{td}_{k} - V^{*}\|_{\mu} \leq \frac{C_{td,2}}{1-\beta} k^{-\frac{\alpha}{d}}
        \end{equation*}
        For the upper bound on $R$, since $\|V^{td}_{k} - V^{*}\|_{\mu} \to 0$, by norm equivalence under \Cref{assumption: stationary distribution}, $\|V^{td}_{k} - V^{*}\| \to 0$. Thus $R \leq 1$ by definition.
        For the lower bound on $R$, by norm equivalence:
        \begin{equation*}
            \| V^{td}_{k} - V^{*} \|^{\frac{1}{k}} \geq \left(\frac{1}{\sqrt{C_{\mu,2}}} \|V^{td}_{k} - V^{*}\|_{\mu}\right)^{\frac{1}{k}} \geq \left(\frac{C_{td,1}}{(1+\beta)\sqrt{C_{\mu,2}}}\right)^{\frac{1}{k}} k^{-\frac{\alpha}{dk}}
        \end{equation*}
        As $k \to +\infty$, $\left(\frac{C_{td,1}}{(1+\beta)\sqrt{C_{\mu,2}}}\right)^{\frac{1}{k}} \to 1$ and $k^{-\frac{\alpha}{dk}} = e^{-\frac{\alpha}{d}\frac{\ln k}{k}} \to 1$. Therefore:
        \begin{equation*}
            \limsup_{k \to + \infty} \| V^{td}_{k} - V^{*} \|^{\frac{1}{k}} \geq 1
        \end{equation*}
        Combining, $R = 1$, i.e., the convergence is sublinear.
        \item Note that: $\|V^{sa}_{k} - V^{*}\|_{\mu} \leq \beta^{k} \|V^{sa}_{0} - V^{*}\|_{\mu}$.
            Therefore, we have:
            \begin{equation*}
                \limsup_{k \to + \infty} \| V^{sa}_{k} - V^{*} \|^{\frac{1}{k}} \leq \limsup_{k \to + \infty} (\frac{1}{\sqrt{C_{\mu,1}}}\|V^{sa}_{k} - V^{*} \|_{\mu})^{\frac{1}{k}} \leq \limsup_{k \to + \infty} (\frac{1}{\sqrt{C_{\mu,1}}}\|V^{sa}_{0} - V^{*} \|_{\mu})^{\frac{1}{k}} \beta = \beta
            \end{equation*}
    \end{enumerate}
\end{proof}

\begin{proof}[\textbf{Proof of \Cref{theorem: MA convergence1}}] \label{sec: Proof of theorem: MA convergence1}
    Combining Lemmas \ref{Lemma: MA convergence quasi MC 1}, \ref{Lemma: MA convergence quasi MC 2} and \ref{Lemma: MA convergence MC 2} gives:
    \begin{itemize}
        \item If a low-discrepancy grid is used, then:
        \begin{equation*}
            \|V_{M} - V^{*}_{M}\|_{\bM} = O(\frac{(\log M)^{d-1}}{M})
        \end{equation*}
        \item If a regular grid is used, then:
        \begin{equation*}
            \|V_{M} - V^{*}_{M}\|_{\bM} = O(M^{-\frac{\alpha}{d}})
        \end{equation*}
    \end{itemize}
    where $V_{M}$, $V_{M}^{*}$, $\| \cdot \|_{\bM}$ are defined in \Cref{Lemma: MA convergence quasi MC 1}.

    We only prove the bound for the low-discrepancy grid. The proof for the regular grid is similar. For the approximate solution $\widehat{V}^{ma}_{k}$ to $V_{M}$, by a slight modification of notations of \Cref{theorem: Convergence of Model-adaptive approach}, we have:
    \begin{equation*}
        \|\widehat{V}^{ma}_{k} - V_{M}\|_{\tilde{\bM}} = O((c_{1,k,M})^{k})
    \end{equation*}
    where $\|\cdot\|_{\tilde{\bM}}$ is the Euclidean norm of a $\bM$-dimensional vector. Since $\|\cdot\|_{\tilde{\bM}}$ is equivalent to $\|\cdot\|_{\bM}$, we have:
    \begin{equation*}
        \|\widehat{V}^{ma}_{k} - V_{M}\|_{\bM} = O((c_{1,k,M})^{k})
    \end{equation*}
    Thus, for $x \in \bM$, we have:
    \begin{equation*}
        \|\widehat{V}^{ma}_{k} - V^{*}_{M}\|_{\bM} = O((c_{1,k,M})^{k} + \frac{(\log M)^{d-1}}{M})
    \end{equation*}
    Then, for $x \in \bX \backslash \bM$, by \Cref{lemma: K QMC2}, and similar arguments as in the proof of \Cref{Lemma: MA convergence quasi MC 2}, we have:
    \begin{align*}
        | \widehat{V}^{ma}_{k}(x) - V^{*}(x) | 
        & \leq |\beta \sum_{i} \frac{f(x_{i}|x)}{\sum_{j}f(x_{j}|x)} (\widehat{V}^{ma}_{k}(x_{i}) - V^{*}(x_{i}))| \\
        & \hspace{2cm} + |\beta \sum_{i} \frac{f(x_{i}|x)}{\sum_{j}f(x_{j}|x)} V^{*}(x_{i}) - \beta \int f(x'|x) V^{*}(x') dx'| \\
        & \leq \beta \|\widehat{V}^{ma}_{k} - V^{*}_{M}\|_{\bM} + O(\frac{(\log M)^{d-1}}{M}) \\
        & = O((c_{1,k,M})^{k} + \frac{(\log M)^{d-1}}{M} + \frac{(\log M)^{d-1}}{M})
    \end{align*}
    Since the bound is uniformly over $x$, we have:
    \begin{equation*}
        \| \widehat{V}^{ma}_{k} - V^{*} \| = O((c_{1,k,M})^{k} + \frac{(\log M)^{d-1}}{M})
    \end{equation*}
    Furthermore, note that the discretized transition density has continuous partial derivative and the kernel for $(\hat{\mT}_{M} + \hat{\mT}_{M}^{*} - \hat{\mT}_{M}\hat{\mT}_{M}^{*})$ is symmetric. Therefore, \cite{atkinson1997numerical} page 299 applies. As the numerical operator depends on $M$, the constant also depends on $M$.
    
    For regular grids, the proof is the similar with \Cref{lemma: K QMC2} replaced by \Cref{lemma: K MC2}.
\end{proof}

\begin{proof}[\textbf{Proof of \Cref{theorem: MA convergence} and \Cref{corollary: MA convergence2}}] \label{proof: MA convergence}
    In the following, we use the notation $c_{k,M}$ and $\lambda_{k,M}$ instead of $c_{1,k,M}$ ($c_{2,k,M}$) and $\lambda_{1,k,M}$ ($\lambda_{2,k,M}$) as the proof is the same.

    Under \Cref{assumption A1-A3} and by \Cref{lemma: Operator convergence}, conditions A1-A3 in \cite{atkinson1975convergence} are satisfied. Recall $\mK$ is a compact self-adjoint operator. By \cite{han2009theoretical} Theorem 2.8.15, the index of a self-adjoint compact operator is 1. By \cite{atkinson1975convergence}, we have for $k \leq p$ and sufficiently large $M$:
    \begin{equation*}
        |\lambda_{k,M} - \lambda_{k}| \leq C_{k}\max_{i \leq J_{k}} \|\mK \phi_{i,k} - \hat{\mK}_{M}\phi_{i,k}\| \leq C_{*}  \|\mK \phi_{i,k} - \hat{\mK}_{M}\phi_{i,k}\|
    \end{equation*}
    where $C_{k}$ is a finite constant that depends on $\lambda_{k}$, and $C_{*}:= \max_{k \leq p} C_{k}$. Under \Cref{assumption: Operator convergence 1}, and by \Cref{lemma: K QMC,lemma: HK product}, we have:
    \begin{equation*}
        \max_{i \leq J_{k}} \|\mK \phi_{i,k} - \hat{\mK} \phi_{i,k}\| = O( \frac{(\log M)^{d-1}}{M})
    \end{equation*}
    Therefore, we have $\lambda_{k,M}, \lambda_{k}$ have the same sign for sufficiently large $M$ for all $k \leq p$.

    It then follows that $\delta_{M,p} := \min_{k \leq p} \{ 1-\lambda_{k,M} \}$ is bounded away from zero for all sufficiently large $M$, say $\delta_{M,p} \geq C_{\delta} > 0$. Moreover, $\Delta_{M,p} := \max_{k \leq p} \{ 1-\lambda_{k,M} \}$ is bounded above for all sufficiently large $M$, say $\Delta_{M,p} \leq C_{\Delta} $. Then, we have for sufficiently large $M$: $\frac{\Delta_{M,p}}{\delta_{M,p}} \leq \frac{C_{\Delta}}{C_{\delta}}$.
    Moreover, we can compare $c_{k}$ with $c_{k,M}$:
    \begin{align*}
        |c_{k} - c_{k,M}|
        & \leq \frac{2}{k} \sum_{i=1}^{k} |\frac{\lambda_{i} - \lambda_{i,M}}{(1-\lambda_{i})(1-\lambda_{i,M})}|
        \leq \frac{1}{\delta \delta_{M,p}} \frac{2}{k} \sum_{i=1}^{k} |\lambda_{i} - \lambda_{i,M}| \\
        & \leq \frac{C_{*}}{\delta C_{\delta}} \frac{2}{k} \sum_{j=1}^{k}\max_{i \leq J_{j}} \|\mK \phi_{i,j} - \hat{\mK} \phi_{i,j}\|
        = O( \frac{(\log M)^{d-1}}{M})
    \end{align*}
    where the first inequality used that $\lambda_{i}, \lambda_{i,M}$ have the same sign for sufficiently large $M$.

    Note that:
    \begin{equation*}
        |(c_{k})^{k} - (c_{k,M})^{k}| = |c_{k} - c_{k,M}| |\sum_{i=1}^{k}(c_{k})^{k-i}(c_{k,M})^{i-1}|
    \end{equation*}

    By \Cref{proposition: tau_k decay rate}, there exists $k_{*}$ such that $c_{k} < 0.9$ for $k \geq k_{*}$. Moreover, there exists $C_{c_{p^{*}}} < + \infty$ such that $c_{k} < C_{c_{p^{*}}}$ for $k < k_{*}$. We may assume $p \geq k_{*}$. Otherwise, we can only consider $k \leq k_{*}$ in the following proof. Since $|c_{k} - c_{k,M}| = O(\frac{(\log M)^{d-1}}{M})$ for $k \leq p$ for all sufficiently large $M$, we may choose $M$ sufficiently large such that $c_{k,M} < 0.95$ for $p \geq k \geq k_{*}$ and $c_{k,M} < C_{c_{p^{*}}}$ for $k < k_{*}$. Then, for $ p \geq k \geq k_{*}$ and sufficiently large $M$, we have:
    \begin{equation*}
        |\sum_{i=1}^{k}(c_{k})^{k-i}(c_{k,M})^{i-1}| \leq k \times 0.95^{k-1} < 8
    \end{equation*}
    For $k < k_{*}$ and sufficiently large $M$, we have:
    \begin{equation*}
        |\sum_{i=1}^{k}(c_{k})^{k-i}(c_{k,M})^{i-1}| \leq k_{*} \times C_{c_{p^{*}}}^{k_{*}-1} = O(1)
    \end{equation*}
    Therefore, we have for $k \leq p$ and sufficiently large $M$:
    \begin{equation*}
        (c_{k,M})^{k} \leq (c_{k})^{k} + |(c_{k})^{k} - (c_{k,M})^{k}| 
        = (c_{k})^{k} + |c_{k} - c_{k,M}| |\sum_{i=1}^{k} (c_{k})^{k-i}(c_{k,M})^{i-1}| = O((c_{k})^{k} + \frac{(\log M)^{d-1}}{M})
    \end{equation*}

    For \Cref{corollary: MA convergence2}, for sufficiently large $M$, we have for $k \leq p$:
    \begin{align*}
        (\tau_{k,M})^{k} 
        = (\frac{\Delta_{M,p}}{\delta_{M,p}})^{\frac{3}{2}} (c_{k,M})^{k}
        \leq (\frac{C_{\Delta}}{C_{\delta}})^{\frac{3}{2}} (c_{k,M})^{k}
        = O((c_{k})^{k} +  \frac{(\log M)^{d-1}}{M})
    \end{align*}

    Therefore, we have for $k \leq p$ and sufficiently large $M$:
    \begin{equation*}
        \|V_{M} - V^{*}_{M}\|_{\bM} = O(\frac{(\log M)^{d-1}}{M} + (\tau_{k,M})^{k}) = O((c_{k})^{k} + \frac{(\log M)^{d-1}}{M})
    \end{equation*}
    the proof is then similar to that of \Cref{theorem: MA convergence1}.

    For regular grids, the proof is the similar with \Cref{assumption: Operator convergence 1}, \Cref{lemma: K QMC,lemma: HK product} replaced by \Cref{assumption: Operator convergence 2} and \Cref{lemma: K MC2}. Furthermore, the product of two functions in $\mW_{C}^{\alpha}([0,1]^{d})$ is in $\mW_{C'}^{\alpha}([0,1]^{d})$ for some $C'$ by \Cref{lemma: Holder product}.
\end{proof}

\subsection{Supporting Lemmas} \label{sec: Supporting Lemmas}

\begin{lemma} \label{lemma: separable Hilbert space}
    $L^{2}(\bX, \nu_{Leb})$ is a separable Hilbert space.
\end{lemma}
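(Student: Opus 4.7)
The plan is to verify the two properties in sequence: first that $L^{2}(\bX,\nu_{Leb})$ is a Hilbert space, and second that it admits a countable dense subset. Both are classical results; I will just sketch which standard tools to invoke in each case of Assumption \ref{assumption: discrete or unit cube}.

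For the Hilbert space structure, the inner product $\langle f,g\rangle = \int f\bar g \, d\nu_{Leb}$ is well-defined by the Cauchy–Schwarz inequality, and completeness of $L^{2}$ with respect to the induced norm is precisely the Riesz–Fischer theorem (see, e.g., Folland, \textit{Real Analysis}, Theorem 6.6). So the first half of the statement is immediate in both the discrete and continuous cases.

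For separability, I would split on the two cases allowed by \Cref{assumption: discrete or unit cube}. When $\bX$ is discrete (finite or countable), $L^{2}(\bX,\nu_{Leb})$ is either finite-dimensional or isometrically isomorphic to $\ell^{2}(\bX)$, which is separable since the set of finitely supported sequences with rational entries is countable and dense. When $\bX=[0,1]^{d}$, I would exhibit a countable dense subset such as the polynomials in $d$ variables with rational coefficients: the Stone–Weierstrass theorem gives uniform density of real polynomials in $C([0,1]^{d})$, uniform convergence on $[0,1]^{d}$ implies $L^{2}$-convergence (since $\nu_{Leb}([0,1]^{d})<\infty$), and $C([0,1]^{d})$ is itself dense in $L^{2}([0,1]^{d})$ by the standard approximation of $L^{2}$ functions by continuous functions (Folland, Proposition 7.9). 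Restricting to rational coefficients and then invoking the density of $\bQ$ in $\bR$ yields a countable dense subset.

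There is no real obstacle; the only mild care needed is to handle the discrete and continuous cases separately because the reference measure behaves differently (counting measure in the discrete case, Lebesgue measure in the continuous case). Since the statement is standard, I would in practice just cite a textbook reference such as \cite{reed1980methods} Theorem II.10 for the continuous case and the elementary $\ell^{2}$ argument for the discrete case, rather than reproducing the density proof in full.
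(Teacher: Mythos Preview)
Your argument is correct, but it follows a different route from the paper. You give a constructive proof of separability by exhibiting explicit countable dense subsets: rational-coefficient polynomials via Stone--Weierstrass for $\bX=[0,1]^{d}$, and finitely supported rational sequences for the discrete case. The paper instead argues abstractly through the measure-theoretic machinery: it shows the Borel $\sigma$-algebra on $[0,1]^{d}$ (as a restriction of $\mR^{d}$) is countably generated, so $(\bX,\mR^{d}\downharpoonright_{[0,1]^{d}},\nu_{Leb})$ is a separable measure space, and then invokes a general result (Tao, \emph{Epsilon of Room}, Exercise 1.3.9) that $L^{2}$ of a separable measure space is separable; the discrete case is dispatched by the same countably-generated-$\sigma$-algebra observation. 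Your approach is more hands-on and self-contained, requiring only Stone--Weierstrass and the density of $C([0,1]^{d})$ in $L^{2}$; the paper's approach is more abstract but generalizes immediately to any $L^{p}$ over a countably generated $\sigma$-finite measure space without constructing a dense set by hand.
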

\begin{proof}
    First, suppose $\bX = [0,1]^{d}$. By \cite{durrett2019probability} exercise 1.1.3, the Borel subsets of $\bR^{d}$, $\mR^{d}$ is countably generated. By \cite{tao2011introduction} exercise 1.4.12, the restriction of the $\sigma$-algebra $\mR^{d}\downharpoonright_{[0,1]^{d}}$ is a $\sigma$-algebra on the subspace $[0,1]^{d} \subset \bR^{d}$. Then, the $\sigma$-algebra of $[0,1]^{d}$, $\mR^{d}\downharpoonright_{[0,1]^{d}}$, can also be countably generated. The measure space $(\bX, \mR^{d}\downharpoonright_{[0,1]^{d}}, \nu_{Leb})$ is a separable measure space. By \cite{tao2022epsilon} exercise 1.3.9, $L^{2}(\bX, \nu_{Leb})$ is a separable space. By \cite{rudin1987real} Example 4.5(b), $L^{2}(\bX, \nu_{Leb})$ is a Hilbert space with the inner product $\langle h, g \rangle$. Thus, $L^{2}(\bX, \nu_{Leb})$ is a separable Hilbert space. If $\bX$ has discrete support, then $L^{2}(\bX, \nu_{Leb})$ is also a separable Hilbert space as the $\sigma$-algebra can be countably generated.
\end{proof}

\begin{lemma} \label{lemma: invertibility of I - T}
    Under \Cref{assumption: stationary distribution}, $(\mI - \mT) V = u $ has a unique solution on $L^{2}(\bX)$.
\end{lemma}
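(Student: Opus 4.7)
The plan is to reduce the claim to the standard Banach fixed-point argument on $L^{2}(\bX,\mu)$, which the paragraph following \Cref{assumption: stationary distribution} records as a consequence of $\mT$ being a $\beta$-contraction. The key observation is that under \Cref{assumption: bounded mu}, the norms $\|\cdot\|$ and $\|\cdot\|_{\mu}$ are equivalent, so $L^{2}(\bX)$ and $L^{2}(\bX,\mu)$ coincide as vector spaces. In the continuous case the bounds $C_{\mu,1}\leq d\mu/d\nu_{Leb}\leq C_{\mu,2}$ give $C_{\mu,1}\|h\|^{2}\leq \|h\|_{\mu}^{2}\leq C_{\mu,2}\|h\|^{2}$ directly, while in the discrete case $\mu(x)\geq C_{\mu,1}>0$ together with $\mu$ being a probability measure forces $\bX$ to be finite, so both spaces equal $\mathbb{R}^{|\bX|}$ with equivalent inner products.

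First, I would verify that $u\in L^{2}(\bX)$: by \Cref{assumption: bounded u} we have $|u|\leq C_{u}$, and $\bX$ has finite Lebesgue measure (the unit cube or a finite set), so $u$ is square-integrable. Since $\mT$ is a $\beta$-contraction on $L^{2}(\bX,\mu)$, the Neumann series $V^{*}:=\sum_{k\geq 0}\mT^{k}u$ converges in $\|\cdot\|_{\mu}$ to the unique solution of $(\mI-\mT)V=u$ on $L^{2}(\bX,\mu)$. By the norm equivalence above, the same partial sums converge in $\|\cdot\|$, so $V^{*}\in L^{2}(\bX)$.

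To finish, I would show that $V^{*}$ satisfies the equation on $L^{2}(\bX)$ and that the solution is unique there. The defining formula $\mT V(x)=\beta\int f(x'|x)V(x')\,dx'$ is identical on the two spaces (the measure $\mu$ enters only through the norm), so the identity $(\mI-\mT)V^{*}=u$ which holds $\mu$-a.e.\ also holds Lebesgue-a.e.\ by \Cref{assumption: bounded mu}. For uniqueness on $L^{2}(\bX)$: any other solution automatically lies in $L^{2}(\bX,\mu)$ and hence must equal $V^{*}$.

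The only subtle step is checking that $\mT$ is a bounded operator on $L^{2}(\bX)$, so that the equation is well-posed on the Lebesgue space in the first place. In the continuous case this follows from \Cref{assumption: HS norm} together with Cauchy--Schwarz: $|\mT V(x)|\leq \beta C_{f}\int |V(x')|\,dx'\leq \beta C_{f}\sqrt{\nu_{Leb}(\bX)}\,\|V\|$, so $\mT V$ is uniformly bounded and a fortiori in $L^{2}(\bX)$. In the discrete case $\mT$ is just a matrix and there is nothing to check. I do not anticipate a deeper obstacle: the lemma is essentially norm-equivalence bookkeeping on top of the contraction argument that already gives existence and uniqueness on $L^{2}(\bX,\mu)$.
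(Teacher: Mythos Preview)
Your proposal is correct and follows essentially the same route as the paper: both arguments rest on the norm equivalence between $\|\cdot\|$ and $\|\cdot\|_{\mu}$ furnished by \Cref{assumption: bounded mu}, transferring existence and uniqueness from $L^{2}(\bX,\mu)$ (where the $\beta$-contraction gives them) to $L^{2}(\bX)$. Your write-up is in fact more thorough than the paper's---you explicitly verify $u\in L^{2}(\bX)$, spell out the Neumann-series convergence in both norms, and check boundedness of $\mT$ on $L^{2}(\bX)$---whereas the paper compresses the existence step to a single ``Therefore'' and handles uniqueness by the same contradiction you outline.
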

\begin{proof}
    Under \Cref{assumption: stationary distribution}, the norms $\|\cdot\|_{\mu}$ and $\|\cdot\|$ are equivalent, i.e., there exist $C_{1}$, $C_{2} > 0$ such that $ C_{1} \|\cdot\| \leq \|\cdot\|_{\mu} \leq C_{2} \|\cdot\|$. Therefore, the equation has a solution $V^{*}$. We prove the uniqueness by contradiction. Suppose it has two solutions on $L^{2}(\bX)$. That is, there exist $V_{1}$, $V_{2} \in L^{2}(\bX)$ such that $\|V_{1} - V_{2}\| > 0$, $V_{1} = u + \mT V_{1}$, and $V_{2} = u + \mT V_{2}$. Then, we have $ 0 < C_{1} \| V_{1} - V_{2}\| \leq \| V_{1} - V_{2}\|_{\mu} \leq C_{2} \| V_{1} - V_{2}\| < + \infty$. Moreover, we have $V_{1}$, $V_{2} \in L^{2}(\bX, \mu)$ since $\|\cdot\|_{\mu}$ and $\|\cdot\|$ are equivalent. This means the equation also has two solutions on $L^{2}(\bX, \mu)$, contradiction.
\end{proof}

\begin{lemma}\label{lemma: kernel of adjoint operator}
    The adjoint operator $\mT^{*}$ with respect to the inner product $\langle \cdot, \cdot \rangle$ is given by:
    \begin{equation*}
        \mT^{*} V(x) = \beta \int V(x') f(x|x') dx'
    \end{equation*}
\end{lemma}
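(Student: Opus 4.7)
The plan is to verify the claim directly from the defining property of the adjoint operator. Recall that $\mT^{*}$ is the unique bounded operator on $L^{2}(\bX)$ satisfying $\langle \mT V, W \rangle = \langle V, \mT^{*} W \rangle$ for all $V, W \in L^{2}(\bX)$. Since $\mT$ is bounded on $L^{2}(\bX)$ (as noted in the proof of \Cref{theorem: unique solution on nu}, $\mT$ is Hilbert--Schmidt under \Cref{assumption: stationary distribution}), the adjoint exists and is uniquely determined by this identity, so it suffices to exhibit $\langle \mT V, W \rangle$ in the form $\langle V, \widetilde{\mT} W \rangle$ where $\widetilde{\mT}$ equals the operator claimed in the statement.

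First, I would write
\begin{equation*}
\langle \mT V, W \rangle = \int_{\bX} \left( \beta \int_{\bX} f(x'|x) V(x') \, dx' \right) W(x) \, dx = \beta \int_{\bX}\!\int_{\bX} f(x'|x) V(x') W(x) \, dx' \, dx.
\end{equation*}
Next, I would swap the order of integration by Fubini's theorem to obtain
\begin{equation*}
\langle \mT V, W \rangle = \int_{\bX} V(x') \left( \beta \int_{\bX} f(x'|x) W(x) \, dx \right) dx'.
\end{equation*}
Writing $x$ for the outer dummy variable and $x'$ for the inner one shows that the inner bracket is exactly $\mT^{*} W(x) = \beta \int V(x') f(x|x')\, dx'$ evaluated with $W$ in place of $V$, so that $\langle \mT V, W \rangle = \langle V, \mT^{*} W \rangle$. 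Uniqueness of the adjoint then identifies $\mT^{*}$ with the stated formula. The discrete case is identical with integrals replaced by sums and Fubini replaced by the elementary rearrangement of a finite (absolutely summable) double sum.

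The only real obstacle is justifying Fubini's theorem in the continuous case, which reduces to checking absolute integrability of the integrand. Using \Cref{assumption: HS norm}, which gives $\sup_{x',x} f(x'|x) \leq C_{f}$, together with the fact that $\bX = [0,1]^{d}$ has finite Lebesgue measure and the Cauchy--Schwarz inequality $\|h\|_{L^{1}} \leq \|h\|_{L^{2}}$ on a unit-measure space, I obtain
\begin{equation*}
\int_{\bX}\!\int_{\bX} \big| f(x'|x) V(x') W(x) \big| \, dx' \, dx \;\leq\; C_{f} \, \|V\|_{L^{1}} \|W\|_{L^{1}} \;\leq\; C_{f} \, \|V\| \, \|W\| \;<\; \infty,
\end{equation*}
so Fubini applies and the computation above is rigorous. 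This completes the proof.
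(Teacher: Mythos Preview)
Your proof is correct and follows essentially the same approach as the paper: both verify the adjoint identity $\langle \mT V, W\rangle = \langle V, \mT^{*}W\rangle$ by writing out the double integral and swapping the order of integration. You additionally supply the Fubini justification via \Cref{assumption: HS norm} and Cauchy--Schwarz, which the paper leaves implicit.
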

\begin{proof}
    For $\phi, \psi \in L^{2}(\bX)$, we have:
    \begin{align*}
        \langle \phi, \mT^{*} \psi \rangle
        & = \int \left[\int \beta f(x|x') \psi(x') dx' \right] \phi(x)dx = \int \beta f(x|x') \phi(x) \psi(x') dxdx' = \langle \mT \phi, \psi \rangle
    \end{align*}
\end{proof}

\begin{lemma} \label{lemma: properties of T}
    Under \Cref{assumption: stationary distribution}, we have:
    \begin{lemmaitems}
        \item $\mT$ maps $L^{2}(\bX)$ to $L^{2}(\bX)$. \label{theorem: self map}
        \item $\mT$ is a Hilbert--Schmidt operator and thus compact.  \label{theorem: compact operator}
    \end{lemmaitems}
\end{lemma}
\begin{proof}
    \begin{enumerate}
        \item Let $h \in L^{2}(\bX)$. Then, $ \|\mT h \|^{2} = \beta^{2} \int (\int f(x'|x) h(x')dx')^{2} dx \leq \beta^{2} C_{f}^{2} \|h\|^{2} < + \infty$
        \item Note that: $\|\mT\|_{HS}^{2} := \beta^{2} \int f^{2}(x'|x)dx'dx \leq \beta^{2} C^{2}_{f} < + \infty$. Therefore, $\mT$ is a Hilbert--Schmidt operator. Moreover, Hilbert--Schmidt operators are compact by \cite{carrasco2007linear} Theorem 2.32.
    \end{enumerate}
\end{proof}

\begin{lemma} \label{lemma: properties of K}
    Under \Cref{assumption: stationary distribution}, we have:
    \begin{lemmaitems}
        \item $\mK$ maps $L^{2}(\bX)$ to $L^{2}(\bX)$.
        \item $\mK$ is a self-adjoint operator.
        \item $\mK$ is a Hilbert--Schmidt operator and thus compact.
        \item $(\mI -\mK)$ is a positive definite operator.
    \end{lemmaitems}
\end{lemma}
\begin{proof}
    \begin{enumerate}[label=(\roman*)]
        \item Since $\mT$ maps $L^{2}(\bX)$ to itself, $\mT^{*}$ also does. Therefore, $\mK$ maps $L^{2}(\bX)$ to $L^{2}(\bX)$.
        \item Note that $\mK^{*} = (\mT + \mT^{*} - \mT\mT^{*})^{*} = \mT^{*} + \mT - \mT\mT^{*}$.
        \item Note that: for any $a, b, c \in \bR$, we have $(a+b+c)^{2} \leq 3(a^{2} + b^{2} + c^{2})$, which implies $\|\mK\|^{2}_{HS} \leq 3( \|\mT\|^{2}_{HS} + \|\mT^{*}\|^{2}_{HS} + \|\mT \mT^{*}\|^{2}_{HS})$. By \cite{conway2019course} page 267, we have: $\|\mT\|_{HS} = \|\mT^{*}\|_{HS} < + \infty$ and $\|\mT \mT^{*}\|_{HS} \leq \|\mT\|_{op} \|\mT^{*}\|_{HS} < + \infty$. Therefore, $\|\mK\|_{HS} < + \infty$, i.e., $\mK$ is a Hilbert--Schmidt operator. Hilbert--Schmidt operators are compact by \cite{carrasco2007linear} Theorem 2.32.
        \item Note that:
            $ \langle (\mI - \mT)(\mI - \mT^{*})y, y \rangle
            = \langle (\mI - \mT^{*})y, (\mI - \mT^{*})y \rangle
            = \| (\mI - \mT^{*}) y \|^{2} \geq 0$.
            If $\| (\mI - \mT^{*}) y\| = 0$, then $(\mI - \mT^{*})y = 0$. As shown in the proof of \Cref{theorem: unique solution on nu}, $(\mI - \mT^{*})$ is invertible on $L^{2}(\bX)$, we have $y = 0$. Therefore, $(\mI - \mT)(\mI - \mT^{*})$ is positive definite.
    \end{enumerate}
\end{proof}

\subsubsection{Supporting Lemmas for \Cref{theorem: MSE + orthogonality}}

\begin{lemma}
    The sequence $\{r_{i}\}_{i \geq 1}, \{s_{i}\}_{i \geq 1}$ generated by \eqref{Model-adaptive algorithm} satisfies:
    \begin{lemmaitems}[label=(\roman*), ref={\thelemma(\roman*)}]
        \item $\langle r_{i}, r_{j} \rangle = 0, \text{ for } i<j$. \label{lemma one}
        \item $\langle r_{i}, (\mI - \mK)s_{j} \rangle = 0, \text{ for } i \neq j, \ i \neq j+1$. \label{lemma two}
        \item $\langle s_{i}, (\mI - \mK)s_{j} \rangle = 0 \text{ for } i < j$.  \label{lemma three}
        \item $\langle r_{i}, (\mI - \mK)s_{i} \rangle = \langle s_{i}, (\mI - \mK)s_{i} \rangle $.  \label{lemma four}
    \end{lemmaitems}
\end{lemma}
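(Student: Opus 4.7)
The four identities are the standard conjugate-gradient orthogonality relations applied to the self-adjoint positive-definite operator $\mI-\mK=(\mI-\mT)(\mI-\mT^{*})$ on $L^{2}(\bX)$. My plan is to prove them by simultaneous induction on the iteration index, exploiting the single key algebraic identity
\begin{equation*}
    r_{k}=r_{k-1}-\alpha_{k-1}(\mI-\mK)s_{k-1},
\end{equation*}
which follows from $r_{k}=u-(\mI-\mK)y_{k}$ and $y_{k}=y_{k-1}+\alpha_{k-1}s_{k-1}$. I will also use the explicit formulas $\alpha_{k-1}=\|r_{k-1}\|^{2}/\langle s_{k-1},(\mI-\mK)s_{k-1}\rangle$ (rewriting the denominator via $\|(\mI-\mT^{*})s_{k-1}\|^{2}=\langle(\mI-\mT)(\mI-\mT^{*})s_{k-1},s_{k-1}\rangle$) and $\beta_{k-1}=\|r_{k}\|^{2}/\|r_{k-1}\|^{2}$, together with the fact that $(\mI-\mK)$ is self-adjoint and positive definite (Lemma on properties of $\mK$), so all denominators are nonzero whenever $r_{k-1}\neq 0$.

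The easiest identity to dispose of is (iv): expand $s_{i}=r_{i}+\beta_{i-1}s_{i-1}$ and take inner product with $(\mI-\mK)s_{i}$; the cross term $\beta_{i-1}\langle s_{i-1},(\mI-\mK)s_{i}\rangle$ vanishes once (iii) is established at $(i-1,i)$, so (iv) will be proved as part of the inductive step. For the main induction I will proceed by a joint hypothesis: assuming (i)--(iii) for all index pairs up to level $k-1$, I verify them at level $k$. The base case $k=1$ uses $s_{0}=r_{0}$ and the definition of $\alpha_{0}$:
\begin{equation*}
    \langle r_{1},r_{0}\rangle=\langle r_{0}-\alpha_{0}(\mI-\mK)s_{0},r_{0}\rangle=\|r_{0}\|^{2}-\alpha_{0}\langle s_{0},(\mI-\mK)s_{0}\rangle=0.
\end{equation*}

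For the inductive step, for (i) with $i<k$ I substitute the recursion for $r_{k}$ and split
\begin{equation*}
    \langle r_{k},r_{i}\rangle=\langle r_{k-1},r_{i}\rangle-\alpha_{k-1}\langle(\mI-\mK)s_{k-1},r_{i}\rangle.
\end{equation*}
When $i<k-1$ both terms vanish by the inductive hypothesis for (i) and (ii); when $i=k-1$ I substitute $r_{k-1}=s_{k-1}-\beta_{k-2}s_{k-2}$ in the second inner product and apply (iii) at level $k-1$, then use the definition of $\alpha_{k-1}$ to cancel $\|r_{k-1}\|^{2}$ exactly. For (ii) at level $j=k$, I use $s_{k}=r_{k}+\beta_{k-1}s_{k-1}$ and the newly established (i) to handle the $r_{k}$ piece, and the inductive hypothesis for (ii) to handle the $s_{k-1}$ piece; the excluded indices $i=k$ and $i=k-1$ are precisely those where cancellation would fail, which explains the hypothesis $i\neq j, i\neq j+1$. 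For (iii) with $i<k$, I expand $s_{k}=r_{k}+\beta_{k-1}s_{k-1}$, use self-adjointness to move $(\mI-\mK)$ onto $s_{i}$, and apply (ii) together with the definition of $\beta_{k-1}$ (which is tailored exactly so the $i=k-1$ case cancels).

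The main obstacle is purely bookkeeping: there are several sub-cases indexed by whether $i$ equals $k-1$, $k-2$, or is strictly smaller, and each requires invoking the precise definition of $\alpha_{k-1}$ or $\beta_{k-1}$ to force cancellation. Once the joint induction is set up carefully, every step reduces to substituting the two recursions and applying the preceding hypotheses. No deeper analytic input is needed beyond self-adjointness and positive definiteness of $\mI-\mK$, both of which are already available from the lemma on properties of $\mK$.
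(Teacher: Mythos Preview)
Your overall plan—simultaneous induction driven by the residual recursion $r_{k}=r_{k-1}-\alpha_{k-1}(\mI-\mK)s_{k-1}$ together with the rewritten step sizes—is the same as the paper's, and your treatment of (i) and (iv) matches. For (iii) you take a slightly different but valid route: the paper uses the closed form $s_{k}=\|r_{k}\|^{2}\sum_{q=0}^{k}r_{q}/\|r_{q}\|^{2}$ to compute $\langle r_{i},s_{k}\rangle=\langle r_{i+1},s_{k}\rangle$ directly and then reads off $\langle s_{i},(\mI-\mK)s_{k}\rangle=0$, whereas you expand $s_{k}$ and invoke (ii).

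There is, however, a genuine gap in your argument for (ii) at level $j=k$. After expanding $s_{k}=r_{k}+\beta_{k-1}s_{k-1}$ inside $\langle r_{i},(\mI-\mK)s_{k}\rangle$, you obtain the term $\langle r_{i},(\mI-\mK)r_{k}\rangle$ and claim that the newly established (i) handles it. But (i) only gives $\langle r_{i},r_{k}\rangle=0$, not $\langle r_{i},(\mI-\mK)r_{k}\rangle=0$; the latter is generally nonzero. You also never explicitly treat the other new half of (ii), namely $\langle r_{k},(\mI-\mK)s_{j}\rangle=0$ for $j\leq k-2$, which is precisely the ingredient your own (iii) argument consumes (this half is easy: write $(\mI-\mK)s_{j}=(r_{j}-r_{j+1})/\alpha_{j}$ and apply (i)). The correct ordering—essentially what the paper does—is: prove (i); then this easy half of (ii) with new residual index; then (iii) and (iv); and only \emph{afterwards} deduce the $j=k$ half of (ii) from (iii) by writing $r_{i}=s_{i}-\beta_{i-1}s_{i-1}$, so that $\langle r_{i},(\mI-\mK)s_{k}\rangle=\langle s_{i},(\mI-\mK)s_{k}\rangle-\beta_{i-1}\langle s_{i-1},(\mI-\mK)s_{k}\rangle=0$.
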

\begin{proof}
    Without loss of generality we assume the algorithm does not stop, i.e., $\|r_{i}\| > 0$ for all $i \leq k$. Therefore, $\alpha_{i} > 0$ for $i \leq k-1$.
    
    By induction, it is easy to show:
    \begin{equation*}
        s_{k} = r_{k} + \frac{\|r_{k}\|^{2}}{\|r_{k-1}\|^{2}}s_{k-1} = \cdots = \|r_{k}\|^{2} \sum_{i=0}^{k} \frac{r_{i}}{\|r_{i}\|^{2}}
    \end{equation*}
    Note that:
    \begin{align*}
        r_{k} 
        & = u - (\mI - \mK)y_{k} \\
        & = u - (\mI - \mK)(y_{k-1} + \alpha_{k-1}s_{k-1}) \\
        & = u - (\mI - \mK)y_{k-1} - \alpha_{k-1} (\mI - \mK) s_{k-1} \\
        & = r_{k-1} - \alpha_{k-1} (\mI - \mK) s_{k-1}
    \end{align*}

    Prove by induction, it holds for $r_{0},r_{1},s_{0},s_{1}$. Now, suppose it holds for $r_{0},\cdots,r_{k},s_{0},\cdots,s_{k}$. For $k+1$, it suffices to show:
    \begin{align}
        & \langle r_{i}, r_{k+1} \rangle = 0, \text{ for } i<k+1 \label{one} \\
        & \langle r_{k+1}, (\mI - \mK)s_{j} \rangle = 0, \text{ for } j < k \label{two} \\
        & \langle r_{i}, (\mI - \mK)s_{k+1} \rangle = 0, \text{ for } i < k+1 \label{three} \\
        & \langle s_{i}, (\mI - \mK)s_{k+1} \rangle = 0, \text{ for } i < k+1 \label{four} \\
        & \langle r_{k+1}, (\mI - \mK)s_{k+1} \rangle = \langle s_{k+1}, (\mI - \mK)s_{k+1} \rangle \label{five}
    \end{align}
    \begin{enumerate}
        \item For \eqref{one}, if $i < k$,
        \begin{align*}
            \langle r_{i}, r_{k+1} \rangle 
            & = \langle r_{i}, r_{k} - \alpha_{k} (\mI - \mK) s_{k} \rangle = \underbrace{\langle r_{i}, r_{k} \rangle}_{ = 0 \text{ by \ref{lemma one} with $j=k$}} - \alpha_{k} \underbrace{\langle r_{i}, (\mI - \mK) s_{k} \rangle}_{ = 0 \text{ by \ref{lemma two} with $j=k$}} = 0
        \end{align*}
        If $i = k$,
        \begin{align*}
            \langle r_{i}, r_{k+1} \rangle 
            & = \langle r_{k}, r_{k} - \alpha_{k} (\mI - \mK) s_{k} \rangle = \|r_{k}\|^{2} - \frac{\|r_{k}\|^{2}}{\langle (\mI - \mK)s_{k}, s_{k} \rangle} \underbrace{ \langle r_{k}, (\mI - \mK) s_{k} \rangle}_{ =  \langle s_{k}, (\mI - \mK) s_{k} \rangle \text{ by \ref{lemma four} with $i=k$} } = 0
        \end{align*}
        \item For \eqref{two}, and $j<k$, by \eqref{one}
        \begin{align*}
            0 & = \langle r_{k+1}, r_{j+1} \rangle \\
            & = \langle r_{k+1}, r_{j} - \alpha_{j} (\mI - \mK) s_{j} \rangle \\
            & = \langle r_{k+1}, r_{j} \rangle  - \alpha_{j} \langle r_{k+1}, (\mI - \mK) s_{j} \rangle \\
            & = - \alpha_{j} \langle r_{k+1}, (\mI - \mK) s_{j} \rangle
        \end{align*}
        Since $\alpha_{j} > 0$, we have $\langle r_{k+1}, (\mI - \mK) s_{j} \rangle = 0$.
        \item For \eqref{four} and $i < k+1$, note that by \ref{lemma one} with $j=i+1$
        \begin{equation*}
            \langle r_{i+1}, s_{k+1} \rangle = \langle r_{i+1}, \|r_{k+1}\|^{2} \sum_{q=0}^{k+1} \frac{r_{q}}{\|r_{q}\|^{2}} \rangle = \| r_{k+1} \|^{2}
        \end{equation*}
        Similarly, $\langle r_{i}, s_{k+1} \rangle = \| r_{k+1} \|^{2}$. Also,
        \begin{align*}
            \langle r_{i+1}, s_{k+1} \rangle
            & = \langle r_{i} - \alpha_{i} (\mI - \mK) s_{i}, s_{k+1} \rangle \\
            & = \langle r_{i}, s_{k+1} \rangle - \alpha_{i} \langle (\mI - \mK) s_{i}, s_{k+1} \rangle \\
            & = \langle r_{i}, s_{k+1} \rangle - \alpha_{i} \langle s_{i}, (\mI - \mK) s_{k+1} \rangle \\
            & = \|r_{k+1}\|^{2} - \alpha_{i} \langle s_{i}, (\mI - \mK) s_{k+1} \rangle
        \end{align*}
        Since $\alpha_{i} > 0$, we have $\langle s_{i}, (\mI - \mK) s_{k+1} \rangle = 0$ for $i < k+1$.
        \item For \eqref{five},
        \begin{align*}
            \langle s_{k+1}, (\mI - \mK)s_{k+1} \rangle
            & = \langle r_{k+1} + \beta_{k} s_{k}, (\mI - \mK)s_{k+1} \rangle \\
            & = \langle r_{k+1}, (\mI - \mK)s_{k+1} \rangle + \beta_{k} \underbrace{\langle s_{k}, (\mI - \mK)s_{k+1} \rangle}_{ = 0 \text{ by \eqref{four} with $i=k$}} \\
            & = \langle r_{k+1}, (\mI - \mK)s_{k+1} \rangle
        \end{align*}
        \item For \eqref{three} and $i < k+1$,
        \begin{align*}
            0
            & = \langle s_{i}, (\mI - \mK)s_{k+1} \rangle \\
            & = \langle r_{i} + \beta_{i-1}s_{i-1}, (\mI - \mK)s_{k+1} \rangle \\
            & = \langle r_{i},  (\mI - \mK)s_{k+1} \rangle + \beta_{i-1} \underbrace{\langle s_{i-1}, (\mI - \mK)s_{k+1} \rangle}_{ = 0 \text{ by \eqref{four} with $i = i-1$}} \\
            & = \langle r_{i},  (\mI - \mK)s_{k+1} \rangle
        \end{align*}
    \end{enumerate}
\end{proof}

\subsubsection{Supporting Lemmas for \Cref{theorem: Convergence of Model-adaptive approach}}

\begin{lemma} \label{proposition: tau_k decay rate}
    Under \Cref{assumption: stationary distribution}, let $\tau_{k} := (\frac{\Delta}{\delta})^{\frac{3}{2k}} (\frac{2}{k}\sum_{j=1}^{k}\frac{|\lambda_{j}|}{1-\lambda_{j}}) \to 0 \text{ as } k \to + \infty$. Note that, we have $(\tau_{k})^{k} = (\frac{\Delta}{\delta})^{\frac{3}{2}} (c_{k})^{k}$. Moreover, we have:
    \begin{lemmaitems}
        \item $c_{k}$ goes to zero no faster than $\frac{1}{k}$ and no slower than $\frac{1}{\sqrt{k}}$:
        \begin{equation*}
            \frac{2}{k} \frac{|\lambda_{1}|}{1-\lambda_{1}} \leq c_{k} \leq \frac{2\|\mK\|_{HS}}{\delta} \frac{1}{\sqrt{k}}
        \end{equation*}
        where $\|\mK\|_{HS} < + \infty$ is the Hilbert--Schmidt norm of $\mK$. \label{proposition: tau_k decay rate1}
        % \end{equation*}
        % \item $\tau_{k}$ goes to zero no slower than $\frac{1}{\sqrt{k}}$:
        % \begin{equation*}
        %     \tau_{k} \leq \frac{\|\mK\|_{HS}}{\delta} \frac{1}{\sqrt{k}}
        % \end{equation*}
        % where $\|\mK\|_{HS} < + \infty$ is the Hilbert--Schmidt norm of $\mK$.
        % \label{proposition: tau_k decay rate2}
        \item If $f(x'|x)$ has continuous partial derivatives of order up to $l$, then 
        \begin{equation*}
            c_{k} \leq \frac{2C(l)}{k}
        \end{equation*}
        where $C(l)$ is a constant depending on $l$. \label{proposition: tau_k decay rate3}
    \end{lemmaitems}
\end{lemma}
\begin{proof}
    \begin{enumerate}
        \item The lower bound on $c_{k}$ is given by: $\frac{2}{k} \frac{|\lambda_{1}|}{1-\lambda_{1}} \leq c_{k}$.
        % \begin{equation*}
        %     \frac{1}{k} \frac{|\lambda_{1}|}{1-\lambda_{1}} \leq \tau_{k}
        % \end{equation*}
        \item By the Cauchy--Schwarz inequality, we have:
        \begin{align*}
            \frac{2}{k}\sum_{j=1}^{k}\frac{|\lambda_{j}|}{1-\lambda_{j}} 
            \leq \frac{2}{k}\sqrt{(\sum_{j=1}^{k} |\lambda_{j}|^{2}) (\sum_{j=1}^{k} \frac{1}{(1-\lambda_{j})^{2}})}
            \leq \frac{2}{k}\sqrt{(\sum_{j=1}^{k} |\lambda_{j}|^{2}) (\sum_{j=1}^{k} \frac{1}{\delta^{2}})}
            \leq \frac{2\|\mK\|_{HS}}{\delta\sqrt{k}}
        \end{align*}
        \item The kernel of $\mK$ is $\beta f(x'|x) + \beta f(x|x') - \beta^{2} \int f(x|y)f(x'|y) dy$, which is symmetric in $x$ and $x'$. Then, \cite{flores1993conjugate} Theorem 3 applies.
    \end{enumerate}
\end{proof}

\subsubsection{Supporting Lemmas for \Cref{theorem: linear convergence of SA and TD}}

\begin{lemma} \label{lemma: TD minimax}
    Under \Cref{assumption: TD convergence}, we have:
    \begin{equation*}
       \frac{C_{td,1}}{1+\beta} k^{-\frac{\alpha}{d}} \leq \|V^{td}_{k} - V^{*}\|_{\mu} \leq \frac{C_{td,2}}{1-\beta} k^{-\frac{\alpha}{d}}
    \end{equation*}
\end{lemma}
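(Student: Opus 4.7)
The plan is to exploit the defining property of the TD approximation, namely that $V^{td}_k$ is the fixed point of the projected Bellman operator $\Pi_{\bS_k}(\mT \cdot + u)$ on $\bS_k$, and then combine two simple triangle inequality arguments (one forward, one reverse) with the $\beta$-contraction of $\mT$ on $L^2(\bX,\mu)$ and the non-expansiveness of the projection $\Pi_{\bS_k}$.

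First I would record the two fixed-point identities $V^{td}_k = \Pi_{\bS_k}(\mT V^{td}_k + u)$ and $\Pi_{\bS_k} V^* = \Pi_{\bS_k}(\mT V^* + u)$, the latter holding because $V^* = \mT V^* + u$ by \eqref{policy valuation}. Subtracting gives the key identity
\begin{equation*}
V^{td}_k - \Pi_{\bS_k} V^* \;=\; \Pi_{\bS_k}\mT\bigl(V^{td}_k - V^*\bigr).
\end{equation*}
Since $\mT$ is a $\beta$-contraction on $L^2(\bX,\mu)$ (discussed after Assumption~1 and used in \cite{bertsekas2015dynamic}) and $\Pi_{\bS_k}$ is non-expansive in $\|\cdot\|_\mu$, this yields the bound
\begin{equation*}
\bigl\|V^{td}_k - \Pi_{\bS_k} V^*\bigr\|_\mu \;\leq\; \beta\,\bigl\|V^{td}_k - V^*\bigr\|_\mu.
\end{equation*}

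For the upper bound, I would apply the triangle inequality
\begin{equation*}
\bigl\|V^{td}_k - V^*\bigr\|_\mu \;\leq\; \bigl\|V^{td}_k - \Pi_{\bS_k} V^*\bigr\|_\mu + \bigl\|\Pi_{\bS_k} V^* - V^*\bigr\|_\mu \;\leq\; \beta\bigl\|V^{td}_k - V^*\bigr\|_\mu + \bigl\|\Pi_{\bS_k} V^* - V^*\bigr\|_\mu,
\end{equation*}
solve for $\|V^{td}_k - V^*\|_\mu$, and invoke the upper part of Assumption~\ref{assumption: TD convergence} to get the $\tfrac{C_{td,2}}{1-\beta} k^{-\alpha/d}$ bound. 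For the lower bound, I would use the reverse triangle inequality
\begin{equation*}
\bigl\|\Pi_{\bS_k} V^* - V^*\bigr\|_\mu \;\leq\; \bigl\|V^{td}_k - V^*\bigr\|_\mu + \bigl\|V^{td}_k - \Pi_{\bS_k} V^*\bigr\|_\mu \;\leq\; (1+\beta)\bigl\|V^{td}_k - V^*\bigr\|_\mu,
\end{equation*}
rearrange, and invoke the lower part of Assumption~\ref{assumption: TD convergence} to obtain $\tfrac{C_{td,1}}{1+\beta} k^{-\alpha/d}$.

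There is not really a hard step here: the entire argument is a two-line manipulation once the $\beta$-contraction bound on $\|\Pi_{\bS_k}\mT(\cdot)\|_\mu$ is in hand. The only point that requires mild care is ensuring the TD fixed point exists and is unique, which follows because $\Pi_{\bS_k}\mT$ is itself a $\beta$-contraction on the finite-dimensional closed subspace $\bS_k \subset L^2(\bX,\mu)$; once that is granted, the identity $V^{td}_k - \Pi_{\bS_k} V^* = \Pi_{\bS_k}\mT(V^{td}_k - V^*)$ and the two triangle inequalities deliver both bounds simultaneously.
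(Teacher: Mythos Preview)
Your proposal is correct and follows essentially the same route as the paper: both proofs use the fixed-point property $V^{td}_k = \Pi_{\bS_k}(u + \mT V^{td}_k)$, the non-expansiveness of $\Pi_{\bS_k}$, the $\beta$-contraction of $\mT$ on $L^2(\bX,\mu)$, and the two-sided triangle inequality combined with Assumption~\ref{assumption: TD convergence}. The only cosmetic difference is that you first isolate the clean identity $V^{td}_k - \Pi_{\bS_k} V^* = \Pi_{\bS_k}\mT(V^{td}_k - V^*)$ via linearity of the orthogonal projection, whereas the paper keeps the affine form $\Pi_{\bS_k}(u+\mT\cdot)$ and bounds $\|\Pi_{\bS_k}(u+\mT V^{td}_k) - \Pi_{\bS_k} V^*\|_\mu$ directly by non-expansiveness; the resulting inequalities are identical.
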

\begin{proof}
    Note that $u + \mT V^{td}_{k} = \Pi_{\bS_{k}} ( u + \mT V^{td}_{k}) $ and $\Pi_{\bS_{k}}$ is non-expansive, i.e., $\|\Pi_{\bS_{k}}f\|_{\mu} \leq \|f\|_{\mu}$, we have:
    \begin{align*}
        \|V^{td}_{k} - V^{*}\|_{\mu}
        & = \|V^{td}_{k} - \Pi_{\bS_{k}} V^{*} + \Pi_{\bS_{k}} V^{*} - V^{*} \|_{\mu} \\
        & \leq \| V^{td}_{k} - \Pi_{\bS_{k}} V^{*} \|_{\mu} + \| \Pi_{\bS_{k}} V^{*} - V^{*} \|_{\mu} \\
        & = \| \Pi_{\bS_{k}} ( u + \mT V^{td}_{k}) - \Pi_{\bS_{k}} V^{*} \|_{\mu} + \| \Pi_{\bS_{k}} V^{*} - V^{*} \|_{\mu} \\
        & \leq \| u + \mT V^{td}_{k} - V^{*} \|_{\mu} + C_{td,2} k^{-\frac{\alpha}{d}} \\
        & = \| u + \mT V^{td}_{k} - (u + \mT V^{*}) \|_{\mu} + C_{td,2} k^{-\frac{\alpha}{d}} \\
        & \leq \beta \|V^{td}_{k} - V^{*}\|_{\mu} + C_{td,2} k^{-\frac{\alpha}{d}}
    \end{align*}
    Thus, $\|V^{td}_{k} - V^{*}\| \leq \frac{C_{td,2}}{1-\beta} k^{-\frac{\alpha}{d}}$.
    Similarly:
    \begin{align*}
        \|V^{td}_{k} - V^{*}\|_{\mu}
        & = \| V^{td}_{k} - \Pi_{\bS_{k}} V^{*} + \Pi_{\bS_{k}} V^{*} - V^{*} \|_{\mu} \\
        & \geq \| \Pi_{\bS_{k}} V^{*} - V^{*} \|_{\mu} - \| V^{td}_{k} - \Pi_{\bS_{k}} V^{*} \|_{\mu} \\
        & = \| \Pi_{\bS_{k}} V^{*} - V^{*} \|_{\mu} - \| \Pi_{\bS_{k}} ( u + \mT V^{td}_{k})  - \Pi_{\bS_{k}} V^{*} \|_{\mu}
    \end{align*}
    Therefore:
    \begin{align*}
        \| \Pi_{\bS_{k}} V^{*} - V^{*} \|_{\mu}
        & \leq \|V^{td}_{k} - V^{*}\|_{\mu} + \| \Pi_{\bS_{k}} ( u + \mT V^{td}_{k}) - \Pi_{\bS_{k}} V^{*} \|_{\mu} \\
        & \leq \|V^{td}_{k} - V^{*}\|_{\mu} + \| ( u + \mT V^{td}_{k}) - V^{*} \|_{\mu} \\
        & = \|V^{td}_{k} - V^{*}\|_{\mu} + \| ( u + \mT V^{td}_{k}) - ( u +  \mT V^{*}) \|_{\mu} \\
        & \leq \|V^{td}_{k} - V^{*}\|_{\mu} + \beta \|V^{td}_{k} - V^{*}\|_{\mu}
    \end{align*}
    Thus, $\|V^{td}_{k} - V^{*}\|_{\mu} \geq \frac{C_{td,1}}{1+\beta} k^{-\frac{\alpha}{d}}$.
\end{proof}

\subsubsection{Supporting Lemmas for \Cref{theorem: MA convergence1}}

\begin{lemma} \label{lemma: selfmap}
    Under \Cref{assumption: stationary distribution}, $\Gamma V(x) := u(x) + \mT V(x)$ maps $ \mV := \{ V | \sup_{x}|V(x)| \leq \frac{C_{u}}{1-\beta} \}$ to itself. Moreover, $V^{*} \in \mV$.
\end{lemma}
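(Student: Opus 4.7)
The plan is to verify both claims by a direct bound on the operator $\Gamma$, followed by a standard contraction-mapping limit argument for the second claim.

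For the self-map property, I would fix $V \in \mV$, so $\sup_x |V(x)| \leq \frac{C_u}{1-\beta}$. The key observation is that $f(x'|x) = \sum_a p(a|x) f(x'|x,a)$ is a (transition) probability density since $p(\cdot|x)$ sums to one and each $f(\cdot|x,a)$ integrates to one. Therefore
\begin{equation*}
|\mT V(x)| = \left|\beta \int f(x'|x) V(x') dx'\right| \leq \beta \int f(x'|x) |V(x')|\, dx' \leq \beta \cdot \frac{C_u}{1-\beta}.
\end{equation*}
Combining with \Cref{assumption: bounded u}, which gives $|u(x)| \leq C_u$, I get
\begin{equation*}
|\Gamma V(x)| \leq |u(x)| + |\mT V(x)| \leq C_u + \frac{\beta C_u}{1-\beta} = \frac{C_u}{1-\beta},
\end{equation*}
uniformly in $x$, so $\Gamma V \in \mV$.

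For the second claim, I would use the contraction argument. The set $\mV$ is a non-empty closed subset of the Banach space of bounded measurable functions equipped with the sup-norm, and $\Gamma$ is a $\beta$-contraction on this space because $|\Gamma V_1(x) - \Gamma V_2(x)| = |\mT(V_1 - V_2)(x)| \leq \beta \sup_x |V_1(x) - V_2(x)|$. Starting from $V_0 \equiv 0 \in \mV$ and iterating, the self-map property gives $\Gamma^k V_0 \in \mV$ for all $k$, and by the Banach fixed point theorem $\Gamma^k V_0$ converges in sup-norm to the unique fixed point of $\Gamma$ in the ambient Banach space. Since $V^*$ satisfies $\Gamma V^* = V^*$ by definition, this limit is $V^*$, and closedness of $\mV$ under sup-norm convergence yields $V^* \in \mV$.

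There is no real obstacle here; the result is essentially a bookkeeping lemma that bounds the fixed point using that $\mT$ averages against a probability kernel. The only subtlety worth flagging is making sure that $f(x'|x)$ integrates to one, which is immediate from the definition of the conditional transition induced by the policy $p$.
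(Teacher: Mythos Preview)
Your proposal is correct and matches the paper's proof essentially line for line on the self-map part. For $V^{*}\in\mV$, the paper simply writes the Neumann-series bound $\sup_x|V^{*}(x)|\le \sum_{i\ge 0}\beta^{i}C_u=\frac{C_u}{1-\beta}$, which is the explicit computation of your iteration $\Gamma^{k}0\to V^{*}$; your Banach-fixed-point-plus-closedness phrasing is the same idea stated more abstractly.
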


\begin{proof}
    Note that $\sup_{x}|V^{*}(x)| \leq \sum_{i=0}^{+\infty}\beta^{i}C_{u} = \frac{C_{u}}{1-\beta}$, i.e., $V^{*} \in \mV$. For any $V \in \mV$ and $x$, we have:
    \begin{align*}
        |u(x) + \mT V(x)|
        & \leq C_{u} + \frac{\beta}{1-\beta} C_{u} = \frac{C_{u}}{1-\beta}
    \end{align*}
    Therefore, $\Gamma$ maps $\mV$ to itself.
\end{proof}

\begin{lemma} \label{Lemma: MA convergence quasi MC 1}
    Let $V_{M}$ be the solution to $V_{M} = u_{M} + \hat{\mT}_{M} V_{M}$ and $V^{*}_{M}$ the $|\bM|$-vector where the $i$-th element is $V^{*}_{i,M} := V^{*}(x_{i})$. Denote $(\mT V^{*})_{M}$ the $|\bM|$-vector where $(\mT V^{*})_{i,M} := \mT V^{*}(x_{i})$ and $\|\cdot\|_{\bM}$ the sup-norm. Then, we have:
    \begin{equation*}
        \|V_{M} - V^{*}_{M}\|_{\bM} \leq \frac{1}{1 - \beta} \|\hat{\mT}_{M} V^{*}_{M} - (\mT V^{*})_{M}\|_{\bM}
    \end{equation*}
\end{lemma}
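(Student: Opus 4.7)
The plan is to pivot on the fact that both $V_{M}$ and $V^{*}$ satisfy a fixed-point identity: $V_{M}$ solves the discretized equation $V_{M} = u_{M} + \hat{\mT}_{M} V_{M}$ by construction, while evaluating the functional fixed-point equation $V^{*} = u + \mT V^{*}$ pointwise on the grid $\bM$ gives $V^{*}_{M} = u_{M} + (\mT V^{*})_{M}$. Subtracting these identities, and adding/subtracting $\hat{\mT}_{M} V^{*}_{M}$, yields
\[
V_{M} - V^{*}_{M} = \hat{\mT}_{M}(V_{M} - V^{*}_{M}) + \bigl(\hat{\mT}_{M} V^{*}_{M} - (\mT V^{*})_{M}\bigr),
\]
which rearranges to $(\mI_{M} - \hat{\mT}_{M})(V_{M} - V^{*}_{M}) = \hat{\mT}_{M} V^{*}_{M} - (\mT V^{*})_{M}$, provided $\mI_{M} - \hat{\mT}_{M}$ is invertible. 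The right-hand side is precisely the vector whose sup-norm appears in the claimed bound, so all that remains is to control $\|(\mI_{M} - \hat{\mT}_{M})^{-1}\|_{\bM}$ in the induced sup-norm.

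For this second step, I would exploit the normalization built into $\hat{\mT}_{M}$: by the definition $\tilde{f}(x_{i}|x) := f(x_{i}|x)/\sum_{j}f(x_{j}|x)$, each row of $\hat{\mT}_{M}$ sums to $\beta$. Hence the operator norm induced by $\|\cdot\|_{\bM}$ satisfies $\|\hat{\mT}_{M}\|_{\bM} = \beta < 1$, so $\mI_{M} - \hat{\mT}_{M}$ is invertible via the Neumann series $(\mI_{M} - \hat{\mT}_{M})^{-1} = \sum_{k \geq 0} \hat{\mT}_{M}^{k}$, and
\[
\bigl\|(\mI_{M} - \hat{\mT}_{M})^{-1}\bigr\|_{\bM} \;\leq\; \sum_{k \geq 0} \beta^{k} \;=\; \frac{1}{1-\beta}.
\]
Combining this with sub-multiplicativity of induced norms applied to the rearranged identity gives $\|V_{M} - V^{*}_{M}\|_{\bM} \leq \frac{1}{1-\beta}\|\hat{\mT}_{M} V^{*}_{M} - (\mT V^{*})_{M}\|_{\bM}$, which is exactly the stated bound.

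I do not expect any serious obstacle here. The only step requiring a moment of care is verifying that the sup-norm row-sum calculation for $\hat{\mT}_{M}$ is the correct operator norm to use; this follows directly because the induced $\ell^{\infty}$ norm of a matrix equals its maximum absolute row sum. The boundedness of $V^{*}$ (from \Cref{assumption: bounded u} and the arguments underlying \Cref{lemma: selfmap}) ensures that $(\mT V^{*})_{M}$ and $\hat{\mT}_{M} V^{*}_{M}$ are both finite vectors, so the right-hand side is well-defined. The argument is therefore short and self-contained, relying only on the two fixed-point equations and a Neumann-series norm bound.
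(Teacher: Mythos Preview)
Your proposal is correct and follows essentially the same route as the paper: both subtract the two fixed-point identities, insert $\hat{\mT}_{M} V^{*}_{M}$, and exploit that $\hat{\mT}_{M}$ is a $\beta$-contraction in $\|\cdot\|_{\bM}$ (the paper phrases this as a triangle-inequality-and-rearrange step rather than an explicit Neumann series bound on $(\mI_{M}-\hat{\mT}_{M})^{-1}$, but the content is identical).
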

\begin{proof}
    Note that $V^{*}_{M} = u_{M} + (\mT V^{*})_{M}$. By construction, $\hat{\mT}_{M}$ is a $\beta$-contraction with respect to the sup-norm $\|\cdot\|_{\bM}$. Thus,
    \begin{align*}
        \|V_{M} - V^{*}_{M}\|_{\bM}
        & = \|\hat{\mT}_{M} V_{M} + u_{M} - (\mT V^{*})_{M} - u_{M}\|_{\bM} \\
        & = \|\hat{\mT}_{M} V_{M} - \hat{\mT}_{M} V^{*}_{M} + \hat{\mT}_{M} V^{*}_{M} - (\mT V^{*})_{M}\|_{\bM} \\
        & \leq \beta \|V_{M} - V^{*}_{M}\|_{\bM} + \|\hat{\mT}_{M} V^{*}_{M} - (\mT V^{*})_{M}\|_{\bM}
    \end{align*}
    Thus, we have:
    \begin{equation*}
        \|V_{M} - V^{*}_{M}\|_{\bM} \leq \frac{1}{1 - \beta} \|\hat{\mT}_{M} V^{*}_{M} - (\mT V^{*})_{M}\|_{\bM}
    \end{equation*}
\end{proof}

% Next, we prove the following lemmas for low-discrepancy grids part. The Quasi-Monte Carlo requires the higher order smoothness. We first prove the following lemmas about the smoothness of functions:

\begin{lemma}
    Under \Cref{assumption: smooth QMC1}, $V^{*}(x) \in \mathcal{HK}_{C_{V^{*}}}$ for some $C_{V^{*}}$.
\end{lemma}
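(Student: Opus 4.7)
The plan is to leverage the fixed-point equation $V^{*}(x) = u(x) + \beta \int f(x'|x) V^{*}(x') dx'$ to transfer the Hardy–Krause regularity from $u$ and $f(x'|\cdot)$ to $V^{*}$. Since $V^{*} \in \mathcal{V}$ by \Cref{lemma: selfmap}, the value function is uniformly bounded by $\frac{C_{u}}{1-\beta}$, which is the input that lets us control integrals against $V^{*}$ by suprema of kernels or their derivatives.

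First, I would show that for each non-empty $m \subset 1{:}d$ the mixed partial derivative $D^{m}V^{*}(x_{m},1_{-m})$ exists and is continuous. Differentiating the fixed-point equation under the integral sign (formally) gives
\begin{equation*}
D^{m} V^{*}(x_{m},1_{-m}) = D^{m} u(x_{m},1_{-m}) + \beta \int \left[D^{m} f(x'|\cdot)\right](x_{m},1_{-m}) \, V^{*}(x') \, dx'.
\end{equation*}
The interchange of $D^{m}$ and $\int dx'$ is justified by dominated convergence once one has a locally integrable majorant for the difference quotients of $f(x'|\cdot)$; this is supplied by \Cref{assumption: smooth QMC1} via the bound $V_{HK}(f(x'|\cdot))\leq C$ together with the continuity of $D^{m}f(x'|\cdot)$ on the closed cube, which gives uniform boundedness on compact sets. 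Continuity of $D^{m} V^{*}$ then follows from continuity of $D^{m} u$, continuity of $D^{m} f(x'|\cdot)$ in its argument, and a second application of dominated convergence using the boundedness of $V^{*}$.

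Next, I would bound the Hardy–Krause variation using the integral representation
\begin{equation*}
V_{HK}(V^{*}) = \sum_{\emptyset \neq m \subset 1{:}d} \int_{[0,1]^{|m|}} \left| D^{m} V^{*}(x_{m}, 1_{-m}) \right| dx_{m}.
\end{equation*}
The triangle inequality together with the expression above and $\sup_{x}|V^{*}(x)| \leq \frac{C_{u}}{1-\beta}$ yields, after applying Fubini to swap the order of $\int dx_{m}$ and $\int dx'$,
\begin{equation*}
V_{HK}(V^{*}) \leq V_{HK}(u) + \beta \cdot \frac{C_{u}}{1-\beta} \int_{[0,1]^{d}} V_{HK}\!\left(f(x'|\cdot)\right) dx' \leq C + \frac{\beta C \, C_{u}}{1-\beta} =: C_{V^{*}},
\end{equation*}
where both bounds come directly from \Cref{assumption: smooth QMC1}. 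Combined with the continuity of each $D^{m} V^{*}(x_{m},1_{-m})$ established above, this shows $V^{*} \in \mathcal{HK}_{C_{V^{*}}}$.

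The main technical obstacle is the justification for differentiating under the integral sign and for applying Fubini to the mixed derivatives: \Cref{assumption: smooth QMC1} asserts continuity of $D^{m}f(x'|\cdot)$ and a finite Hardy–Krause norm uniformly in $x'$, but does not directly bound the pointwise supremum of $D^{m}f$ in both arguments. The cleanest route is to work on the faces $\{x : x_{j}=1 \text{ for } j\notin m\}$, where continuity of $D^{m}f(x'|\cdot)$ on a compact set provides a finite majorant for fixed $x'$, and then use the uniform $V_{HK}$ bound to pass to integrals in $x'$ via Fubini; everything else is a direct transfer of regularity through the contraction identity.
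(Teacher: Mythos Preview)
Your proposal is correct and follows essentially the same route as the paper: use $V^{*}\in\mathcal{V}$ from \Cref{lemma: selfmap}, differentiate the fixed-point identity $V^{*}=\Gamma V^{*}$ through the integral, and bound $V_{HK}(V^{*})\leq V_{HK}(u)+\frac{\beta C_{u}}{1-\beta}\sup_{x'}V_{HK}(f(x'|\cdot))$. The paper phrases this as ``$\Gamma$ maps $\mathcal{V}$ into $\mathcal{HK}_{C_{V^{*}}}$'' and is terser about the interchange of $D^{m}$ and $\int dx'$, but the computation and the resulting constant are the same as yours.
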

\begin{proof}
    % Note that for any $u \neq \emptyset$, we have:
    % \begin{equation*}
    %     \int_{[0,1]^{|u|}} \left| D^{u}h(x_{u}:1_{-u}) \right| dx_{u} \leq V_{HK}(h)
    % \end{equation*}
    First, we have the mixed partial derivatives of $D^{m}\Gamma V(x_{m}:1_{-m})$ exists and is continuous for $V \in \mV$ and all $m \in 1:d$. By \Cref{lemma: selfmap}, $V^{*} \in \mV$, it suffices to show that $\Gamma V(x)$ maps $\mV$ to $\mathcal{HK}_{C_{V^{*}}}$ for some $C_{V^{*}}$. We have:
    \begin{align*}
        V_{HK}(\Gamma V) 
        & = \sum_{m \neq \emptyset}\int_{[0,1]^{|m|}} |D^{m}\Gamma V(x_{m}:1_{-m})| dx_{m} \\
        & = \sum_{m \neq \emptyset} \int_{[0,1]^{|m|}} |D^{m}(u(x_{m}:1_{-m}) + \beta \int_{x'}f(x'|x_{m}:1_{-m})V(x')dx')| dx_{m} \\
        & \leq \sum_{m \neq \emptyset} \int_{[0,1]^{|m|}}|D^{m}u(x_{m}:1_{-m})|dx_{m} + \beta \sum_{m \neq \emptyset} \int_{[0,1]^{|m|}} \int_{x'}|D^{m}f(x'|x_{m}:1_{-m})||V(x')|dx'dx_{m} \\
        & \leq \sum_{m \neq \emptyset} \int_{[0,1]^{|m|}}|D^{m}u(x_{m}:1_{-m})|dx_{m} + \frac{\beta C_{u}}{1-\beta} \sum_{m \neq \emptyset} \int_{[0,1]^{|m|}} \int_{x'}|D^{m}f(x'|x_{m}:1_{-m})|dx'dx_{m} \\
        & = \sum_{m \neq \emptyset} \int_{[0,1]^{|m|}}|D^{m}u(x_{m}:1_{-m})|dx_{m} + \frac{\beta C_{u}}{1-\beta} \int_{x'} V_{HK}(f(x'|\cdot)) dx' \\
        & \leq V_{HK}(u) + \frac{\beta C_{u}}{1-\beta} \sup_{x'} V_{HK}(f(x'|\cdot)) \leq C + \frac{\beta C_{u}}{1-\beta} C := C_{V^{*}} < +\infty
    \end{align*}
\end{proof}

\begin{lemma} \label{lemma: K QMC}
    Under Assumptions \ref{assumption: smooth QMC1}-\ref{assumption: smooth QMC2}, $K(\cdot|x) \in \mathcal{HK}_{C_{\mK}}$ for any $x$ and some $C_{\mK}$.
\end{lemma}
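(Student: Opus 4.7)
The plan is to write the kernel of $\mK = \mT + \mT^{*} - \mT\mT^{*}$ explicitly, fix $x$, and bound the Hardy--Krause variation of each of the three resulting terms as a function of $y$ using Assumptions \ref{assumption: smooth QMC1}--\ref{assumption: smooth QMC2} together with the fact that $V_{HK}(\cdot)$ is a seminorm (subadditive and absolutely homogeneous).

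First I would identify the kernel. Since $\mT V(x) = \beta \int f(y|x) V(y) dy$ and $\mT^{*}V(x) = \beta \int f(x|y) V(y) dy$, a direct computation gives
\begin{equation*}
    \mK V(x) = \int K(y|x) V(y) dy, \qquad K(y|x) = \beta f(y|x) + \beta f(x|y) - \beta^{2} \int f(x'|x) f(x'|y) dx'.
\end{equation*}
Fixing $x$, I treat $y \mapsto K(y|x)$ as the sum of three functions and apply the triangle inequality for $V_{HK}$.

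For the first term $\beta f(\cdot|x)$, \Cref{assumption: smooth QMC2} gives $V_{HK}(f(\cdot|x)) \leq C$, hence $V_{HK}(\beta f(\cdot|x)) \leq \beta C$. For the second term $\beta f(x|\cdot)$, \Cref{assumption: smooth QMC1} gives $V_{HK}(f(x|\cdot)) \leq C$, hence $V_{HK}(\beta f(x|\cdot)) \leq \beta C$. For the third (convolution-type) term $g(y) := \int f(x'|x) f(x'|y) dx'$, I would interchange the mixed partial derivative and the integral via dominated convergence, using \Cref{assumption: smooth QMC1} to obtain
\begin{equation*}
    V_{HK}(g) = \sum_{m \neq \emptyset} \int_{[0,1]^{|m|}} \bigg| \int f(x'|x) D^{m} f(x'|y_{m}:1_{-m}) dx' \bigg| dy_{m} \leq \int f(x'|x) V_{HK}(f(x'|\cdot)) dx' \leq C,
\end{equation*}
where the last inequality uses $\sup_{x'} V_{HK}(f(x'|\cdot)) \leq C$ from \Cref{assumption: smooth QMC1} and the fact that $f(\cdot|x)$ integrates to one. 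Combining, $V_{HK}(K(\cdot|x)) \leq 2\beta C + \beta^{2} C =: C_{\mK}$, uniformly in $x$.

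Finally, I must verify that for every $m \subset 1\!:\!d$ the mixed partial $D^{m} K(\cdot|x)$ (restricted to $y_{m}:1_{-m}$) is continuous. For the first two terms this is immediate from the continuity requirement in Assumptions \ref{assumption: smooth QMC2} and \ref{assumption: smooth QMC1}. For the convolution term it again follows by exchanging derivative and integral and applying dominated convergence, where a dominating integrable majorant is provided by $f(x'|x) \sup_{y} |D^{m} f(x'|y_{m}:1_{-m})|$ (finite almost everywhere and with finite $dx'$-integral, since $V_{HK}(f(x'|\cdot))$ is uniformly bounded). The main technical obstacle is this derivative-under-integral step; once it is justified by a standard dominated-convergence argument the rest reduces to subadditivity of $V_{HK}$ and the two assumed uniform bounds on $V_{HK}(f(\cdot|x))$ and $V_{HK}(f(x'|\cdot))$.
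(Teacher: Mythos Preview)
Your proposal is correct and follows essentially the same decomposition as the paper: write $K(\cdot|x)$ as a sum of three terms and bound the Hardy--Krause variation of each using subadditivity. Two minor points of difference are worth noting. First, the third (convolution) term in the paper is written as $\beta^{2}\int f(x|y)f(x'|y)\,dy$, whereas your $\beta^{2}\int f(x'|x)f(x'|y)\,dx'$ is in fact the correct kernel of $\mT\mT^{*}$; the paper's expression corresponds to $\mT^{*}\mT$, though the argument works for either. Second, for that convolution term the paper pulls out $\sup f \leq C_{f}$ (from \Cref{assumption: HS norm}) before integrating, arriving at $\beta^{2}C_{f}C$, while you instead use $\int f(x'|x)\,dx' = 1$ to get $\beta^{2}C$; your route is slightly sharper and stays strictly within the hypotheses stated in the lemma. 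Your explicit verification of continuity of the mixed partials via dominated convergence is also a small improvement in rigor, as the paper leaves this implicit.
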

\begin{proof}
    Recall that: $K(x'|x) = \beta f(x'|x) + \beta f(x|x') - \beta^{2} \int f(x|y)f(x'|y) dy$. Therefore,
    \begin{align*}
        V_{HK}(K(\cdot|x))
        & = \sum_{m \neq \emptyset} \int_{[0,1]^{|m|}} |D^{m} K(x_{m}:1_{-m}|x)| dx_{m} \\
        & \leq \beta \sum_{m \neq \emptyset} \int_{[0,1]^{|m|}} | D^{m} f(x_{m}:1_{-m}|x)|dx_{m} + \beta \sum_{m \neq \emptyset} \int_{[0,1]^{|m|}} |D^{m} f(x|x_{m}:1_{-m})|dx_{m} \\
        & \hspace{2cm} + \beta^{2} \sum_{m \neq \emptyset} \int_{[0,1]^{|m|}} \int |f(x|y)D^{m}f(x_{m}:1_{-m}|y) |dydx_{m} \\
        & \leq \beta V_{HK}(f(\cdot|x)) + \beta V_{HK}(f(x|\cdot)) + \beta^{2} C_{f} \sum_{m \neq \emptyset} \int_{[0,1]^{|m|}} \int |D^{m}f(x_{m}:1_{-m}|y)| dy dx_{m} \\
        & \leq \beta C + \beta C + \beta^{2} C_{f} \int \sum_{m \neq \emptyset} \int_{[0,1]^{|m|}} |D^{m}f(x_{m}:1_{-m}|y)| dx_{m}dy \\
        & \leq 2\beta C +\beta^{2} C_{f} \int V_{HK}(f(\cdot|y))dy \\
        & \leq 2\beta C +\beta^{2} C_{f} \sup_{y} V_{HK}(f(\cdot|y)) \\
        & \leq 2\beta C + \beta^{2} C_{f} C := C_{\mK} < +\infty
    \end{align*}
    As the bound holds for any $x$, we have $K(\cdot|x) \in \mathcal{HK}_{C_{\mK}}$ for any $x$.
\end{proof}

\begin{lemma} \label{lemma: K QMC2}
    Under Assumptions \ref{assumption: smooth QMC1}-\ref{assumption: smooth QMC2}, $f(\cdot|x)V^{*}(\cdot), K^{2}(\cdot|x) \in \mathcal{HK}_{C}$ for any $x$.
\end{lemma}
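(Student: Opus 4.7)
The plan is to reduce both claims to a product rule for Hardy--Krause variation together with uniform sup-norm bounds on the factors. Recall that for any $m \subset 1:d$ the mixed partial $D^m(gh)(x_m:1_{-m})$ expands by the Leibniz rule as
\begin{equation*}
    D^m (gh)(x_m:1_{-m}) = \sum_{m' \subset m} D^{m'} g(x_m:1_{-m}) \, D^{m \setminus m'} h(x_m:1_{-m}),
\end{equation*}
so that continuity of the mixed partials is preserved under products, and one obtains a bound of the form $V_{HK}(gh) \leq C_{HK}\bigl(\|g\|_\infty + V_{HK}(g)\bigr)\bigl(\|h\|_\infty + V_{HK}(h)\bigr)$ for some constant $C_{HK}$ depending only on $d$. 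This is precisely the content of the referenced product lemma (\Cref{lemma: HK product}), which I would invoke rather than re-derive.

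For the first claim, I would take $g(\cdot) = f(\cdot|x)$ and $h(\cdot) = V^*(\cdot)$. By \Cref{assumption: smooth QMC2}, $f(\cdot|x) \in \mathcal{HK}_C$; and by \Cref{assumption: HS norm} together with the discussion after \Cref{assumption: stationary distribution}, $f(\cdot|x)$ is uniformly bounded by $C_f$. The previous lemma in the excerpt shows $V^* \in \mathcal{HK}_{C_{V^*}}$, and \Cref{lemma: selfmap} gives $\|V^*\|_\infty \leq C_u/(1-\beta)$. Plugging these four bounds into the product inequality yields $V_{HK}(f(\cdot|x) V^*(\cdot)) \leq C'$ uniformly in $x$, for a constant $C'$ that depends only on $C_f$, $C_u$, $\beta$, $C$ and $d$. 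Continuity of the mixed partials of the product follows from continuity of those of each factor (which is part of the $\mathcal{HK}$ membership), so $f(\cdot|x)V^*(\cdot) \in \mathcal{HK}_{C'}$.

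For the second claim I would apply the same product lemma with $g = h = K(\cdot|x)$. \Cref{lemma: K QMC} already gives $K(\cdot|x) \in \mathcal{HK}_{C_\mK}$ uniformly in $x$, and from the explicit expression $K(x'|x) = \beta f(x'|x) + \beta f(x|x') - \beta^2 \int f(x|y) f(x'|y)\,dy$ together with $\sup_{x,x'} f(x'|x) \leq C_f$ (\Cref{assumption: HS norm}), we obtain $\|K(\cdot|x)\|_\infty \leq 2\beta C_f + \beta^2 C_f^2$, again uniformly in $x$. The product rule then gives a uniform bound on $V_{HK}(K^2(\cdot|x))$, and continuity of mixed partials is inherited, so $K^2(\cdot|x) \in \mathcal{HK}_{C''}$ for a suitable $C''$. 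Taking $C$ to be the maximum of $C'$ and $C''$ proves the lemma.

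The only nontrivial point is the product rule for Hardy--Krause variation, which is where the bookkeeping over all subsets $m \subset 1:d$ enters; once \Cref{lemma: HK product} is granted, everything else is a mechanical combination of the sup-norm bound on $V^*$ from \Cref{lemma: selfmap}, the density bound from \Cref{assumption: HS norm}, and the variation bounds already established in \Cref{assumption: smooth QMC2} and \Cref{lemma: K QMC}. I would therefore keep the write-up short and delegate the combinatorial product estimate to the cited lemma.
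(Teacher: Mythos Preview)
Your proposal is correct and follows essentially the same route as the paper: the paper's proof is a one-liner invoking \Cref{lemma: HK product} together with \Cref{lemma: K QMC} (and, implicitly, the preceding lemma giving $V^{*}\in\mathcal{HK}_{C_{V^{*}}}$), which is precisely your plan. Your additional discussion of sup-norm bounds is not needed at this level, since membership in $\mathcal{HK}_{C}$ already entails boundedness (as used in the proof of \Cref{lemma: HK product}), but it does no harm.
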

\begin{proof}
    Combining \Cref{lemma: HK product} and \Cref{lemma: K QMC} proves the result.
\end{proof}

\begin{lemma} \label{Lemma: MA convergence quasi MC 2}
    Suppose \Cref{assumption: smoothness} holds. If a low-discrepancy grid is used, then:
    \begin{equation*}
        \|\hat{\mT}_{M} V^{*}_{M} - (\mT V^{*})_{M}\|_{\bM} = O(\frac{(\log M)^{d-1}}{M})
    \end{equation*}
\end{lemma}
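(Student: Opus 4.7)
The plan is to reduce the claim to a uniform Koksma--Hlawka estimate applied twice, combined with a ratio argument that handles the normalization $\tilde{f}(x_j|x) = f(x_j|x)/\sum_l f(x_l|x)$ built into $\hat{\mT}_M$. Fix a grid point $x_i \in \bM$ and set
\begin{equation*}
    g_i(x') := f(x'|x_i)\,V^*(x'), \qquad h_i(x') := f(x'|x_i),
\end{equation*}
so that $(\mT V^*)(x_i) = \beta\int g_i(x')\,dx'$, $\int h_i(x')\,dx' = 1$, and
\begin{equation*}
    (\hat{\mT}_M V^*_M)_i = \beta\,\frac{\frac{1}{M}\sum_j g_i(x_j)}{\frac{1}{M}\sum_j h_i(x_j)} = \beta\,\frac{A_M(x_i)}{B_M(x_i)}.
\end{equation*}

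First I would verify that both integrands lie in a common Hardy--Krause ball with a constant that does not depend on $x_i$. By Assumption \ref{assumption: smooth QMC2}, $h_i \in \mathcal{HK}_C$ for every $x_i$; and by the preceding lemma together with \Cref{lemma: HK product}, $V^* \in \mathcal{HK}_{C_{V^*}}$, so the product $g_i = h_i \cdot V^*$ sits in $\mathcal{HK}_{C'}$ for a constant $C'$ independent of $x_i$. Then \Cref{lemma: Koksma-Hlawka inequality} applied to $g_i$ and $h_i$ with a low-discrepancy point set (e.g.\ Hammersley, with $D^*(x_1,\ldots,x_M)=O((\log M)^{d-1}/M)$) gives
\begin{equation*}
    \sup_{x_i\in\bM}\bigl|A_M(x_i) - \textstyle\int g_i\bigr| = O\!\left(\tfrac{(\log M)^{d-1}}{M}\right),\qquad \sup_{x_i\in\bM}\bigl|B_M(x_i) - 1\bigr| = O\!\left(\tfrac{(\log M)^{d-1}}{M}\right).
\end{equation*}

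Next I would turn the ratio into a difference. Writing $A := \int g_i$ and using $B_M > 0$ (guaranteed by $f>0$ on a set of positive measure),
\begin{equation*}
    \frac{A_M}{B_M} - A = \frac{(A_M - A) - A(B_M - 1)}{B_M},
\end{equation*}
which yields $|A_M/B_M - A| \le \tfrac{1}{B_M}\bigl(|A_M - A| + |A|\,|B_M - 1|\bigr)$. Because $\sup_{x_i}|B_M(x_i)-1|\to 0$, for all sufficiently large $M$ we have $B_M(x_i) \ge 1/2$ uniformly in $x_i$; and $|A| \le \sup_{x'}|V^*(x')|\cdot\int h_i \le C_u/(1-\beta)$ by \Cref{lemma: selfmap}. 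Multiplying by $\beta$ and taking the sup over the grid,
\begin{equation*}
    \|\hat{\mT}_M V^*_M - (\mT V^*)_M\|_{\bM} \le 2\beta\!\left(\sup_{x_i}|A_M(x_i) - A| + \tfrac{C_u}{1-\beta}\sup_{x_i}|B_M(x_i)-1|\right) = O\!\left(\tfrac{(\log M)^{d-1}}{M}\right).
\end{equation*}

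The main obstacle is the dependence of both integrands on $x_i$: Koksma--Hlawka only gives a bound for a fixed function, so I need the Hardy--Krause seminorms $V_{HK}(g_i)$ and $V_{HK}(h_i)$ to be controlled \emph{uniformly} in $x_i$. This is exactly what Assumption \ref{assumption: smooth QMC2} and the product rule \Cref{lemma: HK product} buy us; once that uniformity is in hand, the rest is bookkeeping. A minor secondary issue is ensuring $B_M$ stays bounded away from zero for large $M$, which follows from the same uniform Koksma--Hlawka bound applied to $h_i$.
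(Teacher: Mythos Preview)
Your proposal is correct and follows essentially the same approach as the paper's proof: both apply Koksma--Hlawka uniformly to $g_i(\cdot)=f(\cdot|x_i)V^*(\cdot)$ (via \Cref{lemma: K QMC2}) and to $h_i(\cdot)=f(\cdot|x_i)$ (via \Cref{assumption: smooth QMC2}), then combine the two estimates to handle the normalization in $\tilde f$. The only cosmetic difference is in that last step: the paper inserts the intermediate unnormalized quadrature $\beta A_M$ and uses $|A_M|\le C_V B_M$ to avoid any lower bound on $B_M$, whereas you write the ratio identity $A_M/B_M-A=\bigl((A_M-A)-A(B_M-1)\bigr)/B_M$ and invoke $B_M\ge 1/2$ for large $M$; both routes give the same bound.
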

\begin{proof}
    Let $C_{V} := \frac{C_{u}}{1-\beta}$ and $\tilde{\hat{\mT}}_{M} V^{*}$ be the $|\bM|$-vector where the $i$-th element is:
    \begin{equation*}
        (\tilde{\hat{\mT}}_{M} V^{*})_{i,M} := \frac{\beta}{M} \sum_{j} f(x_{j}|x_{i})V^{*}(x_{j})
    \end{equation*}
    By the definition of $\hat{\mT}_{M}$, we have $(\hat{\mT}_{M} V^{*})_{i,M} 
        = \beta \sum_{j} \frac{f(x_{j}|x_{i})V^{*}(x_{j})}{\sum_{j}f(x_{j}|x_{i})}$. Then, we have:
    \begin{equation*}
        |(\hat{\mT}_{M} V^{*}_{M})_{i} - (\mT V^{*})_{i,M}| \leq |(\hat{\mT}_{M} V^{*}_{M})_{i} - (\tilde{\hat{\mT}}_{M} V^{*})_{i,M}| + |(\tilde{\hat{\mT}}_{M} V^{*})_{i,M} - (\mT V^{*})_{i,M}|
    \end{equation*}
    For the first part, under \Cref{assumption: smooth QMC1}, we have:
    \begin{align*}
        |(\hat{\mT}_{M} V^{*}_{M})_{i} - (\tilde{\hat{\mT}}_{M} V^{*})_{i,M}|
        & = |\beta \sum_{j} \frac{f(x_{j}|x)V^{*}(x_{j})}{\sum_{j}f(x_{j}|x_{i})} - \frac{\beta}{M} \sum_{j} f(x_{j}|x_{i})V^{*}(x_{j})| \\
        & \leq \beta \sum_{j} \frac{1}{M} f(x_{j}|x_{i}) |V^{*}(x_{j})| |1 - \frac{M}{\sum_{j}f(x_{j}|x_{i})}| \\
        & \leq \beta C_{V} \sum_{j} \frac{1}{M} f(x_{j}|x_{i}) |1 - \frac{M}{ \sum_{j}f(x_{j}|x_{i})}| \\
        & = \beta C_{V} |1 - \sum_{j} \frac{1}{M} f(x_{j}|x_{i})| = O( \frac{(\log M)^{d-1}}{M})
    \end{align*}
    where the last equality follows from $ f(\cdot |x) \in \mathcal{HK}_{C}$ for any $x$ and \Cref{lemma: Koksma-Hlawka inequality}.
    
    For the second part, by \Cref{lemma: K QMC2} and \Cref{lemma: Koksma-Hlawka inequality}, we have:
    \begin{align*}
        & \beta|(\tilde{\hat{\mT}}_{M} V^{*})_{i,M} - (\mT V^{*})_{i,M}| = \beta\left|\frac{1}{M} \sum_{j} f(x_{j}|x_{i})V^{*}(x_{j}) - \int f(x'|x_{i})V^{*}(x') dx' \right| = O( \frac{(\log M)^{d-1}}{M})
    \end{align*}
    Combining both gives: $\|\hat{\mT}_{M} V^{*}_{M} - (\mT V^{*})_{M}\|_{\bM} = O( \frac{(\log M)^{d-1}}{M})$.
\end{proof}

% Then, we prove the following lemmas similar to that of the low-discrepancy grid part.
\begin{lemma} \label{lemma: smoothness of MC1}
    Under \Cref{assumption: smoothness of MC1}, $V^{*} \in \mW_{C'}^{\alpha}([0,1]^{d})$ for some $C'$.
\end{lemma}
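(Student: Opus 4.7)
The plan is to exploit the fixed point equation $V^{*} = u + \mathcal{T} V^{*}$, which gives
$$V^{*}(x) = u(x) + \beta \int f(x'|x)\, V^{*}(x')\, dx',$$
and then to differentiate in $x$ under the integral, passing the smoothness of $u$ and of $f(x'|\cdot)$ through to $V^{*}$. By \Cref{lemma: selfmap}, $V^{*}$ is uniformly bounded by $C_{V}:=\frac{C_{u}}{1-\beta}$, which handles the $|\mathbf{k}|=0$ part of the norm.

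For any multi-index $\mathbf{k}$ with $|\mathbf{k}|<\alpha$, I would justify differentiation under the integral by dominated convergence: on the compact domain $[0,1]^{d}$, the partial derivative $D^{\mathbf{k}}_{x} f(x'|x)$ is continuous and bounded by $C$ uniformly in $x'$ (from $f(x'|\cdot)\in \mathcal{W}^{\alpha}_{C}$), and $V^{*}$ is bounded by $C_{V}$. This yields
$$D^{\mathbf{k}} V^{*}(x) = D^{\mathbf{k}} u(x) + \beta \int D^{\mathbf{k}}_{x} f(x'|x)\, V^{*}(x')\, dx',$$
and hence
$$\sup_{x}\,|D^{\mathbf{k}} V^{*}(x)| \leq C + \beta C\, C_{V}\, \nu_{Leb}([0,1]^{d}) = C(1+\beta C_{V}).$$

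For the top-order Hölder term with $|\mathbf{k}|=\lfloor \alpha \rfloor$, I would subtract the representation at two points $x,y\in[0,1]^{d}$ and use the Hölder condition on $D^{\mathbf{k}}_{x} f(x'|\cdot)$ (available from $f(x'|\cdot)\in \mathcal{W}^{\alpha}_{C}$, uniformly in $x'$) together with $|V^{*}|\le C_{V}$ to obtain
$$\frac{|D^{\mathbf{k}} V^{*}(x)-D^{\mathbf{k}} V^{*}(y)|}{\|x-y\|_{\infty}^{\alpha-\lfloor\alpha\rfloor}} \leq C + \beta C\, C_{V}.$$
Summing the sup-norms over $|\mathbf{k}|<\alpha$ with the Hölder seminorm gives a finite constant $C'$ depending only on $C$, $\beta$, $C_{u}$, and $d$, establishing $V^{*}\in \mathcal{W}^{\alpha}_{C'}([0,1]^{d})$.

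The only genuinely non-mechanical step is justifying the interchange of differentiation and integration, but this is routine here because the compactness of $[0,1]^{d}$ combined with Assumption \ref{assumption: smoothness of MC1} delivers the uniform integrable dominating functions required; no delicate smoothing or regularization is needed.
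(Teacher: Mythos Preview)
Your proposal is correct and essentially identical to the paper's proof: both use $V^{*}=u+\mathcal{T}V^{*}$ together with \Cref{lemma: selfmap} to bound $\sup_{x}|D^{\bm{k}}V^{*}(x)|$ for $|\bm{k}|<\alpha$ and the H\"older seminorm for $|\bm{k}|=\lfloor\alpha\rfloor$ by $C+\beta C_{V}C$. The paper phrases it as showing $\Gamma$ maps $\mathcal{V}$ into $\mathcal{W}^{\alpha}_{C'}$ rather than working directly with $V^{*}$, and it omits the discussion of differentiation under the integral that you include, but the substance is the same.
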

\begin{proof}
    By \Cref{lemma: selfmap}, $V^{*} \in \mV$, it suffices to show that $\Gamma V(x)$ maps $\mV$ to $\mW_{C'}^{\alpha}([0,1]^{d})$. Note that for $|\bm{k}| < \alpha$, we have:
    \begin{align*}
        |D^{\bm{k}}\Gamma V(x)|
        & = |D^{\bm{k}}(u(x) + \beta \int f(x'|x)V(x')dx')| \\
        & \leq \sup_{x} |D^{\bm{k}}(u(x))| + \beta \sup_{x} \int|D^{\bm{k}}f(x'|x)||V(x')|dx' \leq C + \beta C_{V} C
    \end{align*}
    Moreover, for $|\bm{k}| = \lfloor \alpha \rfloor$, we have:
    \begin{align*}
        \frac{|D^{\bm{k}}\Gamma V(x) - D^{\bm{k}}\Gamma V(y)|}{\|x - x'\|_{\infty}^{\alpha - \lfloor \alpha \rfloor}}
        & \leq \frac{|D^{\bm{k}}(u(x) - u(y))|}{\|x - y\|_{\infty}^{\alpha - \lfloor \alpha \rfloor}} + \beta \frac{\int|D^{\bm{k}}f(x'|x) - D^{\bm{k}}f(x'|y)||V(x')|dx'}{\|x - y\|_{\infty}^{\alpha - \lfloor \alpha \rfloor}} \leq C + \beta C_{V} C
    \end{align*}
    Therefore, $V^{*} \in \mW_{C'}^{\alpha}([0,1]^{d})$ where $C' := 2 \max \{C + \beta C_{V} C, C_{V}\}$.
\end{proof}

\begin{lemma} \label{lemma: K MC2}
    Under Assumptions \ref{assumption: smoothness of MC1}-\ref{assumption: smoothness of MC2}, $f(\cdot|x)V^{*}(\cdot) \in \mW_{C'}^{\alpha}([0,1]^{d})$ for any $x$ and some $C'$.
\end{lemma}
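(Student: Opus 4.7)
The plan is to deduce the result by combining two facts: (i) $f(\cdot|x)$ lies in a Hölder ball by assumption, and (ii) $V^{*}$ lies in a (possibly enlarged) Hölder ball by the preceding lemma, and then invoking a product rule for Hölder classes to conclude that their pointwise product is in such a ball with some constant $C'$ that does not depend on $x$.

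Concretely, I would first apply \Cref{lemma: smoothness of MC1}, which under \Cref{assumption: smoothness of MC1} gives $V^{*}\in\mW_{C''}^{\alpha}([0,1]^{d})$ for some constant $C''$. Combined with \Cref{assumption: smoothness of MC2}, i.e.\ $f(\cdot|x)\in\mW_{C}^{\alpha}([0,1]^{d})$ uniformly in $x$, both factors of the product $f(\cdot|x)V^{*}(\cdot)$ lie in Hölder balls of radius bounded uniformly in $x$. Then I would invoke the product lemma for Hölder classes (the $\mW_{C}^{\alpha}$ analogue of \Cref{lemma: HK product}, referenced at the end of the proof of \Cref{theorem: MA convergence} as \Cref{lemma: Holder product}), which yields $f(\cdot|x)V^{*}(\cdot)\in\mW_{C'}^{\alpha}([0,1]^{d})$ with $C'$ depending only on $C$, $C''$, $\alpha$, $d$, but not on $x$.

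If the product lemma were not already available in the paper, the obstacle would be verifying the Hölder seminorm bound on mixed derivatives of order $\lfloor\alpha\rfloor$ of a product: writing $D^{\bm{k}}(fV^{*})=\sum_{\bm{j}\leq\bm{k}}\binom{\bm{k}}{\bm{j}}D^{\bm{j}}f\,D^{\bm{k}-\bm{j}}V^{*}$ via Leibniz, bounding each supremum term by the Hölder norms of the factors, and, for the fractional part, splitting $D^{\bm{k}}f(x)D^{\bm{k}-\bm{j}}V^{*}(x)-D^{\bm{k}}f(y)D^{\bm{k}-\bm{j}}V^{*}(y)$ by add–and–subtract so that one factor contributes its Hölder seminorm while the other contributes its sup-norm. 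Since all involved constants are uniform in $x$ by \Cref{assumption: smoothness of MC2} and \Cref{lemma: smoothness of MC1}, the resulting $C'$ is independent of $x$, which is the only substantive point to check.
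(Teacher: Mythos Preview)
Your proposal is correct and matches the paper's own proof essentially verbatim: the paper simply notes that $f(\cdot|x)\in\mW_{C}^{\alpha}([0,1]^{d})$ by \Cref{assumption: smoothness of MC2} and $V^{*}\in\mW_{C}^{\alpha}([0,1]^{d})$ by \Cref{lemma: smoothness of MC1}, then invokes \Cref{lemma: Holder product}. Your additional paragraph sketching the Leibniz-plus-add-and-subtract argument is precisely how \Cref{lemma: Holder product} is proved in the paper, so nothing is missing.
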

\begin{proof}
    Since $f(\cdot|x), V^{*}(\cdot) \in \mW_{C}^{\alpha}([0,1]^{d})$ for any $x$, \Cref{lemma: Holder product} applies.
\end{proof}

\begin{lemma} \label{lemma: K MC}
    Under Assumptions \ref{assumption: HS norm}, \ref{assumption: smoothness of MC1} and \ref{assumption: smoothness of MC2}, $K(\cdot|x), K^{2}(\cdot|x) \in \mW_{C'}^{\alpha}([0,1]^{d})$ for any $x$ and some $C'$.
\end{lemma}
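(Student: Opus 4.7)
The plan is to handle $K(\cdot|x)$ first and then bootstrap to $K^{2}(\cdot|x)$ using the product lemma. Recall that the kernel of $\mK = \mT + \mT^{*} - \mT\mT^{*}$ takes the form
\begin{equation*}
K(x'|x) = \beta f(x'|x) + \beta f(x|x') - \beta^{2} \int f(x|y) f(x'|y)\, dy.
\end{equation*}
Fixing $x$ and viewing this as a function of $x'$, I would bound each of the three summands in the $\mW_{C'}^{\alpha}([0,1]^{d})$-norm, uniformly over $x$, and then add the bounds.

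The first summand $\beta f(x'|x)$ is controlled directly by \Cref{assumption: smoothness of MC2} (applied to the argument $x'$), contributing at most $\beta C$. The second summand $\beta f(x|x')$ is controlled directly by \Cref{assumption: smoothness of MC1} (applied to the argument $x'$), contributing at most $\beta C$. For the third, convolution-type summand, I would differentiate under the integral in $x'$; this is justified because $f(x|y)$ is bounded by $C_{f}$ (\Cref{assumption: HS norm}) and $f(\cdot|y)$ lies in $\mW_{C}^{\alpha}$ uniformly in $y$ (\Cref{assumption: smoothness of MC2}). Concretely, for any multi-index $\bm{k}$ with $|\bm{k}|<\alpha$,
\begin{equation*}
\left| D^{\bm{k}}_{x'} \int f(x|y) f(x'|y)\, dy \right| \leq \int |f(x|y)|\, |D^{\bm{k}}_{x'} f(x'|y)|\, dy \leq C_{f}\cdot C,
\end{equation*}
and for $|\bm{k}|=\lfloor\alpha\rfloor$ the Hölder seminorm is bounded similarly by pulling $f(x|y)$ outside and using the Hölder seminorm bound on $f(\cdot|y)\in\mW_{C}^{\alpha}$. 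Summing the three contributions yields $K(\cdot|x)\in \mW_{C'}^{\alpha}([0,1]^{d})$ with $C' := 2\beta C + \beta^{2} C_{f} C$, uniformly in $x$.

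For $K^{2}(\cdot|x)$, the statement then follows immediately from \Cref{lemma: Holder product} applied to the product $K(\cdot|x)\cdot K(\cdot|x)$: since both factors belong to $\mW_{C'}^{\alpha}([0,1]^{d})$, the product lies in $\mW_{C''}^{\alpha}([0,1]^{d})$ for some enlarged constant $C''$ depending only on $C'$ and $\alpha$.

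The only real subtlety, and what I would flag as the main step to do carefully, is justifying differentiation under the integral sign and bounding the Hölder seminorm for the integral term; once uniform boundedness of $f(x|y)$ by $C_{f}$ and the Hölder control on $f(\cdot|y)$ are combined, this is routine but must be written out cleanly so that the dependence on $x$ disappears from the final constant.
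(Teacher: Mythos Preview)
Your proposal is correct and follows essentially the same route as the paper: decompose $K(\cdot|x)$ into its three summands, bound the first two directly via Assumptions~\ref{assumption: smoothness of MC2} and~\ref{assumption: smoothness of MC1}, control the convolution term by differentiating under the integral and invoking $\sup f\leq C_{f}$ together with the uniform H\"older bound on $f(\cdot|y)$, then obtain $K^{2}(\cdot|x)$ from \Cref{lemma: Holder product}. Your constant $C'=2\beta C+\beta^{2}C_{f}C$ matches the paper's, and your remark that the main care point is justifying differentiation under the integral (and checking the resulting constant is independent of $x$) is exactly the one subtlety worth writing out.
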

\begin{proof}
    Recall that $K(x'|x) = \beta f(x'|x) + \beta f(x|x') - \beta^{2} \int f(x|y)f(x'|y) dy$. Therefore, for any $x$ and $|\bm{k}| < \alpha$, we have:
    \begin{align*}
        |D^{\bm{k}}K(x'|x)|
        & = | \beta D^{\bm{k}} f(x'|x) + \beta D^{\bm{k}} f(x|x') - \beta^{2} \int f(x|y)D^{\bm{k}}f(x'|y) dy| \\
        & \leq \beta C + \beta C + \beta^{2} C \int f(x|y) dy \leq 2\beta C + \beta^{2} C_{f}
    \end{align*}
    Moreover, for any $x$ and $|\bm{k}| = \lfloor \alpha \rfloor$, we have:
    \begin{align*}
        \frac{|D^{\bm{k}}K(x'|x) - D^{\bm{k}}K(x''|x)|}{\|x'-x''\|_{\infty}^{\alpha - \lfloor \alpha \rfloor}}
        & \leq \beta \frac{|D^{\bm{k}}f(x'|x) - D^{\bm{k}}f(x''|x)|}{\|x'-x''\|_{\infty}^{\alpha - \lfloor \alpha \rfloor}} + \beta \frac{|D^{\bm{k}}f(x|x') - D^{\bm{k}}f(x|x'')|}{\|x'-x''\|_{\infty}^{\alpha - \lfloor \alpha \rfloor}} \\
        & \hspace{2cm} + \beta^{2} \frac{\int f(x|y) |D^{\bm{k}}f(x'|y) - D^{\bm{k}}f(x''|y)| dy }{\|x'-x''\|_{\infty}^{\alpha - \lfloor \alpha \rfloor}} \\
        & \leq 2\beta C + \beta^{2} C_{f} C
    \end{align*}
    Therefore, $K(\cdot|x) \in \mW_{C'}^{\alpha}([0,1]^{d})$ for any $x$ and some $C'$. Moreover, $K^{2}(\cdot|x) \in \mW_{C'}^{\alpha}([0,1]^{d})$ for any $x$ by \Cref{lemma: Holder product}.
\end{proof}

\begin{lemma} \label{Lemma: MA convergence MC 2}
    Suppose \Cref{assumption: smoothness} holds. If the regular grid is used, then:
    \begin{equation*}
        \|\hat{\mT}_{M} V^{*}_{M} - (\mT V^{*})_{M}\|_{\bM} = O(M^{-\frac{\alpha}{d}})
    \end{equation*}
\end{lemma}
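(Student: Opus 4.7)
The plan is to mirror the proof of \Cref{Lemma: MA convergence quasi MC 2} almost verbatim, replacing the Koksma--Hlawka inequality and the $\mathcal{HK}_{C}$-class bounds with the H\"older-class quadrature bound from \Cref{lemma: Union Bounds on numerical integration}. First I introduce the same auxiliary vector $\tilde{\hat{\mT}}_{M} V^{*}$ whose $i$-th entry is $(\tilde{\hat{\mT}}_{M} V^{*})_{i,M} := \frac{\beta}{M}\sum_{j} f(x_{j}|x_{i}) V^{*}(x_{j})$ (the unnormalized regular-grid quadrature), and apply the triangle inequality
\[
\bigl|(\hat{\mT}_{M} V^{*}_{M})_{i} - (\mT V^{*})_{i,M}\bigr| \leq \bigl|(\hat{\mT}_{M} V^{*}_{M})_{i} - (\tilde{\hat{\mT}}_{M} V^{*})_{i,M}\bigr| + \bigl|(\tilde{\hat{\mT}}_{M} V^{*})_{i,M} - (\mT V^{*})_{i,M}\bigr|.
\]

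For the first term, the algebra from the QMC proof goes through unchanged: using $\sup_{x}|V^{*}(x)| \leq C_{V} := C_{u}/(1-\beta)$ from \Cref{lemma: selfmap}, the expression reduces to a bound proportional to $\bigl|1 - \tfrac{1}{M}\sum_{j} f(x_{j}|x_{i})\bigr|$. Since \Cref{assumption: smoothness of MC2} gives $f(\cdot|x_{i}) \in \mW_{C}^{\alpha}([0,1]^{d})$ and $\int f(x'|x_{i})\,dx' = 1$, \Cref{lemma: Union Bounds on numerical integration} applied with the regular grid yields an $O(M^{-\alpha/d})$ bound uniform in $i$. For the second term, \Cref{lemma: K MC2} shows $f(\cdot|x_{i}) V^{*}(\cdot) \in \mW_{C'}^{\alpha}([0,1]^{d})$ for a constant $C'$ independent of $i$, so a second application of \Cref{lemma: Union Bounds on numerical integration} to this integrand delivers $O(M^{-\alpha/d})$, again uniformly in $i$.

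Taking the supremum over $i \in \bM$ and adding the two $O(M^{-\alpha/d})$ terms produces the stated bound. The only subtle point is uniformity of constants across $i$, which is automatic because both the H\"older-ball radius $C'$ from \Cref{lemma: K MC2} and the quadrature bound from \Cref{lemma: Union Bounds on numerical integration} are uniform over H\"older balls and do not depend on the conditioning argument $x_{i}$. No step is expected to be a real obstacle; the proof is essentially a transcription of \Cref{Lemma: MA convergence quasi MC 2} with the regular-grid quadrature rate substituted for the Koksma--Hlawka rate.
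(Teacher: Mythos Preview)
Your proposal is correct and follows exactly the approach the paper takes: the paper's own proof of this lemma simply states that the structure is identical to \Cref{Lemma: MA convergence quasi MC 2} with $O(M^{-\alpha/d})$ substituted for $O((\log M)^{d-1}/M)$ and the regular-grid conditions in place of the low-discrepancy ones. Your substitutions (\Cref{lemma: Union Bounds on numerical integration} for Koksma--Hlawka, \Cref{lemma: K MC2} for \Cref{lemma: K QMC2}, and \Cref{assumption: smoothness of MC2} for the H\"older condition on $f(\cdot|x)$) are precisely the right ones.
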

\begin{proof}
    The overall structure remains the same as the proof presented in \Cref{Lemma: MA convergence quasi MC 2}, except that the term $O(\frac{(\log M)^{d-1}}{M})$ is substituted with $O(M^{-\frac{\alpha}{d}})$. Furthermore, the conditions for low-discrepancy grids are replaced by those for regular grids.
\end{proof}

\begin{lemma} \label{lemma: Hardy-Krause variation}
    Assume the mixed partial derivative $D^{m} h(x_{m}:1_{-m})$ exists and is continuous on $[0,1]^{|m|}$ for all $m \subset 1:d$, then the Hardy--Krause variation of $h$ is:
    \begin{equation*}
        V_{HK}(h) = \sum_{m \neq \emptyset} \int_{[0,1]^{|m|}} \left| D^{m}h(x_{m}:1_{-m}) \right| dx_{m}
    \end{equation*}
    where $|m|$ is the cardinality of $m$.
\end{lemma}
\begin{proof}
    By \cite{owen2005multidimensional} Proposition 14, the total variation of $h(x_{m}:1_{-m}) $ in Vitali’s sense satisfies:
    \begin{equation*}
        V_{[0,1]^{|m|}} h(\cdot:1_{-m}) = \int_{[0,1]^{|m|}} \left| D^{m}h(x_{m}:1_{-m}) \right| dx_{m}
    \end{equation*}
    when $D^{m}h(x_{m}:1_{-m})$ exists and is continuous. By \cite{owen2005multidimensional} definition 2, the Hardy--Krause variation of $h$ is given by:
    \begin{equation*}
        V_{HK}(h) = \sum_{m \neq \emptyset} V_{[0,1]^{|m|}} h(\cdot:1_{-m})
    \end{equation*}
\end{proof}

\begin{lemma} \label{lemma: HK product}
    If $h_{1}, h_{2} \in \mathcal{HK}_{C}$, then $h_{1}h_{2} \in \mathcal{HK}_{C'}$ for some $C'$.
\end{lemma}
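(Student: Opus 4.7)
The plan is to verify that $h_1 h_2$ satisfies both defining conditions of $\mathcal{HK}_{C'}$: continuity of every mixed partial derivative $D^m(h_1 h_2)(x_m : 1_{-m})$ for each $m \subset 1:d$, and a finite bound on $V_{HK}(h_1 h_2)$.

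I would first apply the generalized Leibniz rule for higher-order mixed partial derivatives,
\[
D^m(h_1 h_2) = \sum_{A \subset m} D^A h_1 \cdot D^{m \setminus A} h_2,
\]
which expresses $D^m(h_1 h_2)$ as a sum of $2^{|m|}$ products of lower-order mixed partial derivatives of $h_1$ and $h_2$. Since $h_1, h_2 \in \mathcal{HK}_C$ implies that $D^B h_j(x_B : 1_{-B})$ is continuous for every $B \subset 1:d$, restricting the Leibniz identity to the face $\{x_j = 1 : j \notin m\}$ expresses $D^m(h_1 h_2)(x_m : 1_{-m})$ as a finite sum of products of continuous functions of $x_m$, hence continuous.

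Next, I would invoke Lemma \ref{lemma: Hardy-Krause variation} to rewrite $V_{HK}(h_1 h_2)$ as a sum of integrals of $|D^m(h_1 h_2)(x_m : 1_{-m})|$ over $[0,1]^{|m|}$. Applying the Leibniz expansion and the triangle inequality inside each integral yields
\[
V_{HK}(h_1 h_2) \leq \sum_{m \neq \emptyset} \sum_{A \subset m} \int_{[0,1]^{|m|}} |D^A h_1(x_m : 1_{-m})| \cdot |D^{m \setminus A} h_2(x_m : 1_{-m})| \, dx_m.
\]
Each factor $D^A h_j(x_m : 1_{-m})$ is continuous on the compact cube $[0,1]^{|m|}$ (by the continuity hypothesis applied to the derivative of $h_j$ of order $|A|$ restricted to the face where coordinates outside $A \cup (m \setminus A) = m$ equal $1$), and therefore uniformly bounded in $x_m$. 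Since the outer double sum runs over a finite index set (at most $3^d$ pairs $(m, A)$), we conclude $V_{HK}(h_1 h_2) \leq C'$ for some finite constant $C'$, which proves $h_1 h_2 \in \mathcal{HK}_{C'}$.

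The main subtlety is that when $A \subsetneq m$, the derivative $D^A h_1(x_m : 1_{-m})$ is taken with respect to variables in $A$ but evaluated at a point whose coordinates in $m \setminus A$ are free variables $x_{m \setminus A}$ rather than anchored at $1$. Consequently the integral cannot be bounded directly by $V_{HK}(h_1)$, so one cannot express $C'$ purely in terms of $C$; instead we exploit the qualitative boundedness of continuous functions on the compact cube to guarantee finiteness. This is sufficient because the lemma only asks for the existence of \emph{some} $C'$.
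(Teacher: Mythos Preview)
Your Leibniz-rule approach is different from the paper's, which simply cites \cite{blumlinger1989topological} for both the boundedness of functions of bounded Hardy--Krause variation and the fact that such functions are closed under pointwise multiplication. Your route is more constructive but has a gap in the justification of the key boundedness step.

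The problem is the parenthetical claim that $D^A h_j(x_m : 1_{-m})$ is continuous in $x_m$ ``by the continuity hypothesis applied to the derivative of $h_j$ of order $|A|$ restricted to the face where coordinates outside $m$ equal $1$.'' This misreads the definition of $\mathcal{HK}_C$: the hypothesis gives continuity of $D^B h_j(x_B : 1_{-B})$ for each $B$, so taking $B = A$ yields continuity of $D^A h_j$ only on the smaller face $\{x_l = 1 : l \notin A\}$, where \emph{all} coordinates outside $A$ are anchored. When $A \subsetneq m$ you need continuity (indeed, existence) of $D^A h_j$ with the coordinates in $m \setminus A$ still free, and nothing in the definition delivers that directly. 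The Leibniz identity itself presupposes exactly these intermediate partials, so the gap is not merely cosmetic.

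The fix is to combine the hypotheses for all sets between $A$ and $m$: using the anchored fundamental-theorem representation
\[
D^A h_j(x_m : 1_{-m}) \;=\; \sum_{B \subset m \setminus A} (-1)^{|B|} \int_{\prod_{i \in B}[x_i, 1]} D^{A \cup B} h_j\bigl(x_A, t_B : 1_{-(A \cup B)}\bigr)\, dt_B,
\]
each summand is (an integral of) a continuous function by the hypothesis applied to the set $A \cup B$, so $D^A h_j(x_m : 1_{-m})$ is continuous and hence bounded on $[0,1]^{|m|}$. With this additional step your argument goes through; without it, the appeal to compactness is unjustified.
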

\begin{proof}
    It suffices to show that $V_{HK}(h_{1}h_{2}) < C'$ for some $C'$. By \cite{blumlinger1989topological} Proposition 2, $h_{1}, h_{2}$ are bounded. Then, by \cite{blumlinger1989topological}, the Hardy--Krause variation of the product satisfies $V_{HK}(h_{1}h_{2}) \leq C'$ for some $C'$.
\end{proof}

\begin{lemma} \label{lemma: lemma of product of Holder}
    If $h \in \mW_{C}^{\alpha}([0,1]^{d})$, then there exists some constant $C'$ such that for all multi-index $\bm{\beta}$ with $|\bm{\beta}| \leq  \lfloor \alpha \rfloor$, we have:
    \begin{equation*}
        \sup_{x \neq y}\frac{|D^{\bm{\beta}}h(x) - D^{\bm{\beta}}h(y)|}{\|x-y\|_{\infty}^{\alpha - \lfloor \alpha \rfloor}} \leq C'
    \end{equation*}
\end{lemma}
\begin{proof}
    For $|\bm{\beta}| = \lfloor \alpha \rfloor$, the result follows from the definition of the H\"older space. For $|\bm{\beta}| < \lfloor \alpha \rfloor$, by Taylor’s theorem for multivariate functions, we have for some $z$:
    \begin{align*}
        D^{\bm{\beta}}h(x)
        & = D^{\bm{\beta}}h(y) + \sum_{ 1 \leq |\bm{\gamma}| < \alpha - |\bm{\beta}| - 1} \frac{D^{\bm{\gamma} + \bm{\beta}}h(y)}{\bm{\gamma}!} (x-y)^{\bm{\gamma}} + \sum_{\alpha - |\bm{\beta}| - 1 \leq |\bm{\gamma}| < \alpha - |\bm{\beta}|} \frac{D^{\bm{\gamma} + \bm{\beta}}h(z)}{\bm{\gamma}!} (x-y)^{\bm{\gamma}} \\
    \end{align*}
    where $(x-y)^{\gamma} := \Pi_{i=1}^{d} (x_{i} - y_{i})^{\gamma_{i}}$.
    Therefore, we have:
    \begin{align*}
        |D^{\bm{\beta}}h(x) - D^{\bm{\beta}}h(y)|
        & \leq \sum_{ 1 \leq |\bm{\gamma}| < \alpha - |\bm{\beta}| - 1} \frac{|D^{\bm{\gamma} + \bm{\beta}}h(y)|}{\bm{\gamma}!} \|x-y\|_{\infty}^{|\bm{\gamma}|} + \sum_{\alpha - |\bm{\beta}| - 1 \leq |\bm{\gamma}| < \alpha - |\bm{\beta}|} \frac{|D^{\bm{\gamma} + \bm{\beta}}h(z)|}{\bm{\gamma}!} \|x-y\|_{\infty}^{|\bm{\gamma}|} \\
    \end{align*}
    where we used $(x-y)^{\gamma} = \Pi_{i=1}^{d} (x_{i} - y_{i})^{\gamma_{i}} \leq \|x-y\|_{\infty}^{|\bm{\gamma}|}$.
    Then, we have:
    \begin{align*}
        \sup_{x \neq y}\frac{|D^{\bm{\beta}}h(x) - D^{\bm{\beta}}h(y)|}{\|x-y\|_{\infty}^{\alpha - \lfloor \alpha \rfloor}}
        & \leq \sum_{ 1 \leq |\bm{\gamma}| < \alpha - |\bm{\beta}| - 1} \frac{|D^{\bm{\gamma} + \bm{\beta}}h(y)|}{\bm{\gamma}!} \|x-y\|_{\infty}^{|\bm{\gamma}| - \alpha + \lfloor \alpha \rfloor} \\
        & \hspace{2cm} + \sum_{\alpha - |\bm{\beta}| - 1 \leq |\bm{\gamma}| < \alpha - |\bm{\beta}|} \frac{|D^{\bm{\gamma} + \bm{\beta}}h(z)|}{\bm{\gamma}!} \|x-y\|_{\infty}^{|\bm{\gamma}| - \alpha + \lfloor \alpha \rfloor} \\
    \end{align*}
    Since $ |\bm{\gamma}| + |\bm{\beta}| < \alpha$ and $ h \in \mW_{C}^{\alpha}([0,1]^{d})$, we have $\sup_{y} |D^{\bm{\gamma} + \bm{\beta}}h(y)| \leq C$. Moreover, as $\alpha - \lfloor \alpha \rfloor < 1$, $|\bm{\gamma}| \geq 1$ and $\|x-y\|_{\infty} \leq 1$, we have $\|x-y\|_{\infty}^{|\bm{\gamma}| - \alpha + \lfloor \alpha \rfloor} \leq 1$. Therefore:
    \begin{equation*}
        \sup_{x \neq y}\frac{|D^{\bm{\beta}}h(x) - D^{\bm{\beta}}h(y)|}{\|x-y\|_{\infty}^{\alpha - \lfloor \alpha \rfloor}} \leq C_{\bm{\beta}}
    \end{equation*}
    for some constant $C_{\bm{\beta}}$ that depends on $\bm{\beta}$. Since we have a finite number of multi-indices $\bm{\beta}$, we can take $C' := \max_{\bm{\beta}} C_{\bm{\beta}}$.
\end{proof}

% \url{https://math.stackexchange.com/questions/3857981/estimate-product-of-h%C3%B6lder-functions}

% \cite{schmidt2020nonparametric} APPENDIX B: PROOF OF THEOREM 5

\begin{lemma} \label{lemma: Holder product}
    If $h_{1}, h_{2} \in \mW_{C}^{\alpha}([0,1]^{d})$, then $h_{1}h_{2} \in \mW_{C'}^{\alpha}([0,1]^{d})$ for some $C'$.
\end{lemma}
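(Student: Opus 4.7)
The plan is to reduce everything to the multivariate Leibniz rule for the derivative of a product, and then invoke \Cref{lemma: lemma of product of Holder} to control the fractional-order H\"older quotients of lower-order partial derivatives. Throughout, the constant $C'$ may change from line to line but depends only on $C$, $\alpha$, and $d$.

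First, for any multi-index $\bm{k}$ with $|\bm{k}| \leq \lfloor \alpha \rfloor$, the Leibniz rule gives
\begin{equation*}
    D^{\bm{k}}(h_{1} h_{2})(x) = \sum_{\bm{j} \leq \bm{k}} \binom{\bm{k}}{\bm{j}} D^{\bm{j}} h_{1}(x) \, D^{\bm{k}-\bm{j}} h_{2}(x).
\end{equation*}
For the sup-norm part of the $\mW_{C'}^{\alpha}$ norm, I take $|\bm{k}| < \alpha$ so that $|\bm{j}|, |\bm{k}-\bm{j}| < \alpha$ for every $\bm{j} \leq \bm{k}$; by the definition of $\mW_{C}^{\alpha}([0,1]^{d})$ each factor is uniformly bounded by $C$, so summing the finitely many terms yields $\sup_{x}|D^{\bm{k}}(h_{1}h_{2})(x)| \leq C'$.

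For the H\"older part I take $|\bm{k}| = \lfloor \alpha \rfloor$ and apply the standard add--and--subtract decomposition to each summand:
\begin{equation*}
    D^{\bm{j}} h_{1}(x) D^{\bm{k}-\bm{j}} h_{2}(x) - D^{\bm{j}} h_{1}(y) D^{\bm{k}-\bm{j}} h_{2}(y) = D^{\bm{j}} h_{1}(x)\bigl[D^{\bm{k}-\bm{j}} h_{2}(x) - D^{\bm{k}-\bm{j}} h_{2}(y)\bigr] + \bigl[D^{\bm{j}} h_{1}(x) - D^{\bm{j}} h_{1}(y)\bigr] D^{\bm{k}-\bm{j}} h_{2}(y).
\end{equation*}
Dividing by $\|x-y\|_{\infty}^{\alpha - \lfloor \alpha \rfloor}$ and taking suprema, each of the four pieces (the pointwise values of lower-order derivatives of $h_{1}$ or $h_{2}$, and the fractional-order difference quotients of lower-order derivatives) is uniformly bounded. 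The pointwise values are controlled by the first part of the $\mW_{C}^{\alpha}$ norm, and the difference quotients are controlled by \Cref{lemma: lemma of product of Holder} since $|\bm{j}|, |\bm{k}-\bm{j}| \leq \lfloor \alpha \rfloor$. Summing over the finitely many $\bm{j} \leq \bm{k}$ gives a uniform bound on the H\"older seminorm of $D^{\bm{k}}(h_{1}h_{2})$.

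The only subtlety worth flagging is why the fractional H\"older quotient is finite for partial derivatives of order strictly less than $\lfloor \alpha \rfloor$, since the definition of $\mW_{C}^{\alpha}$ only imposes a H\"older condition on the top-order derivatives. This is precisely the content of \Cref{lemma: lemma of product of Holder} (proved via a Taylor expansion), so once that lemma is in hand the rest is bookkeeping. Combining the sup-norm bound with the H\"older bound delivers $h_{1}h_{2} \in \mW_{C'}^{\alpha}([0,1]^{d})$ for some constant $C'$ depending only on $C$, $\alpha$, and $d$.
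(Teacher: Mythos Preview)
Your proposal is correct and follows essentially the same route as the paper: apply the Leibniz rule, bound the sup-norm of each factor directly from the definition of $\mW_{C}^{\alpha}$, then for the H\"older seminorm use the add--and--subtract trick on each summand and invoke \Cref{lemma: lemma of product of Holder} to control the fractional-order quotients of lower-order derivatives. The paper's proof is identical in structure and in the key lemma used.
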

\begin{proof}
    For the multi-index $\bm{\beta}$ with $|\bm{\beta}| \leq \lfloor \alpha \rfloor$ by the Leibniz formula, we have:
    \begin{equation*}
        D^{\bm{\beta}}h_{1}h_{2} = \sum_{\bm{\beta}_{1} \leq \bm{\beta}} \binom{\bm{\beta}}{\bm{\beta}_{1}} D^{\bm{\beta}_{1}}h_{1} D^{\bm{\beta} - \bm{\beta}_{1}}h_{2}
    \end{equation*}
    where $\bm{\beta}_{1} \leq \bm{\beta}$ denotes componentwise inequality and $\binom{\bm{\beta}}{\bm{\beta}_{1}} := \prod_{i} \binom{\beta_{i}}{\beta_{1,i}}$. Then, we have:
    \begin{equation*}
        \sup |D^{\bm{\beta}}h_{1}h_{2}| \leq \sum_{\bm{\beta}_{1} \leq \bm{\beta}} \binom{\bm{\beta}}{\bm{\beta}_{1}} \sup |D^{\bm{\beta}_{1}}h_{1}| \sup|D^{\bm{\beta} - \bm{\beta}_{1}}h_{2}|
        \leq C_{\bm{\beta}} C^{2}
    \end{equation*}
    where we used $\sup |D^{\bm{\beta}}h_{1}h_{2}| := \sup_{x} |D^{\bm{\beta}}h_{1}(x)h_{2}(x)|$ for notational simplicity and $C_{\bm{\beta}} := \sum_{\bm{\beta}_{1} \leq \bm{\beta}} \binom{\bm{\beta}}{\bm{\beta}_{1}} = 2^{|\bm{\beta}|}$. Therefore, we have:
    \begin{equation} \label{lemma: first part}
        \sup_{|\bm{\beta}| < \alpha} \sup_{x}\left|D^{\bm{\beta}} (h_{1}h_{2})(x)\right| \leq \sup_{|\bm{\beta}| < \alpha} C_{\bm{\beta}} C^{2} \leq C''
    \end{equation}
    
    Furthermore, for $\bm{\beta}_{1} \leq \bm{\beta}$ with $\bm{\beta}_{2} = \bm{\beta} - \bm{\beta}_{1}$, we have:
    \begin{align*}
        & \frac{|D^{\bm{\beta}_{1}}h_{1}(x) D^{\bm{\beta}_{2}}h_{2}(x) - D^{\bm{\beta}_{1}}h_{1}(y) D^{\bm{\beta}_{2}}h_{2}(y)|}{\|x-y\|_{\infty}^{\alpha - \lfloor \alpha \rfloor}} \\
        & \leq \frac{|D^{\bm{\beta}_{1}}h_{1}(x) D^{\bm{\beta}_{2}}h_{2}(x) - D^{\bm{\beta}_{1}}h_{1}(x) D^{\bm{\beta}_{2}}h_{2}(y)|}{\|x-y\|_{\infty}^{\alpha - \lfloor \alpha \rfloor}} + \frac{|D^{\bm{\beta}_{1}}h_{1}(x) D^{\bm{\beta}_{2}}h_{2}(y) - D^{\bm{\beta}_{1}}h_{1}(y) D^{\bm{\beta}_{2}}h_{2}(y)|}{\|x-y\|_{\infty}^{\alpha - \lfloor \alpha \rfloor}} \\
        & \leq \sup |D^{\bm{\beta}_{1}}h_{1}| \frac{|D^{\bm{\beta}_{2}}h_{2}(x) - D^{\bm{\beta}_{2}}h_{2}(y)|}{\|x-y\|_{\infty}^{\alpha - \lfloor \alpha \rfloor}} + \sup |D^{\bm{\beta}_{2}}h_{2}| \frac{|D^{\bm{\beta}_{1}}h_{1}(x) - D^{\bm{\beta}_{1}}h_{1}(y)|}{\|x-y\|_{\infty}^{\alpha - \lfloor \alpha \rfloor}} \\
        & \leq C \frac{|D^{\bm{\beta}_{2}}h_{2}(x) - D^{\bm{\beta}_{2}}h_{2}(y)|}{\|x-y\|_{\infty}^{\alpha - \lfloor \alpha \rfloor}} + C \frac{|D^{\bm{\beta}_{1}}h_{1}(x) - D^{\bm{\beta}_{1}}h_{1}(y)|}{\|x-y\|_{\infty}^{\alpha - \lfloor \alpha \rfloor}} \leq 2C C'
    \end{align*}
    where $C'$ is the constant in \Cref{lemma: lemma of product of Holder}. Therefore, taking $|\bm{\beta}| = \lfloor \alpha \rfloor$ gives:
    \begin{equation} \label{lemma: second part}
        \sup_{|\bm{\beta}| = \lfloor \alpha \rfloor}\sup_{ x \neq y} \frac{|D^{\bm{\beta}}h_{1}h_{2}(x) - D^{\bm{\beta}}h_{1}h_{2}(y)|}{\|x-y\|_{\infty}^{\alpha - \lfloor \alpha \rfloor}} \leq \sup_{|\bm{\beta}| = \lfloor \alpha \rfloor} 2C C' C_{\bm{\beta}} \leq C'''
    \end{equation}
    Combining \eqref{lemma: first part} and \eqref{lemma: second part} proves the lemma.
\end{proof}

\subsubsection{Supporting Lemmas for \Cref{theorem: MA convergence}}

\begin{lemma} \label{lemma: Operator convergence}
    Under \Cref{assumption: smoothness}, $\| \hat{\mK}_{M} y - \mK y \| \to 0$ as $M \to +\infty$ for any $y \in L^{2}(\bX)$.
\end{lemma}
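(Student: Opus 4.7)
The plan is to reduce $\|\hat{\mK}_M y - \mK y\| \to 0$ to the convergence $\|(\hat{\mT}-\mT)z\|\to 0$ and $\|(\hat{\mT}^*-\mT^*)z\|\to 0$ for $z$ in a dense smooth subspace, combined with a uniform operator-norm bound on $\hat{\mT}$. Decompose
\[
\hat{\mK}_M y - \mK y = (\hat{\mT}-\mT)y + (\hat{\mT}^*-\mT^*)y - (\hat{\mT}-\mT)\mT^* y - \hat{\mT}(\hat{\mT}^*-\mT^*)y.
\]
The third summand applies $(\hat{\mT}-\mT)$ to $\mT^* y$, which is already smooth---it lies in $\mathcal{HK}_{C'}$ (or $\mW^\alpha_{C'}([0,1]^d)$) with $C'$ depending on $\|y\|$, by \Cref{assumption: smoothness} together with a Fubini-type bound on the HK variation of $\int f(\cdot|x')y(x')dx'$---so it converges directly. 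The fourth summand is dominated by $\|\hat{\mT}\|_{op}\cdot\|(\hat{\mT}^*-\mT^*)y\|$ and therefore follows from the second summand and the uniform bound.

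For a smooth test function $z$ in a dense subspace $\mathcal{D}\subset L^2(\bX)$ (functions in $\mathcal{HK}_C$ for low-discrepancy grids, or $\mW^\alpha_C([0,1]^d)$ for regular grids), I would use the identity
\[
\hat{\mT} z(x) - \mT z(x) = \frac{\beta}{D_M(x)}\left[\tfrac{1}{M}\sum_{i} f(x_i|x)z(x_i) - \int f(x'|x)z(x')dx'\right] + \mT z(x)\left[\frac{1}{D_M(x)} - 1\right],
\]
with $D_M(x):=\frac{1}{M}\sum_j f(x_j|x)$. Under \Cref{assumption: smoothness}, \Cref{lemma: HK product} or \Cref{lemma: Holder product} ensures both integrands lie in $\mathcal{HK}_{C''}$ or $\mW_{C''}^\alpha([0,1]^d)$ uniformly in $x$, whence \Cref{lemma: Koksma-Hlawka inequality} or \Cref{lemma: Union Bounds on numerical integration} gives $\sup_x|\hat{\mT} z(x) - \mT z(x)|\to 0$. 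This implies $L^2$-convergence because $\bX=[0,1]^d$ is bounded; a symmetric argument handles $(\hat{\mT}^*-\mT^*)z$.

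To extend to all $y\in L^2(\bX)$, I combine a uniform bound $C := \sup_M\|\hat{\mK}_M\|_{op}<\infty$ with a density argument: for each $\epsilon>0$ pick $y_\epsilon\in\mathcal{D}$ with $\|y-y_\epsilon\|<\epsilon$, so that
\[
\|\hat{\mK}_M y - \mK y\| \le (C + \|\mK\|_{op})\epsilon + \|\hat{\mK}_M y_\epsilon - \mK y_\epsilon\|,
\]
and send $M\to\infty$ then $\epsilon\to 0$. The main obstacle is establishing the uniform $L^2\to L^2$ bound on $\hat{\mT}$: because $\hat{\mT}$ is defined through point evaluations $z(x_i)$ that are ill-defined on generic $L^2$ equivalence classes, I expect to prove it by Cauchy--Schwarz on the convex weights $\tilde{f}(x_i|x)$ combined with the estimate $\int \tilde f(x_i|x)\,dx = O(1/M)$ (from $\sup f<\infty$ and the normalization), which yields $\|\hat{\mT} z\|^2 \lesssim \tfrac{1}{M}\sum_i z(x_i)^2$, and then controlling the resulting Riemann sum on $\mathcal{D}$ before extending by continuity.
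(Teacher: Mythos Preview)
Your route differs substantially from the paper's. The paper does not decompose into $\hat{\mT}-\mT$ pieces or pass through a dense subspace; instead it bounds the operator-norm difference by the Hilbert--Schmidt norm, $\|\hat{\mK}_M-\mK\|_{op}\le\|\hat{\mK}_M-\mK\|_{HS}$, and then computes the latter directly at the level of kernels:
\[
\|\hat{\mK}_M-\mK\|_{HS}^2=\int\Bigl[\sum_i\tilde K^2(x_i\mid x)+\int K^2(x'\mid x)\,dx'-2\sum_i\tilde K(x_i\mid x)K(x_i\mid x)\Bigr]dx,
\]
using that $K^2(\cdot\mid x)\in\mathcal{HK}_C$ (or $\mW_C^\alpha$) uniformly in $x$ to show each bracket is $O\bigl((\log M)^{d-1}/M\bigr)$ or $O(M^{-\alpha/d})$. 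Since $\|\hat{\mK}_M y-\mK y\|\le\|\hat{\mK}_M-\mK\|_{HS}\|y\|$, this gives the conclusion for every $y\in L^2$ in one stroke---no density argument, no uniform bound on $\hat{\mT}$.

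Your proposal has a genuine gap at exactly the point you flag: the uniform bound $\sup_M\|\hat{\mT}\|_{L^2\to L^2}<\infty$ cannot be extracted from the estimate $\|\hat{\mT} z\|^2\lesssim\tfrac{1}{M}\sum_i z(x_i)^2$. That Riemann sum is not dominated by $C\|z\|^2$ with $C$ independent of $z$ and $M$, even on your smooth subspace $\mathcal{D}$: take a smooth bump $z$ of height $1$ supported in a ball of radius $\varepsilon$ around one grid point $x_1$, so $\|z\|^2\asymp\varepsilon^d$ but $\tfrac{1}{M}\sum_i z(x_i)^2\asymp 1/M$; for $\varepsilon^d\ll 1/M$ the ratio blows up. Because $\hat{\mT}$ is built from point evaluations, it simply is not uniformly bounded on $L^2$, and extending by continuity from $\mathcal{D}$ does not produce the operator you need. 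The paper's kernel-level argument avoids this entirely: convergence there is driven by the smoothness of $K(\cdot\mid x)$ (which is guaranteed by \Cref{assumption: smoothness}), not by smoothness of $y$, so the factor of $\|y\|$ appears only through the trivial bound $\|(\hat{\mK}_M-\mK)y\|\le\|\hat{\mK}_M-\mK\|_{HS}\|y\|$.
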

\begin{proof}
    Since the operator $(\hat{\mK}_{M} - \mK)$ is a bounded linear operator, the operator norm is upper bounded by the Hilbert--Schmidt norm, i.e.,
    \begin{equation*}
        \|\hat{\mK}_{M} - \mK\|_{op} \leq \|\hat{\mK}_{M} - \mK\|_{HS}
    \end{equation*}
    Let $\tilde{K}(x'|x)$ and $K(x'|x)$ be the kernel of $\hat{\mK}_{M}$ and $\mK$. Note that $\tilde{K}(x'|x) = 0$ for $x' \in \bX \backslash \bM $. Then, we have:
    \begin{align*}
        \|\hat{\mK}_{M} - \mK\|^{2}_{HS}
        & = \int (\tilde{K}(x'|x) - K(x'|x))^{2}dxdx' \\
        & = \int \left( \sum_{i} \tilde{K}^{2}(x_{i}|x) + \int K^{2}(x'|x)dx' - 2\sum_{i} \tilde{K}(x_{i}|x) K(x_{i}|x) \right) dx
    \end{align*}
    By \Cref{lemma: K QMC2}, $K^{2}(\cdot|x) \in \mathcal{HK}_{C}$ for any $x$. Therefore, by the same arguments as in the proof of \Cref{Lemma: MA convergence quasi MC 2}, we have:
    \begin{align*}
        & \sum_{i} \tilde{K}^{2}(x_{i}|x) = \int K^{2}(x'|x)dx' + O(\frac{(\log M)^{d-1}}{M})\\
        & \sum_{i} \tilde{K}(x_{i}|x) K(x_{i}|x) = \int K^{2}(x'|x)dx' + O(\frac{(\log M)^{d-1}}{M})
    \end{align*}
    where the constant in $O(\cdot)$ is uniformly over $x$. Then, we have:
    \begin{equation*}
        \|\hat{\mK}_{M} - \mK\|^{2}_{HS} = O(\frac{(\log M)^{d-1}}{M})
    \end{equation*}
    Therefore: $\|\hat{\mK}_{M} y - \mK y\| \leq \|\hat{\mK}_{M} - \mK\|_{op} \|y\| \leq \|\hat{\mK}_{M} - \mK\|_{HS} \|y\| \to 0 \text{ as } M \to + \infty$.

    % \begin{equation*}
    %     \|\tilde{\mK} y - \mK y\| \leq \|\tilde{\mK} - \mK\|_{op} \|y\| \leq \|\tilde{\mK} - \mK\|_{HS} \|y\| \to 0 \text{ as } M \to + \infty
    % \end{equation*}
    For the regular grids, the proof is the similar with \Cref{lemma: K QMC2} replaced by \Cref{lemma: K MC2}.
\end{proof}

\section{Additional Details} \label{sec: online_appendix}

\subsection{Supporting Results for Continuous State Spaces} \label{sec: supporting results continuous}

\noindent This section provides the definitions, assumptions, and intermediate results used to derive \Cref{corollary: MA convergence2} in the main text. The use of numerical integration introduces simulation error, which can be reduced by increasing the number of grid points, but this comes at the cost of solving a larger linear system. As discussed in \Cref{subsection: Computational Cost}, the computational cost of our method depends on the number of iterations and number of matrix-vector multiplication operations. Since the cost of the matrix-vector multiplication increases, we study the effect of the number of grid points on the number of iterations. We first quantify the simulation error. Define the $\alpha$-H{\"o}lder ball with radius $C$ as:
\begin{equation*}
    \mW_{C}^{\alpha}([0,1]^{d}) := \left\{f : \left[0,1\right]^{d} \mapsto \bR \Bigg| \max _{|\bm{k}| < \alpha} \sup_{x}\left|D^{\bm{k}} f(x)\right| + \max_{|\bm{k}| = \lfloor \alpha \rfloor} \sup_{x \neq y} \frac{|D^{\bm{k}} f(x) - D^{\bm{k}} f(y)|}{\|x-y\|_{\infty}^{\alpha - \lfloor \alpha \rfloor}} \leq C \right\}
\end{equation*}
where $\bm{k} := (k_{1},\cdots,k_{d}) \in \bN^{d}$, $|\bm{k}| := \sum_{i}k_{i}$, $D^{\bm{k}} f(x) := \frac{\partial^{|\bm{k}|} f(x)}{\partial x_{1}^{k_{1}} \cdots \partial x_{d}^{k_{d}}}$ and $\|x\|_{\infty} := \max_{i}|x_{i}|$. Denote by $\asymp$ the weak equivalence of sequences, i.e., $a_{n} \asymp b_{n}$ iff $c_{1} \leq \frac{a_{n}}{b_{n}} \leq c_{2}$ for some positive constants $c_{1}$, $c_{2}$. We review two results on the numerical integration:
\begin{lemma}[\cite{novak2006deterministic}] \label{lemma: Union Bounds on numerical integration}
    The following holds:
    \begin{equation*}
        \inf_{\{x_{i}\}_{i \leq M}}\sup_{h \in \mW_{C}^{\alpha}([0,1]^{d})} \left|\frac{1}{M}\sum_{i=1}^{M}h(x_{i}) - \int h(x) dx\right| \asymp M^{-\frac{\alpha}{d}}
    \end{equation*}
    where the upper bound can be achieved by using a tensor product of the regular grid: $\{\frac{1}{2M}, \frac{3}{2M}, \cdots,\frac{2M-1}{2M}\}$.
\end{lemma}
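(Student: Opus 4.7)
The statement is a matching upper and lower bound for the minimax deterministic quadrature error on an $\alpha$-H\"older ball in $[0,1]^{d}$; my plan is to prove each direction by the classical construction from \cite{novak2006deterministic}.

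For the upper bound, I would use the tensor-product midpoint rule. Set $m := \lceil M^{1/d} \rceil$ and take the grid $G_{m} := \{(2i-1)/(2m) : i = 1, \ldots, m\}^{d}$, which partitions $[0,1]^{d}$ into $m^{d}$ sub-cubes of side $m^{-1}$ with one node at each cube center. On each sub-cube, Taylor-expand $h$ around its center to order $\lfloor \alpha \rfloor$; by symmetry the midpoint rule integrates the polynomial part exactly, and the H\"older condition bounds the Taylor remainder uniformly by a constant times $m^{-\alpha}$. Summing the per-cube contributions and normalizing by $M \asymp m^{d}$ yields a worst-case error of $O(m^{-\alpha}) = O(M^{-\alpha/d})$, uniformly over $\mW_{C}^{\alpha}([0,1]^{d})$.

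For the lower bound, I would use a bump fooling-function argument. Given any nodes $\{x_{i}\}_{i \leq M}$, partition $[0,1]^{d}$ into $2M$ disjoint sub-cubes of side $\ell := (2M)^{-1/d}$; by pigeonhole, at least $M$ of them contain no node, call them $B_{1}, \ldots, B_{M}$ with centers $c_{j}$. Fix once and for all a $C^{\infty}$ bump $\psi$ supported on $[-1/2, 1/2]^{d}$ with $\int \psi > 0$, and define the fooling function $\tilde{h}(x) := \eta \, \ell^{\alpha} \sum_{j=1}^{M} \psi\bigl(\ell^{-1}(x - c_{j})\bigr)$ for a small absolute constant $\eta$ independent of $M$. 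The rescaling $\ell^{\alpha}$ is precisely what keeps $\tilde{h}$ inside $\mW_{C}^{\alpha}$: because the $B_{j}$ are pairwise disjoint, each lower-order derivative $D^{\bm{k}} \tilde{h}$ with $|\bm{k}| < \alpha$ is uniformly bounded by $\eta \ell^{\alpha - |\bm{k}|} \|D^{\bm{k}} \psi\|_{\infty}$. The quadrature sum vanishes by construction, while $\int \tilde{h} = \eta M \ell^{\alpha+d} \int \psi \asymp M^{-\alpha/d}$, giving the matching lower bound.

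The main obstacle is verifying H\"older-ball membership of $\tilde{h}$ when $\alpha$ is non-integer. One must bound the top-order difference quotient $|D^{\bm{k}} \tilde{h}(x) - D^{\bm{k}} \tilde{h}(y)|/\|x-y\|_{\infty}^{\alpha - \lfloor \alpha \rfloor}$ with $|\bm{k}| = \lfloor\alpha\rfloor$ when $x$ and $y$ lie in different bump supports. The standard trick is that each $D^{\bm{k}} \psi$ vanishes on the boundary of its support, so one can split the quotient through an intermediate point on a bump boundary where both derivatives are zero, reducing the cross-bump estimate to two intra-bump quotients. Each intra-bump quotient then scales as $\ell^{\alpha}\cdot \ell^{-\lfloor\alpha\rfloor} \cdot \ell^{-(\alpha - \lfloor\alpha\rfloor)} = O(1)$, so choosing $\eta$ small enough forces $\tilde{h} \in \mW_{C}^{\alpha}$ uniformly in $M$, closing the argument.
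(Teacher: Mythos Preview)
The paper does not prove this lemma; it is stated as a citation of \cite{novak2006deterministic} and used as a black box. Your sketch is the classical argument from that source, and the lower-bound half is correct as written: the bump construction, the $\ell^{\alpha}$ scaling, and the cross-bump H\"older verification via boundary splitting are all standard and sound.

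The upper-bound half has a genuine gap when $\alpha>2$. Your sentence ``by symmetry the midpoint rule integrates the polynomial part exactly'' is only true for the odd-degree Taylor terms; even-degree terms such as $(x-c)^{\otimes 2}$ do \emph{not} have zero integral over a centered cube, so the midpoint rule does not annihilate the full Taylor polynomial of order $\lfloor\alpha\rfloor$. Consequently the per-cube error is $O(m^{-\min(\alpha,2)})$, and the tensor midpoint grid only delivers $M^{-\min(\alpha,2)/d}$, not $M^{-\alpha/d}$, once $\alpha>2$. Novak's upper bound at the full rate $M^{-\alpha/d}$ for arbitrary $\alpha$ is proved for quadrature rules with general weights $w_{i}$ (tensor products of one-dimensional rules exact on polynomials of degree $\lfloor\alpha\rfloor$), not for the equal-weight rule $\tfrac{1}{M}\sum$ that appears in the lemma as stated. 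So either the lemma is implicitly restricted to $\alpha\le 2$, or the $\tfrac{1}{M}$ should be read as generic weights; your argument is fine for $\alpha\le 2$ and for the weighted version, but you should flag which reading you are proving and, for $\alpha>2$, replace the midpoint rule by a tensor rule of the appropriate polynomial degree.
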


\Cref{lemma: Union Bounds on numerical integration} shows that the union bound suffers from the curse of dimensionality. The Koksma--Hlawka inequality allows us to bound the simulation error of Quasi-Monte Carlo methods by the discrepancy of the grids:
\begin{lemma}[Koksma--Hlawka Inequality \cite{hlawka1961funktionen}] \label{lemma: Koksma-Hlawka inequality} \
    Let $V_{HK}(h)$ be the total variation of $h$ in the sense of Hardy--Krause, then:
    \begin{equation*}
        \left|\frac{1}{M} \sum_{i=1}^{M}h(x_{i}) - \int h(x) dx \right| \leq V_{HK}(h) D^{*}(x_{1},\cdots,x_{M})
    \end{equation*}
    where $D^{*}(x_{1},\cdots,x_{M})$ is the discrepancy of $\{x_{1},\cdots,x_{M}\}$.
\end{lemma}

\Cref{lemma: Koksma-Hlawka inequality} shows that the upper bound on the simulation error is the product of two terms. $V_{HK}(h)$ measures the difficulty to integrate the function $h$. The smoother the function $h$ is, the smaller the value of $V_{HK}(h)$. $D^{*}(x_{1},\cdots,x_{M})$ measures the quality of the points. For example, the discrepancy of Hammersley points is $D^{*}(x_{1},\cdots,x_{M}) = O(\frac{(\log M)^{d-1}}{M})$ (see \cite{hammersley_monte_1964}). We introduce some notations before stating the next theorem. Let $1:d := \{1,2,\cdots,d\}$. For a set $m \subset 1:d$, define $-m := 1:d \setminus m$. For $x \in [0,1]^{d}$, let $x_{m}:1_{-m}$ be the point $z \in [0,1]^{d}$ with $z_{j} = x_{j}$ if $j \in m$ and $z_{j} = 1$ otherwise. Let the mixed partial derivative of $h(x_{m},1_{-m})$ taken once with respect to each $x_{j}$ for $j \in m$ be denoted as $D^{m}h(x_{m}:1_{-m})$. We define the following function class:
\begin{equation*}
    \mathcal{HK}_{C}:= \left\{ h| D^{m} h(x_{m},1_{-m}) \text{ is continuous } \forall \ m \subset 1:d, V_{HK}(h) \leq C \right\}
\end{equation*}

\begin{assumption} \label{assumption: smoothness}
    If the low-discrepancy grid is used, for a positive constant $C$ assume:
    \begin{assumptionitems}
        \item $u \in \mathcal{HK}_{C}$ and $f(x'|\cdot) \in \mathcal{HK}_{C}$ $\forall$ $x'$. \label{assumption: smooth QMC1}
        \item $f(\cdot|x) \in \mathcal{HK}_{C}$ $\forall$ $x$. \label{assumption: smooth QMC2}
    \end{assumptionitems}
    If the regular grid is used, for a positive constant $C$ assume:
    \begin{assumptionitems}[resume]
        \item $u \in \mW_{C}^{\alpha}([0,1]^{d})$ and  $f(x'|\cdot) \in \mW_{C}^{\alpha}([0,1]^{d})$ $\forall$ $x'$. \label{assumption: smoothness of MC1}
        \item $f(\cdot|x) \in \mW_{C}^{\alpha}([0,1]^{d})$ $\forall$ $x$. \label{assumption: smoothness of MC2}
    \end{assumptionitems}
\end{assumption}

\Cref{assumption: smoothness} imposes smoothness conditions on the utility function and the transition density. Assumptions \ref{assumption: smooth QMC1} and \ref{assumption: smoothness of MC1} are used to prove that $V^{*}$ also belongs to $\mathcal{HK}_{C}$, or $\mW_{C}^{\alpha}([0,1]^{d})$ where the constant $C$ may change. Assumptions \ref{assumption: smooth QMC2} and \ref{assumption: smoothness of MC2} are then used to control the simulation error from the numerical integration. The following theorem establishes both the simulation error and the approximation error:
\begin{theorem} \label{theorem: MA convergence1}
    Suppose Assumptions \ref{assumption: stationary distribution} and \ref{assumption: smoothness} hold. Let $\{1-\lambda_{j,M}\}_{j \leq M}$ be the ordered eigenvalue of $(\mI_{M} - \hat{\mT}_{M})(\mI_{M} - \hat{\mT}_{M}^{T})$, $\Delta_{M} := \max_{j} \{ 1-\lambda_{j,M} \}$, and $\delta_{M} := \min_{j} \{ 1-\lambda_{j,M} \} > 0$.\footnote{We have $\delta_{M} > 0$ as $(\mI_{M} - \hat{\mT}_{M})(\mI_{M} - \hat{\mT}_{M}^{T})$ is positive definite.} Then, $\{ \hat{V}^{ma}_{k} \}_{k \geq 1}$ generated by \Cref{algorithm: MA method continuous} satisfies:
    \begin{itemize}
        \item If the low-discrepancy grid is used, then:
        \begin{equation*}
            \|\hat{V}^{ma}_{k} - V^{*}\| = O(\underbrace{\frac{(\log M)^{d-1}}{M}}_{\text{Simulation Error}} + \underbrace{(c_{1,k,M})^{k}}_{\text{Approximation Error}})
        \end{equation*}
        \item If the regular grid is used, then:
        \begin{equation*}
            \|\hat{V}^{ma}_{k} - V^{*}\| = O(\underbrace{M^{-\frac{\alpha}{d}}}_{\text{Simulation Error}} + \underbrace{(c_{2,k,M})^{k}}_{\text{Approximation Error}})
        \end{equation*}
    \end{itemize}
    where $c_{1,k,M} \leq \frac{C_{1,M}}{k}$ and $c_{2,k,M} \leq \frac{C_{2,M}}{k}$ for some positive constants $C_{1,M}$ and $C_{2,M}$.
\end{theorem}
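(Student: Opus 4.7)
The plan is to split the total error by triangle inequality into a \emph{simulation error} and an \emph{approximation error}. Let $y^{*}_{M}$ denote the exact solution of the discretized system \eqref{Model-adaptive equation}, and define $V^{*}_{M}:=(\mI_{M}-\hat{\mT}_{M}^{T})y^{*}_{M}$ on $\bM$ and extended to $\bX$ by the interpolation formula \eqref{computation outside support}. Then
\[
\|\hat{V}^{ma}_{k}-V^{*}\| \;\leq\; \underbrace{\|\hat{V}^{ma}_{k}-V^{*}_{M}\|}_{\text{approximation error}} \;+\; \underbrace{\|V^{*}_{M}-V^{*}\|}_{\text{simulation error}} .
\]
Each term will be controlled separately using the two lemmas already in the paper together with the finite-dimensional analogue of \Cref{theorem: Convergence of Model-adaptive approach}.

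First, for the simulation error, I would bootstrap the smoothness of $V^{*}$ from the fixed-point equation $V^{*}(x)=u(x)+\beta\int f(x'|x)V^{*}(x')dx'$. Under \Cref{assumption: bounded u} and the $\beta$-contraction of $\mT$ on $L^{\infty}$, $V^{*}$ is uniformly bounded; differentiating under the integral using \Cref{assumption: smooth QMC1} (resp.\ \ref{assumption: smoothness of MC1}) and iterating shows that $V^{*}\in \mathcal{HK}_{C'}$ (resp.\ $V^{*}\in \mW^{\alpha}_{C'}([0,1]^{d})$) for a possibly enlarged constant $C'$. Combined with \Cref{assumption: smooth QMC2} (resp.\ \ref{assumption: smoothness of MC2}), the integrand $x'\mapsto f(x'|x)V^{*}(x')$ lies in the same smoothness class uniformly in $x$, so \Cref{lemma: Koksma-Hlawka inequality} (resp.\ \Cref{lemma: Union Bounds on numerical integration}) yields a pointwise numerical-integration error $\eta_{M}$ of order $(\log M)^{d-1}/M$ (resp.\ $M^{-\alpha/d}$), uniformly in the evaluation point. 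Writing $\tilde V^{*}_{M}(x_{i}):=u(x_{i})+\beta\sum_{j}\tilde f(x_{j}|x_{i})V^{*}(x_{j})$, the vector $V^{*}_{M}-V^{*}|_{\bM}$ satisfies $(\mI_{M}-\hat{\mT}_{M})(V^{*}_{M}-V^{*}|_{\bM})=\tilde V^{*}_{M}-V^{*}|_{\bM}$; stability of $(\mI_{M}-\hat{\mT}_{M})^{-1}$ (which inherits the $\beta$-contraction from normalizing $\tilde f$ and \Cref{assumption: bounded mu}) converts the pointwise bound $\eta_{M}$ into an $\ell^{\infty}$-bound on $V^{*}_{M}-V^{*}|_{\bM}$ of the same order. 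Plugging this into \eqref{computation outside support} and one more application of the numerical-integration lemma controls $V^{*}_{M}-V^{*}$ at every $x\in\bX$, and $\bX=[0,1]^{d}$ being bounded transfers $L^{\infty}$ to $L^{2}$ without loss of rate.

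Second, for the approximation error I would repeat the argument of \Cref{theorem: Convergence of Model-adaptive approach} in finite dimensions. Because $\hat{V}^{ma}_{k}$ and $V^{*}_{M}$ are defined through the same interpolation formula \eqref{computation outside support}, the function-space gap $\|\hat{V}^{ma}_{k}-V^{*}_{M}\|$ is dominated by $\max_{x_{i}\in\bM}|\hat{V}^{ma}_{k}(x_{i})-V^{*}_{M}(x_{i})|$, which in turn is dominated by the Euclidean error $\|(\mI_{M}-\hat{\mT}_{M}^{T})(y_{k,M}-y^{*}_{M})\|_{2}$. CG applied to the SPD matrix $A_{M}=(\mI_{M}-\hat{\mT}_{M})(\mI_{M}-\hat{\mT}_{M}^{T})$ gives the min-max bound $\|y_{k,M}-y^{*}_{M}\|_{A_{M}}\le\min_{p\in P_{k},p(0)=1}\max_{j}|p(1-\lambda_{j,M})|\cdot\|y^{*}_{M}\|_{A_{M}}$; by choosing shifted Chebyshev polynomials supported on $[\delta_{M},\Delta_{M}]$ and absorbing any clustering generated by the smoothness of $\tilde f$ (the discrete analogue of the compactness-of-$\mT$ argument behind \Cref{theorem: Convergence of Model-adaptive approach}), this is $O((C_{i,M}/k)^{k})$ with $C_{i,M}$ depending on $\Delta_{M}$, $\delta_{M}$ and the distribution of the $\lambda_{j,M}$. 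Summing the two contributions proves both displayed bounds.

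The main obstacle I expect is the smoothness-bootstrap step for $V^{*}$: the Koksma--Hlawka path requires the stronger conclusion $V^{*}\in \mathcal{HK}_{C'}$, and this demands that \emph{all} mixed partial derivatives $D^{m}V^{*}(x_{m},1_{-m})$ exist and are continuous, not just the first derivatives. Because $V^{*}$ is defined implicitly by a fixed-point equation, getting the needed mixed-derivative regularity cleanly from \Cref{assumption: smooth QMC1}--\ref{assumption: smooth QMC2} will need a careful induction on $|m|$, differentiating the fixed-point identity $m$ times under the integral and using boundedness of all lower-order derivatives already established. Everything else---stability, Chebyshev bounds, and transfer from $\bM$ to $\bX$---is then essentially routine.
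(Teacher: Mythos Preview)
Your proposal is essentially the paper's proof: the same triangle-inequality split into a simulation piece and an approximation piece, the same stability argument via the $\beta$-contraction of $\hat{\mT}_{M}$ in sup-norm (this is exactly \Cref{Lemma: MA convergence quasi MC 1}), the same use of the numerical-integration lemmas to control the simulation error first on $\bM$ and then off $\bM$ through the interpolation formula, and the same invocation of the finite-dimensional analogue of \Cref{theorem: Convergence of Model-adaptive approach} for the approximation error. The paper just packages the simulation step into supporting lemmas (\Cref{Lemma: MA convergence quasi MC 1,Lemma: MA convergence quasi MC 2,Lemma: MA convergence MC 2}), but the content is what you wrote.

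The obstacle you flag is in fact easier than you anticipate, and this is the one place where the paper's argument is cleaner than your sketch. You do \emph{not} need an induction on $|m|$. Since $V^{*}=\Gamma V^{*}$ with $\Gamma V(x)=u(x)+\beta\int f(x'|x)V(x')\,dx'$, the mixed partial $D^{m}$ in $x$ of $\Gamma V$ hits only $u(\cdot)$ and $f(x'|\cdot)$, never $V$ itself. Hence \cref{assumption: smooth QMC1} together with the bound $\sup_{x}|V^{*}(x)|\le C_{u}/(1-\beta)$ already gives $\Gamma V\in\mathcal{HK}_{C'}$ for \emph{every} bounded $V$ in one step, so $V^{*}=\Gamma V^{*}\in\mathcal{HK}_{C'}$ immediately. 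The paper records this as a standalone lemma and then uses a product-closure lemma (\Cref{lemma: HK product}) to put $f(\cdot|x)V^{*}(\cdot)\in\mathcal{HK}$ before applying Koksma--Hlawka; the H\"older case is analogous via \Cref{lemma: Holder product}. So your proposed induction is an overcomplication, but not an error.
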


\Cref{theorem: MA convergence1} consists of two parts. The simulation error arises from replacing $\mT$ by $\hat{\mT}_{M}$. The approximation error upper bound also depends on $\hat{\mT}_{M}$. However, as $c_{1,k,M}$ and $c_{2,k,M}$ converge to zero at the rate of $\frac{1}{k}$, we can expect that the number of iterations will still be small. To compare $c_{1,k,M}$ and $c_{2,k,M}$ with $c_{k}$, we impose the following assumption used in \cite{atkinson1975convergence}:
\begin{assumption} \label{assumption A1-A3}
    Let $\hat{\mK}_{M}$ be the numerical integral operator: $\hat{\mK}_{M}y(x) := \hat{\mT} y(x) + \hat{\mT}^{*} y(x) - \hat{\mT} \hat{\mT}^{*} y(x)$ where $\hat{\mT} y(x) := \beta \sum_{i=1}^{M} \tilde{f}(x_{i}|x) y(x_{i})$, $\hat{\mT}^{*} y(x) := \beta \sum_{i=1}^{M} \tilde{f}^{*}(x_{i}|x) y(x_{i})$, and $\hat{\mT} \hat{\mT}^{*} y(x) := \beta \sum_{i=1}^{M} \tilde{f}(x_{i}|x) \hat{\mT}^{*}y(x_{i})$. Assume:
    \begin{assumptionitems}
        \item $\hat{\mK}_{M}$ maps $L^{2}(\bX)$ to itself for $M \geq 1$.
        \item The sequence of operators $\{ \hat{\mK}_{M} \}_{M \geq 1}$ is collectively compact, i.e., $\{ \hat{\mK}_{M} y | M \geq 1, \|y\| \leq 1 \text{ and } y \in L^{2}(\bX) \}$ has compact closure in $L^{2}(\bX)$.
    \end{assumptionitems}
\end{assumption}

Recall we aim to solve the equation within a given tolerance, and \Cref{theorem: Convergence of Model-adaptive approach} guarantees it can be achieved within a finite number of iterations. Therefore, we impose the following assumption for a finite upper bound on the number of iterations:
\begin{assumption} \label{assumption: MA convergence quasi MC 2}
    Let $p$ be any given positive integer and $\{\phi_{i,k}\}_{i \leq J_{k}}$ be a basis for $\text{null}(\lambda_{k} \mI - \mK) $ where $\lambda_{k}$ is of multiplicity $J_{k}$ and $\mK:= \mT + \mT^{*} - \mT \mT^{*}$. For all $k \leq p$, $i \leq J_{k}$, assume for some positive constant $C$:
    \begin{assumptionitems}
        \item If the low-discrepancy grid is used, $\phi_{i,k} \in \mathcal{HK}_{C}$. \label{assumption: Operator convergence 1}
        \item If the regular grid is used, $\phi_{i,k} \in \mW_{C}^{\alpha}([0,1]^{d})$. \label{assumption: Operator convergence 2}
    \end{assumptionitems}
\end{assumption}

\begin{theorem} \label{theorem: MA convergence}
    Let $p$ be any given positive integer. Under Assumptions \ref{assumption: stationary distribution}, \ref{assumption A1-A3} and \ref{assumption: MA convergence quasi MC 2}, for sufficiently large $M$ and any $k \leq p$, we have:
    \begin{itemize}
        \item If the low-discrepancy grid is used, then:
        \begin{equation*}
            (c_{1,k,M})^{k} = O(\frac{(\log M)^{d-1}}{M} + (c_{k})^{k})
        \end{equation*}
        \item If the regular grid is used, then:
        \begin{equation*}
            (c_{2,k,M})^{k} = O(M^{-\frac{\alpha}{d}} + (c_{k})^{k})
        \end{equation*}
    \end{itemize}
\end{theorem}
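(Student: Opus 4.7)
The plan is to combine the polynomial characterization of the CG approximation error with spectral perturbation theory for collectively compact operator sequences, using the smoothness of the eigenfunctions to convert the operator perturbation into a quadrature rate.

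First, I would invoke the classical CG minimax polynomial bound for the positive semi-definite operators $A := (\mI - \mT)(\mI - \mT^{*}) = \mI - \mK$ and $A_{M} := (\mI_{M} - \hat{\mT}_{M})(\mI_{M} - \hat{\mT}_{M}^{T})$, which yields
$$
(c_{k})^{k} \asymp \min_{\substack{p \in \mathcal{P}_{k} \\ p(0)=1}} \sup_{\lambda \in \sigma(A)} |p(\lambda)|,
\qquad
(c_{k,M})^{k} \asymp \min_{\substack{p \in \mathcal{P}_{k} \\ p(0)=1}} \max_{j \leq M} |p(1-\lambda_{j,M})|,
$$
where $c_{k,M}$ stands for either $c_{1,k,M}$ or $c_{2,k,M}$. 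The comparison therefore reduces to showing that the discrete spectrum $\{1-\lambda_{j,M}\}_{j \leq M}$ is close enough to $\sigma(A)$ that the minimax value changes only by the stated simulation error.

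Second, I would apply \cite{atkinson1975convergence}'s spectral convergence theorem for collectively compact sequences. Under \Cref{assumption A1-A3}, $\hat{\mK}_{M}$ is collectively compact with pointwise limit $\mK$, so for each eigenvalue $\lambda_{k}$ of $\mK$ of multiplicity $J_{k}$ with eigenfunction basis $\{\phi_{i,k}\}_{i \leq J_{k}}$, there are precisely $J_{k}$ eigenvalues $\{\lambda_{i,k,M}\}_{i \leq J_{k}}$ of $\hat{\mK}_{M}$ satisfying
$$
\max_{i \leq J_{k}} |\lambda_{i,k,M} - \lambda_{k}| = O\Bigl(\max_{i \leq J_{k}} \|(\hat{\mK}_{M} - \mK)\phi_{i,k}\| + \|(\hat{\mK}_{M}^{*} - \mK^{*})\psi_{i,k}\|\Bigr),
$$
where $\psi_{i,k}$ are the corresponding dual eigenfunctions. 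Any eigenvalue of $\hat{\mK}_{M}$ not paired this way must accumulate at $0$ by collective compactness, so the corresponding $1-\lambda_{j,M}$ lies near $1$, a region where every optimal polynomial with $p(0)=1$ of degree $\leq p$ remains uniformly small.

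Third, I would bound the residuals $\|(\hat{\mK}_{M} - \mK)\phi_{i,k}\|$ and $\|(\hat{\mK}_{M}^{*} - \mK^{*})\psi_{i,k}\|$ by the quadrature lemmas already in the paper. Under \Cref{assumption: MA convergence quasi MC 2}, each $\phi_{i,k}$ lies in $\mathcal{HK}_{C}$ (low-discrepancy case) or $\mW_{C}^{\alpha}([0,1]^{d})$ (regular-grid case); combined with the smoothness of $f(\cdot|x)$ and $f(x'|\cdot)$ from \Cref{assumption: smoothness}, the integrands defining $(\hat{\mK}_{M} - \mK)\phi_{i,k}$ stay in the same smoothness class uniformly in the free variable, and similarly for the adjoint. \Cref{lemma: Koksma-Hlawka inequality} then yields the rate $O((\log M)^{d-1}/M)$ in the low-discrepancy case, and \Cref{lemma: Union Bounds on numerical integration} yields $O(M^{-\alpha/d})$ in the regular-grid case.

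Fourth, I would close the argument by Lipschitz-perturbing the optimal polynomial. Let $p_{k}^{*}$ be a near-optimizer of the continuous minimax problem realising $(c_{k})^{k}$. Because $\deg p_{k}^{*} \leq k \leq p$ and both $\sigma(A)$ and $\sigma(A_{M})$ lie in a common bounded subset of $[0,\infty)$ (as $\mT, \hat{\mT}_{M}$ have uniformly bounded operator norms), the Lipschitz constant $L_{p}$ of $p_{k}^{*}$ depends only on $p$, not on $M$. Matching each $\lambda_{j,M}$ to its nearest continuous eigenvalue (or $0$ for spurious ones) and invoking steps 2 and 3 gives
$$
\max_{j \leq M} |p_{k}^{*}(1-\lambda_{j,M})| \leq \sup_{\lambda \in \sigma(A)} |p_{k}^{*}(\lambda)| + O\!\left(\frac{(\log M)^{d-1}}{M}\right) \text{ or } O(M^{-\alpha/d}),
$$
depending on the grid, which delivers the announced bound on $(c_{k,M})^{k}$.

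The main obstacle is step 2: upgrading Atkinson's qualitative convergence to quantitative eigenvalue rates that are uniform over the first $p$ eigenvalues, handling multiplicity $J_{k}$ and possible near-clustering, and verifying that spurious eigenvalues of $A_{M}$ do not spoil the minimax estimate. The finite cap $k \leq p$ is essential: it keeps the polynomial degree, the Lipschitz constant, and the number of tracked eigenvalues bounded independently of $M$, so the constants in the final bound do not degrade with $M$.
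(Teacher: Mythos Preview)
Your core spectral step---applying \cite{atkinson1975convergence} under collective compactness to get $|\lambda_{k,M}-\lambda_k|=O(\max_i\|(\hat{\mK}_M-\mK)\phi_{i,k}\|)$ and then converting this to a quadrature rate via the smoothness hypotheses on the eigenfunctions---matches the paper exactly. (Since $\mK$ is self-adjoint its ascent is $1$, so Atkinson's bound needs only the primal eigenfunctions; the dual $\psi_{i,k}$ you introduce are redundant.)

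The gap is in your first step. In this paper $c_k$ is \emph{not} the CG minimax polynomial value; it is the explicit quantity $c_k=\frac{2}{k}\sum_{j=1}^k\frac{|\lambda_j|}{1-\lambda_j}$ taken from \cite{han2009theoretical} Theorem 5.6.2 (see the proof of \Cref{theorem: Convergence of Model-adaptive approach}), and $c_{k,M}$ is the same sum built from $\lambda_{j,M}$. This $(c_k)^k$ is an upper bound on the minimax value, not equivalent to it---for instance, if $\mK$ has finite rank the minimax drops to zero once the degree exceeds the rank, while $(c_k)^k$ stays positive. Hence your chain $(c_{k,M})^k\asymp\text{minimax}_M\leq\text{minimax}+O(\text{rate})\leq(c_k)^k+O(\text{rate})$ breaks at the first $\asymp$: controlling the discrete minimax does not control $(c_{k,M})^k$.

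The paper avoids polynomials altogether and compares the two explicit sums directly. From $|\lambda_{j,M}-\lambda_j|=O(\text{rate})$ for $j\leq p$ and uniform lower bounds $\delta,\delta_{M,p}>0$ on $1-\lambda_j$ and $1-\lambda_{j,M}$, one gets
\[
|c_k-c_{k,M}|\leq\frac{2}{k}\sum_{j=1}^k\frac{|\lambda_j-\lambda_{j,M}|}{(1-\lambda_j)(1-\lambda_{j,M})}=O(\text{rate}),
\]
and then the elementary factorization $(c_{k,M})^k-(c_k)^k=(c_{k,M}-c_k)\sum_{i=1}^k c_k^{k-i}c_{k,M}^{i-1}$ finishes it, with the sum bounded uniformly in $k\leq p$ because $c_k,c_{k,M}<0.95$ for $k$ beyond some $k_*$ and are uniformly bounded for $k<k_*$. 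There is no polynomial Lipschitz argument and no spurious-eigenvalue issue, since only the top $k\leq p$ eigenvalues ever enter $c_{k,M}$.
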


\subsection{Details of Algorithms in \Cref{sec: Additional Simulations}} \label{sec: Algorithm details}

\begin{Algorithm}[Policy Iteration with Model-Adaptive Approach]
    \ \label{algorithm: Application}
    \begin{enumerate}
        \item At iteration $i$, given $C^{i}(x,j)$, update $p^{i+1}(j|x)$ by policy iteration:
        \begin{itemize}
            \item At iteration $i'$, given $p_{i'}(j|x)$, solve for $V_{i'}(x)$ by \textbf{model-adaptive} method:
            \begin{equation*}
                V_{i'}(x) = U_{C^{i},p_{i'}}(x) + \beta \bE_{C^{i},p_{i'}}[V_{i'}(x') | x]
            \end{equation*}
            where:
            \begin{enumerate}[label=(\roman*)]
                \item $U_{C^{i},p_{i'}}(x) = \sum_{j}p_{i'}(j|x)\left[ U(C^{i}(x,j),I,j) + \omega_{j} - \log p_{i'}(j|x)\right]$.
                \item  $\bE_{C^{i},p_{i'}}[V_{i'}(x') | x] := \sum_{j} p_{i'}(j|x) \int f(x'|x,j,C^{i}(x,j))V_{i'}(x')dx'$.
            \end{enumerate}
            \item Then, the policy improvement is given by:
            \begin{equation*}
                p_{i'+1}(j|x) = \frac{\exp(v_{i'}(x,j))}{\sum_{j}\exp(v_{i'}(x,j))}
            \end{equation*}
            where $v_{i'}(x,j) = U(C^{i}(x,j),I,j) + \omega_{j} + \beta\bE[V_{i'}(x') | x, j]$.
            \item Iterate until $p_{i'}(j|x)$ converges, set $p^{i+1}(j|x) = p_{i'}(j|x)$ and $V^{i}(x) = V_{i'}(x)$.
        \end{itemize}
        \item Given $V^{i}(x)$, update $C^{i+1}(x,j)$ by:
        \begin{equation*}
            C^{i+1}(x,j) := \argmax_{c_{min} \leq c \leq c_{max}} \Bigl[ U(c,I,j;\theta) + \omega_{j} + \beta \bE[V^{i}(x') | x, c,j] \Bigr]
        \end{equation*}
        \item Iterate until $C^{i+1}(x,j)$ converges.\footnote{The stopping rule for policy update, policy valuation, and consumption function are set to be $\|p^{i+1} - p^{i}\|_{\infty} \leq 10^{-4}$, $\|r_{k}\|_{\infty} \leq 10^{-8}$, and $\|C^{i+1} - C^{i}\|_{\infty} = 0 $. We also normalize the utility function by $\frac{c}{I_{max}}$, $\frac{c^{2}}{I_{max}^{2}}$, and $\frac{I^{2}}{I_{max}^{2}}$.}
    \end{enumerate}
\end{Algorithm}

\noindent \textbf{Value Iteration (VFI).} \label{algorithm2} VFI replaces Step 1 in \Cref{algorithm: Application} with value iteration: given $C^{i}(x,j)$, iterate
\begin{equation*}
    V_{i'+1}(x) = \log \sum_{j'} \exp \Bigl[ U(C^{i}(x,j'),I,j') + \omega_{j'} + \beta \bE \left[V_{i'}(x') | x,C^{i}(x,j'),j'\right] \Bigr]
\end{equation*}
until $\sup_{x}|V_{i'+1}(x) - V_{i'}(x)| \leq 10^{-8}$. Steps 2--3 remain the same.

\noindent \textbf{One-step VFI.} \label{algorithm3} One-step VFI replaces Step 1 in \Cref{algorithm: Application} with a single Bellman update per outer iteration:
\begin{equation*}
    V^{i+1}(x) = \log \sum_{j'} \exp \Bigl[ U(C^{i}(x,j'),I,j') + \omega_{j'} + \beta \bE \left[V^{i}(x') | x,C^{i}(x,j'),j'\right] \Bigr]
\end{equation*}
Steps 2--3 remain the same, and the algorithm iterates until both $\sup_{x,j} |C^{i+1}(x,j) - C^{i}(x,j)| = 0$ and $\sup_{x}|V^{i+1}(x) - V^{i}(x)| \leq 10^{-8}$.

\subsection{Adaptive MCMC Algorithm} \label{sec: Trace Plots of MCMC Draws}

\noindent This section presents the details of adaptive MCMC with vanishing adaptation used to estimate the dynamic parameters. The vanishing adaptation ensures that the current parameter draw depends less and less on recently sampled parameter values as the MCMC chain progresses (\cite{andrieu2008tutorial}). \Cref{algorithm: Adaptive MCMC} describes the algorithm (see \cite{andrieu2008tutorial}). The priors are chosen to be a normal distribution with mean 0 and variance 100. As the inventory is unobserved, we simulate the inventory with $I_{0} = 0$. We treat inventory as if it was observed and drop the first 30\% periods to evaluate the (simulated) likelihood. We keep the acceptance rate around 0.3. The algorithm runs for 10,000 iterations, and the initial 8,000 are discarded as burn-in. \Cref{fig: Trace Plots of MCMC Draws} suggests that the MCMC draws seem to stabilize after around 2,000 iterations. Thus, the 8,000 burn-in cutoff is conservative.

\begin{algorithm}[H]
    \caption{Adaptive MCMC algorithm with vanishing adaptation}
    \footnotesize
    \label{algorithm: Adaptive MCMC}
    \SetAlgoLined
    \KwIn{$\theta_{0}$, $\mu_0 = \theta_{0}$, $\Sigma_0 = \mathbf{I}_4$, $\lambda_{0} = 2.38^{2}/4$, $\alpha^{*} = 0.3$, $\delta = 0.5$, and $T = 10000$.}
    \For{$t = 0$ \KwTo $T$}{
        Draw a candidate $\theta \sim \mN (\theta_{t}, \lambda_{t}\Sigma_{t})$\;
        Solve the model with $\theta$ and compute the likelihood $Pr(data|\theta)$\;
        Set $\theta_{t+1} = \theta$ with prob. $\alpha(\theta,\theta_{t}) := \min\left\{1, \frac{Pr(data|\theta)\pi(\theta)}{Pr(data|\theta_{t})\pi(\theta_{t})}\right\}$, else $\theta_{t+1} = \theta_{t}$\;
        Compute $\gamma_{t} = \frac{1}{(1+t)^{\delta}}$ and update $\lambda_{t+1} = \exp\{\gamma_t(\alpha(\theta, \theta_{t}) - \alpha^{*})\}\lambda_{t}$\;
        Update $\mu_{t+1} = \mu_t + \gamma_t(\theta_{t+1} - \mu_t)$\;
        Update $\Sigma_{t+1} = \Sigma_t + \gamma_t\{(\theta_{t+1} - \mu_{t})(\theta_{t+1} - \mu_{t})^{T} - \Sigma_{t}\}$\;
    }
    \KwOut{The sequence $\{\theta_{t}\}_{t=1}^T$}
\end{algorithm}

\begin{figure}[H]
    \centering
    % Row 1 - Household Size 1
    \begin{minipage}{0.23\linewidth}
        \includegraphics[width=\linewidth]{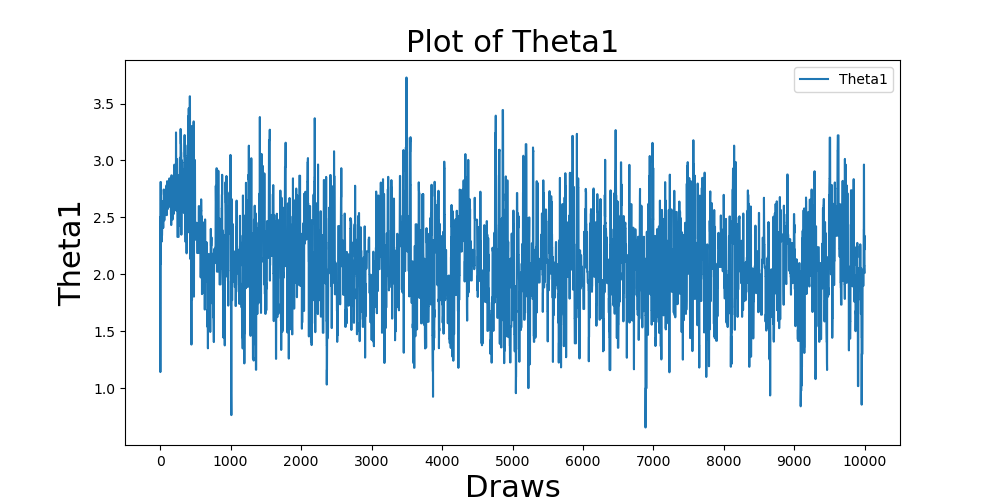}
    \end{minipage}
    \hfill
    \begin{minipage}{0.23\linewidth}
        \includegraphics[width=\linewidth]{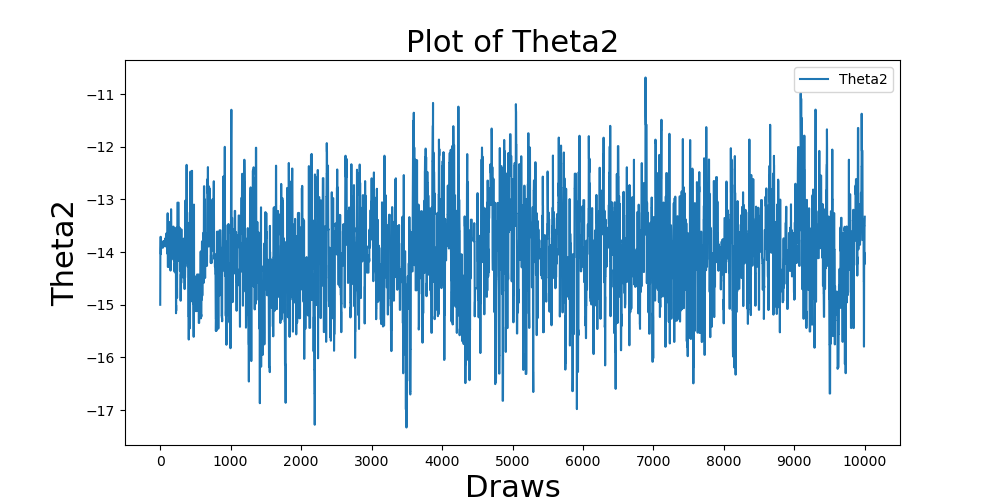}
    \end{minipage}
    \hfill
    \begin{minipage}{0.23\linewidth}
        \includegraphics[width=\linewidth]{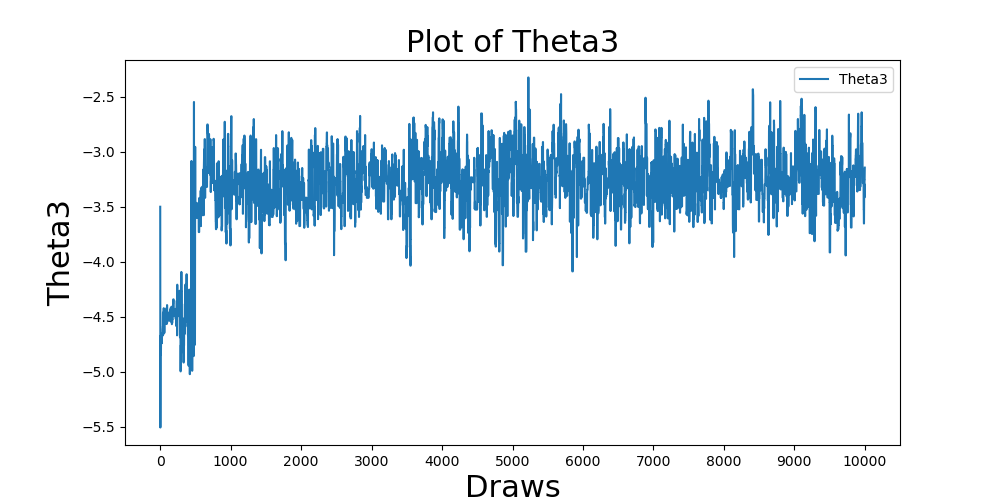}
    \end{minipage}
    \hfill
    \begin{minipage}{0.23\linewidth}
        \includegraphics[width=\linewidth]{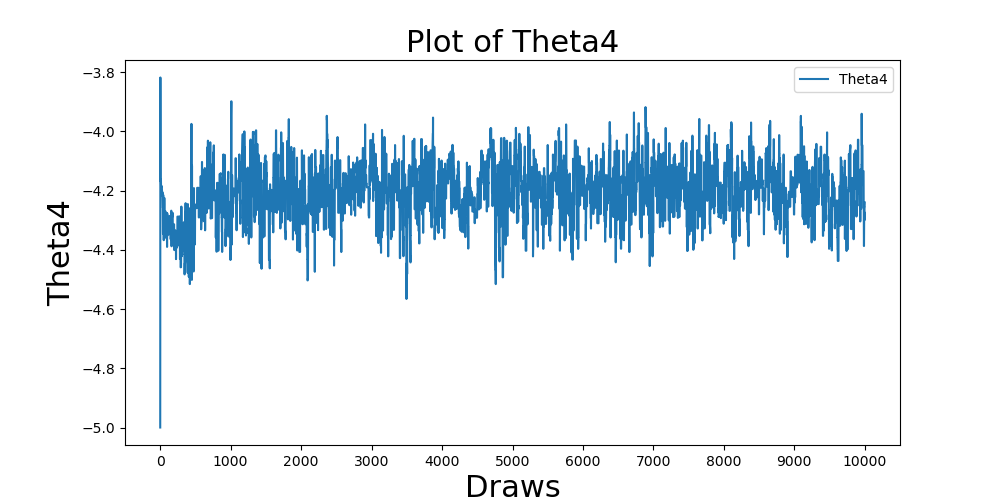}
    \end{minipage}
    \vspace{-0.8em}

    % Row 2 - Household Size 2
    \begin{minipage}{0.23\linewidth}
        \includegraphics[width=\linewidth]{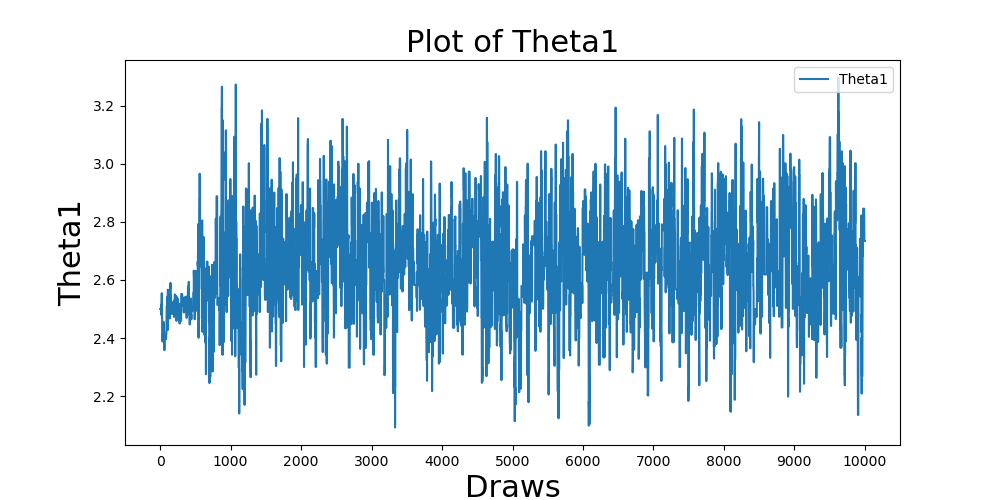}
    \end{minipage}
    \hfill
    \begin{minipage}{0.23\linewidth}
        \includegraphics[width=\linewidth]{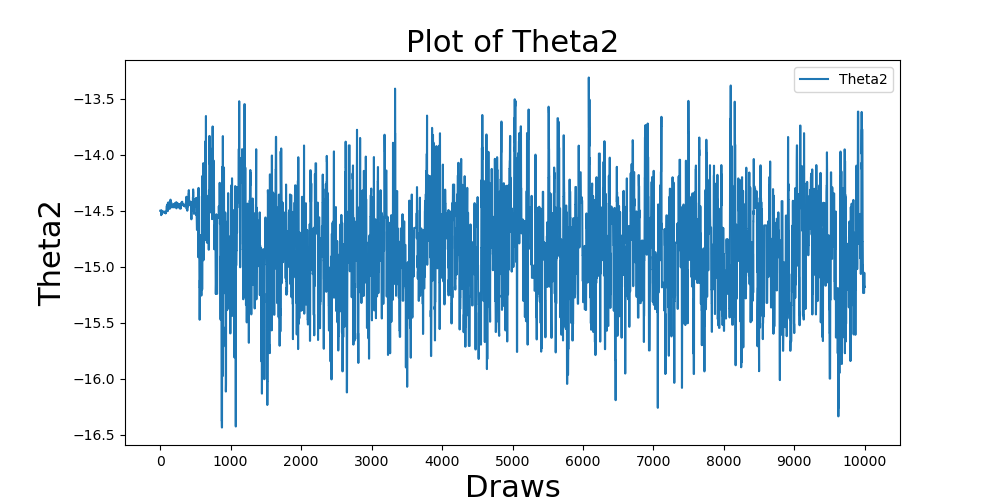}
    \end{minipage}
    \hfill
    \begin{minipage}{0.23\linewidth}
        \includegraphics[width=\linewidth]{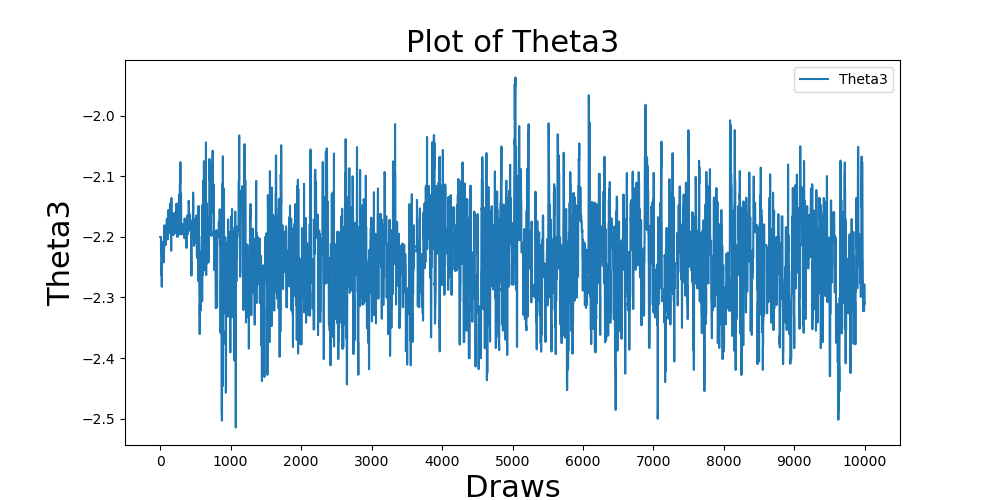}
    \end{minipage}
    \hfill
    \begin{minipage}{0.23\linewidth}
        \includegraphics[width=\linewidth]{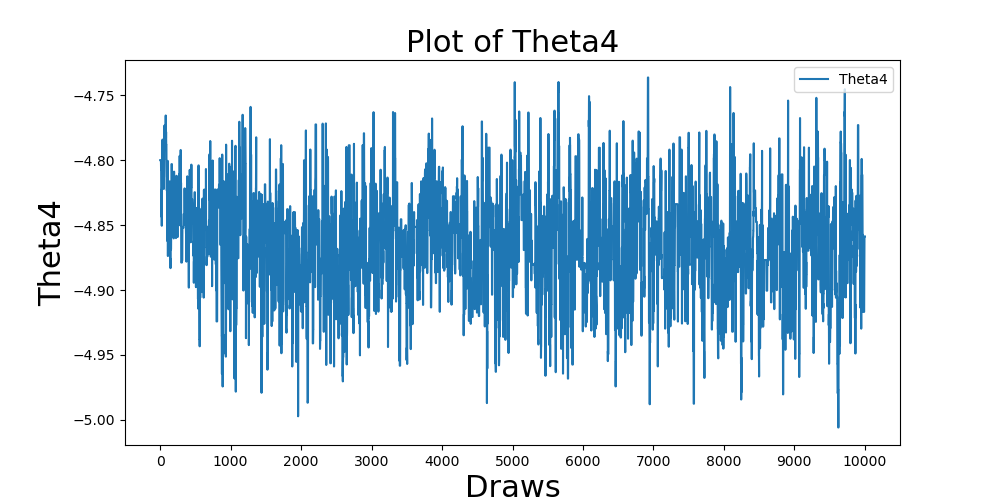}
    \end{minipage}
    \vspace{-0.8em}

    % Row 3 - Household Size 3
    \begin{minipage}{0.23\linewidth}
        \includegraphics[width=\linewidth]{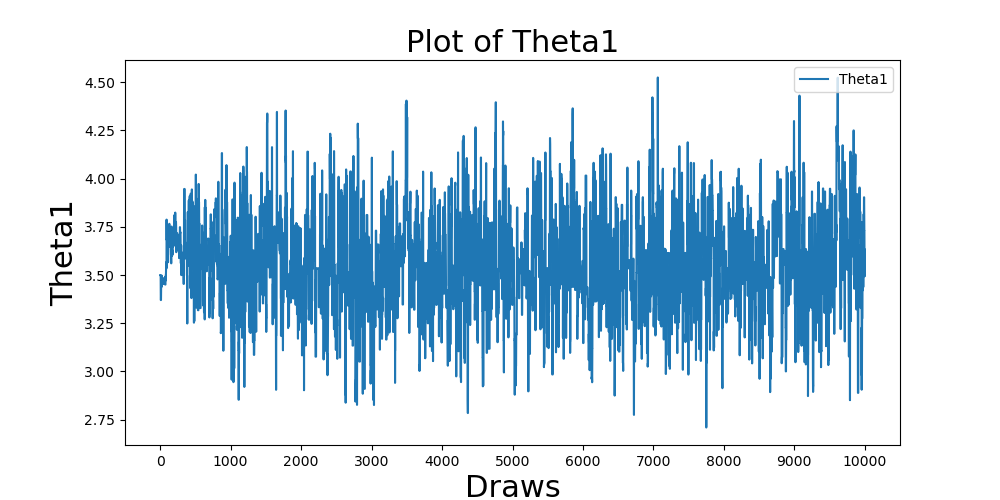}
    \end{minipage}
    \hfill
    \begin{minipage}{0.23\linewidth}
        \includegraphics[width=\linewidth]{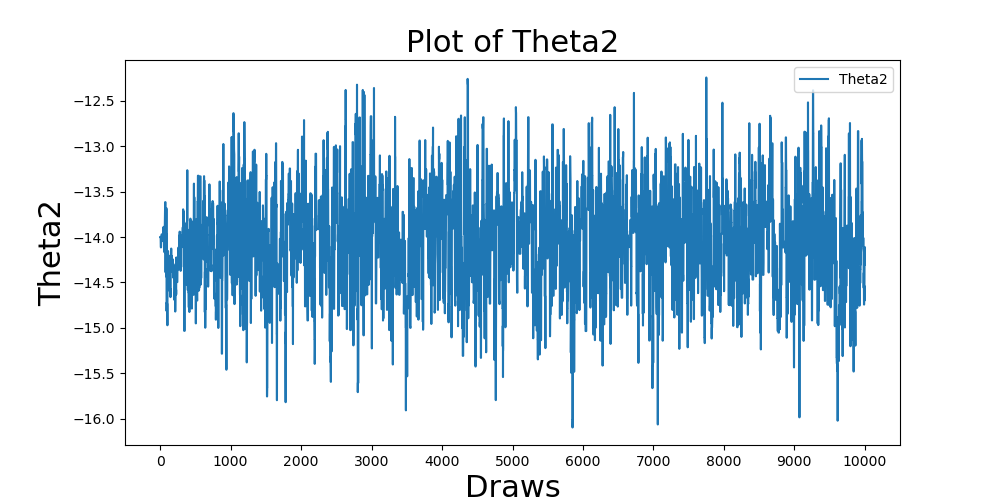}
    \end{minipage}
    \hfill
    \begin{minipage}{0.23\linewidth}
        \includegraphics[width=\linewidth]{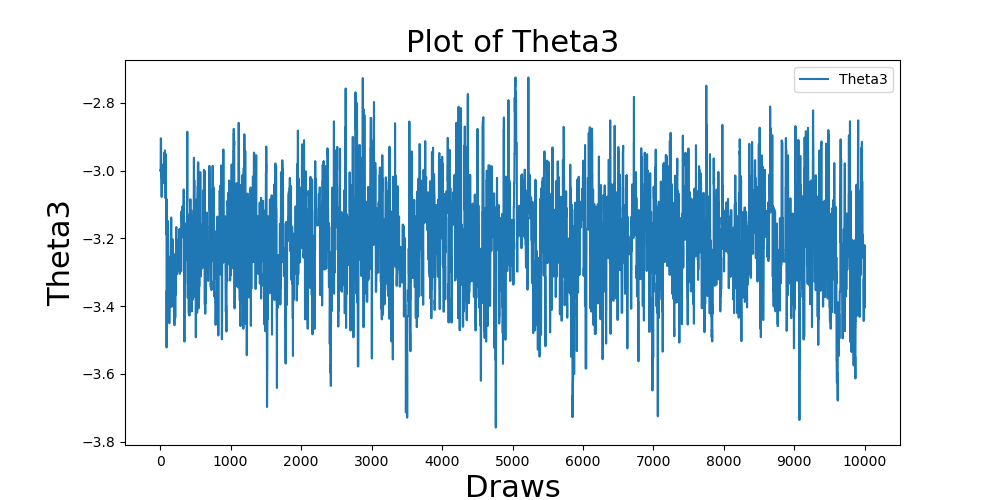}
    \end{minipage}
    \hfill
    \begin{minipage}{0.23\linewidth}
        \includegraphics[width=\linewidth]{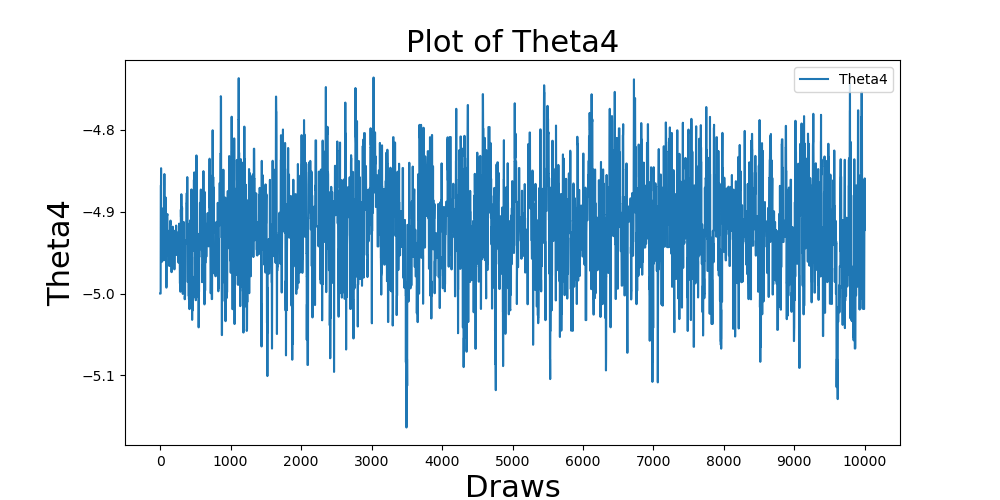}
    \end{minipage}
    \vspace{-0.8em}

    % Row 4 - Household Size 4
    \begin{minipage}{0.23\linewidth}
        \includegraphics[width=\linewidth]{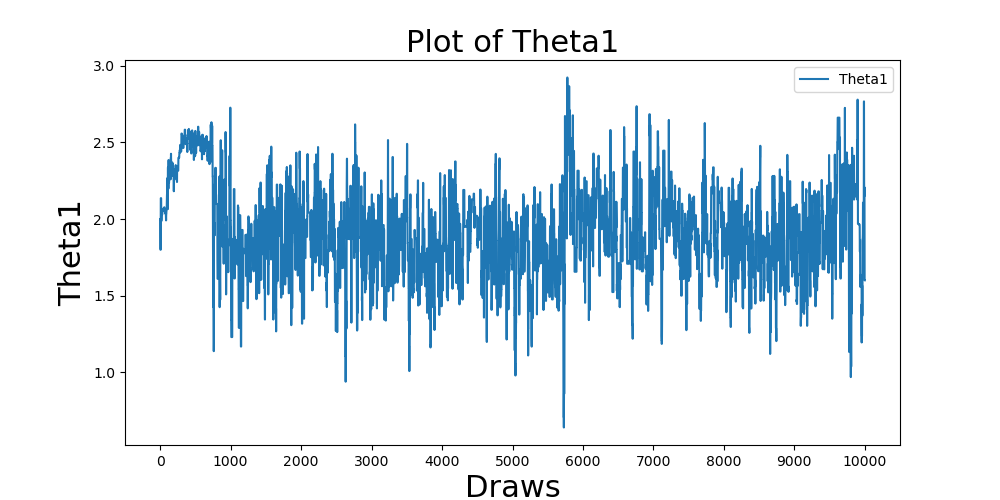}
    \end{minipage}
    \hfill
    \begin{minipage}{0.23\linewidth}
        \includegraphics[width=\linewidth]{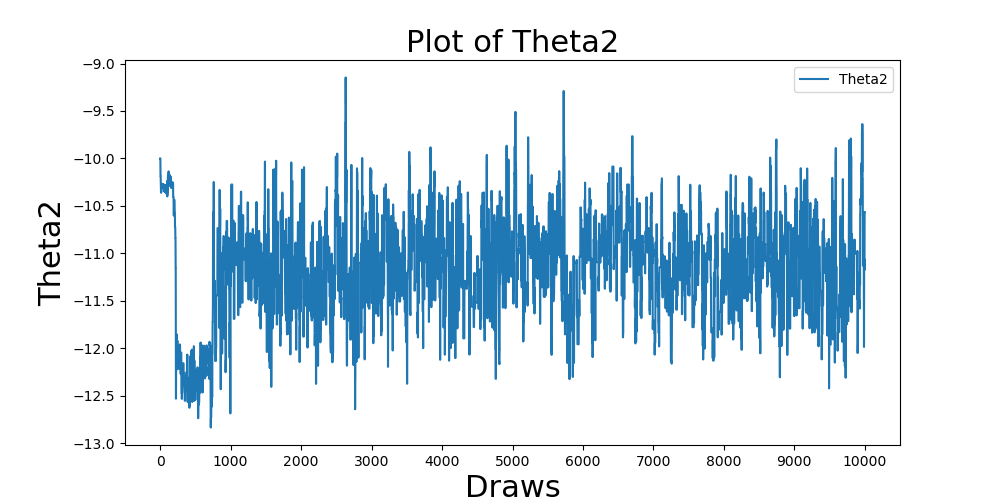}
    \end{minipage}
    \hfill
    \begin{minipage}{0.23\linewidth}
        \includegraphics[width=\linewidth]{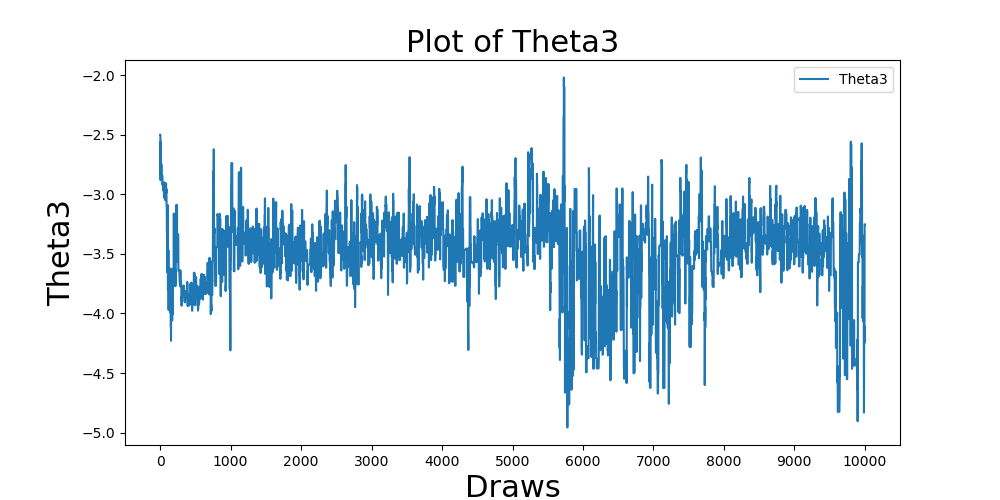}
    \end{minipage}
    \hfill
    \begin{minipage}{0.23\linewidth}
        \includegraphics[width=\linewidth]{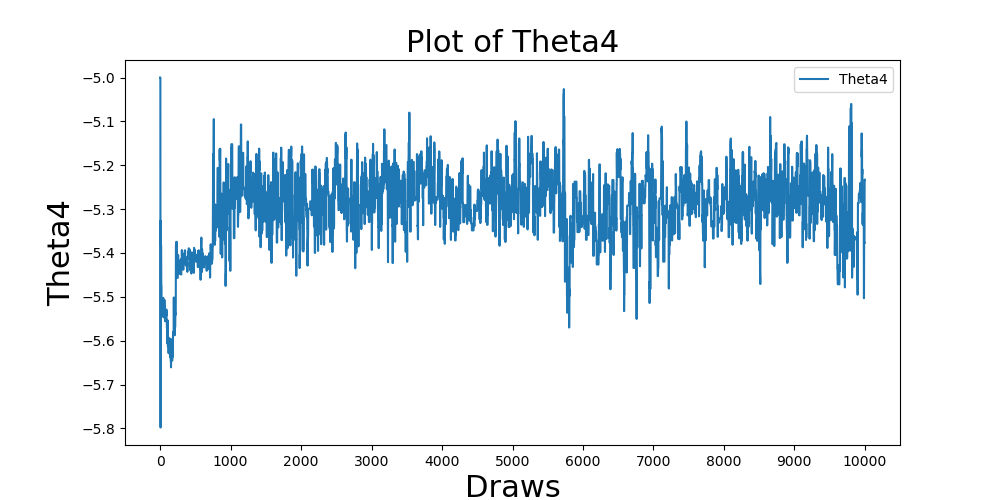}
    \end{minipage}
    \vspace{-0.8em}

    % Row 5 - Household Size 5
    \begin{minipage}{0.23\linewidth}
        \includegraphics[width=\linewidth]{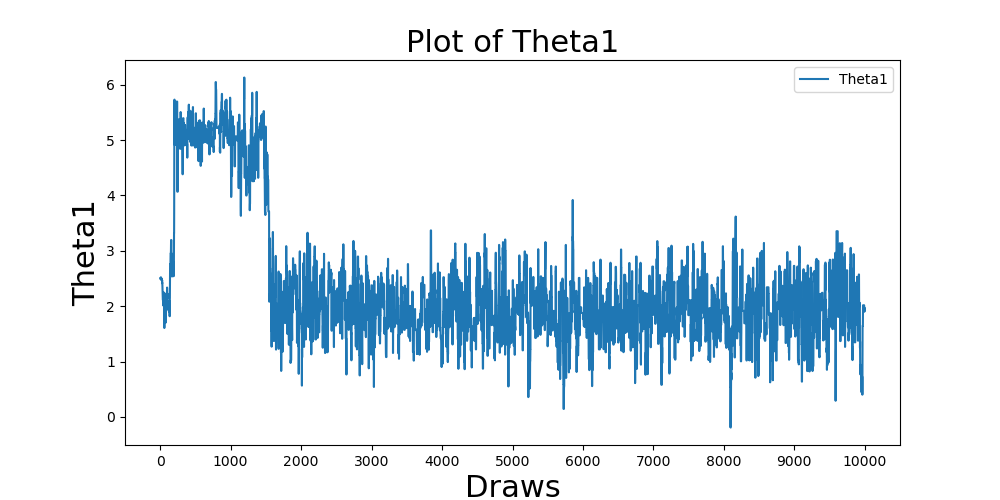}
    \end{minipage}
    \hfill
    \begin{minipage}{0.23\linewidth}
        \includegraphics[width=\linewidth]{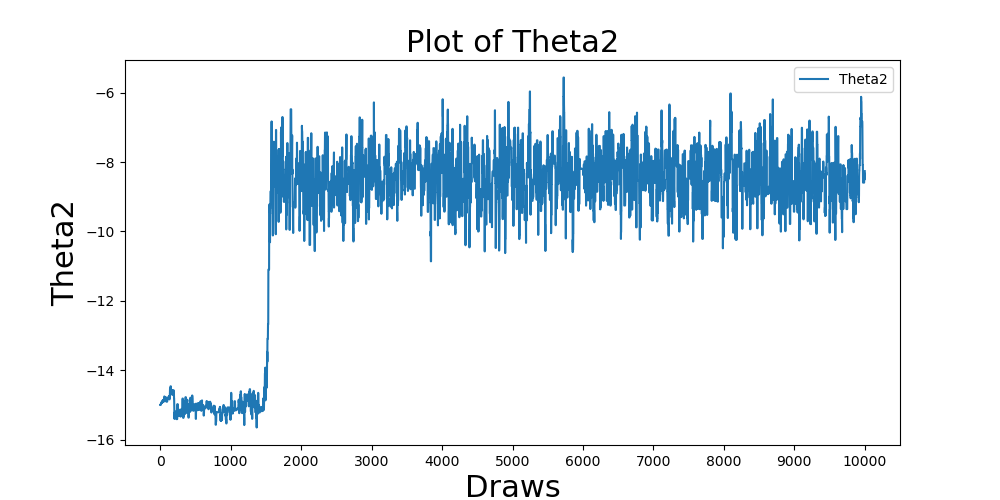}
    \end{minipage}
    \hfill
    \begin{minipage}{0.23\linewidth}
        \includegraphics[width=\linewidth]{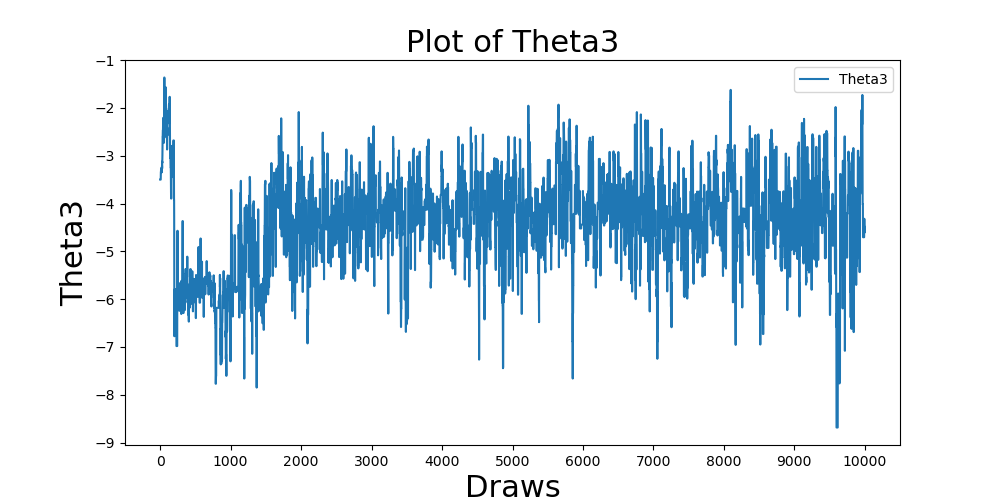}
    \end{minipage}
    \hfill
    \begin{minipage}{0.23\linewidth}
        \includegraphics[width=\linewidth]{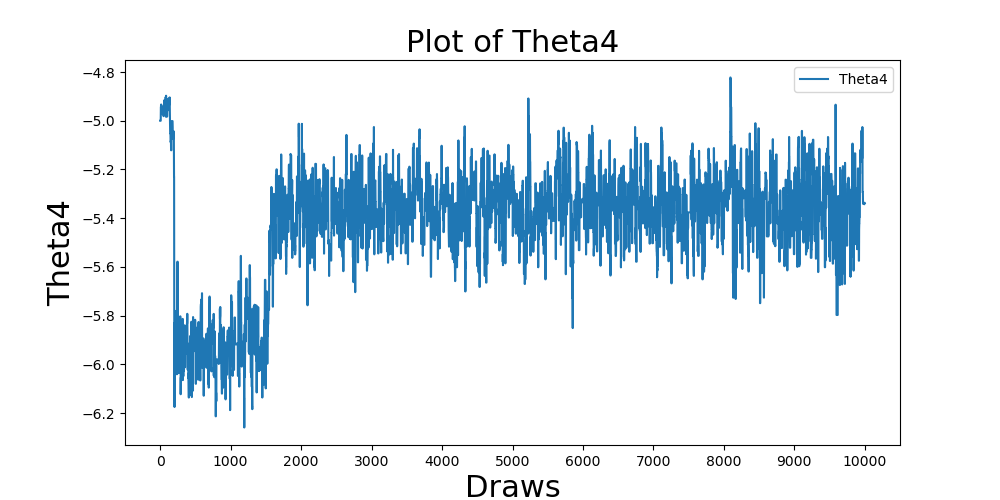}
    \end{minipage}

    \caption{Plot of MCMC draws for Parameters of Different Household Sizes (Rows 1-5 correspond to Household Sizes 1-5)}
    \label{fig: Trace Plots of MCMC Draws}
\end{figure}
\setlength{\bibsep}{0pt}
\setcitestyle{authoryear,round}
\bibliographystyle{apalike}
\bibliography{reference}

% \onehalfspacing
% \begin{appendices}
%     % \input{7_Appendix_proof} \clearpage
%     \input{8_Appendix_online.tex}
% \end{appendices}

\end{document}